\theoremstyle{thmstyleone}%
\newtheorem{theorem}{Theorem}
\newtheorem{proposition}[theorem]{Proposition}%
\newtheorem{lemma}[theorem]{Lemma}
\theoremstyle{thmstyletwo}%
\newtheorem{example}{Example}%
\newtheorem{remark}{Remark}%
\theoremstyle{thmstylethree}%
\newtheorem{assumption}{Assumption}%
\newcommand{\rhoapp}{\varrho_{{\rm app}}} 
\newcommand{\piapp}{\pi_{\text{app}}} 
\newcommand{\rev}[1]{\textcolor{black}{#1}} 
\newcommand{\zeigeErstePassage}{1}
\newcommand{\change}[1]{\bgroup\color{magenta}#1\egroup} 
\definecolor{darkgreen}{rgb}{0.0, 0.2, 0.13}
\begin{document}

\title[Delayed Acceptance Slice Sampling]{Delayed Acceptance Slice Sampling}


\author*[1]{\fnm{Kevin} \sur{Bitterlich}}\email{Kevin.Bitterlich@math.tu-freiberg.de}

\author[2]{\fnm{Daniel} \sur{Rudolf}}\email{daniel.rudolf@uni-passau.de}

\author[1]{\fnm{Bj\"{o}rn} \sur{Sprungk}}\email{Bjoern.Sprungk@math.tu-freiberg.de}

\affil*[1]{\orgdiv{Institute of Stochastics}, \orgname{TU Bergakademie Freiberg}, \orgaddress{\street{Pr\"{u}ferstra\ss e 9}, \city{Freiberg}, \postcode{09599}, \country{Germany}}}

\affil[2]{\orgdiv{Institute of Mathematical Data Science}, \orgname{Universit\"{a}t Passau}, \orgaddress{\street{Innstra\ss e 33}, \city{Passau}, \postcode{94032}, \country{Germany}}}


\abstract{Slice sampling is a well-established Markov chain Monte Carlo method for (approximate) sampling of a target distributions which are 
only known up to a normalizing constant. 
The method is based on choosing a new state on a slice, i.e., a superlevel set of the given unnormalized target density (with respect to a reference measure).
However, slice sampling algorithms usually require per step multiple evaluations of the target density, and thus can become computationally expensive.
This is particularly the case for Bayesian inference with costly likelihoods. 
In this paper, we exploit deterministic approximations of the target density, which are relatively cheap to evaluate, and propose delayed acceptance versions of hybrid slice samplers. 
We show ergodicity of the resulting slice sampling methods, discuss the superiority of delayed acceptance (ideal) slice sampling over delayed acceptance Metropolis-Hastings algorithms, and illustrate the benefits of our novel approach in terms improved computational efficiency in several numerical experiments.}

\keywords{Markov chain Monte Carlo, slice sampling, Bayesian inference, delayed acceptance, multilevel Monte Carlo}



\maketitle

\section{Introduction}\label{introduction}

A central task in Bayesian inference and computational statistics is to (approximate) sample with respect to (w.r.t.) a (posterior) probability measure $\pi$ on a measurable space $(G,\mathcal{B}(G))$, with $G\subseteq\mathbb{R}^d$, of the form
\begin{align}\label{PMLR:distribution}
	\pi(\mathrm{d}x) 
	& = \frac {1}{Z} \, \varrho(x) \, \pi_0(\mathrm{d}x), 
	& Z & :=\int_G^{} \varrho(x)  \, \pi_0(\mathrm{d}x),
\end{align}      
where $\varrho\colon G \to (0,\infty)$ denotes a $\mathcal{B}(G)$-measurable density w.r.t. a $\sigma$-finite prior or reference measure $\pi_0$.
Often $\pi$ is only known up to the normalizing constant $Z$ and a direct sampling of $\pi$ is infeasible.
Moreover, evaluations of the density or likelihood $\varrho$ can be costly in applications due to accessing large data sets or the involvement of numerical methods for differential equations as, e.g., in Bayesian inverse problems \cite{Stuart2010}.

A well-established approach for approximate sampling of partially known target distributions $\pi$ is Markov chain Monte Carlo (MCMC). 
Here, in order to generate Markov chains with $\pi$ as their unique limit distribution Metropolis--Hastings (MH) algorithms \cite{MetropolisEtAl1953, RobertsRosenthal2004} represent a common and rich class of often also easily implementable methods. 
As an alternative, the slice sampling approach, see e.g., \cite{Neal2003}, has drawn considerable attention in recent years. 
Slice sampling methods provide several advantages compared to MH algorithms, e.g., the absence of free parameters that need to be tuned and typically a better mixing behavior than MH algorithms \cite{RudolfUllrich2018}. Furthermore, slice sampling can be less affected by targets with multiple modes than classical MH algorithms. Unfortunately, slice sampling in its ideal form is usually not implementable in practice because it requires direct sampling of the normalized restriction of $\pi_0$ to superlevel sets of the density $\varrho$, which is usually not feasible in practice. Therefore, one relies on hybrid slice sampling, where the direct sampling step is replaced by approximate sampling according to suitable Markov chain transition kernels. The drawback of hybrid slice sampling can be high computational cost, since these algorithms often require several evaluations of the density $\varrho$.

To improve the computational efficiency of hybrid slice sampling we propose to exploit $\mathcal{B}(G)$-measurable deterministic approximations $\rhoapp\colon G \to (0,\infty)$ of $\varrho$, which are cheaper to evaluate than $\varrho$ itself.
In particular, we adapt the delayed acceptance approach for MH algorithms \cite{ChristenFox2005, BanterleEtAl2019} to slice sampling and derive delayed acceptance versions of popular hybrid slice sampler.
We show ergodicity of the resulting novel algorithms and illustrate their potential for a significant cost reduction in numerical experiments.
Moreover, we show that the superiority of ideal slice sampling over MH algorithms with $\pi_0$-reversible proposal kernels also holds for their delayed acceptance versions---which extends the results in \cite{RudolfUllrich2018}.

Cost reduction for hybrid slice sampling has already been considered by \citet{DuBoisEtAl2014} for cases where $\varrho$ represents a likelihood function involving large data sets.
There, similarly to minibatch stochastic gradient descent \cite{BottouEtAl2018}, a random subsampling strategy of the data was proposed leading to a biased MCMC method.
Moreover, \citet{MurrayGraham2016} combine hybrid slice sampling with the pseudo-marginal MCMC approach to handle also intractable densities or likelihoods $\varrho$ using unbiased stochastic approximations of $\varrho$.
These approaches use random approximations of $\varrho$ and follow a different paradigm than our work.
Besides subsampling of large data sets cheap approximations to $\varrho$ can also be obtained in case of Bayesian inverse problems by using a coarser numerical discretizations of the forward map.
This is commonly exploited in multilevel MCMC methods \cite{DodwellEtAl2019}.
Our work can be seen as a first step towards multilevel slice sampling following, e.g., the hierarchical delayed acceptance setting for MH algorithms by \citet{LykkegaardEtAl2022}.
We note that we already briefly outlined our delayed acceptance approach for the case of elliptical slice sampling in \cite{BitterlichEtAl2025}.

This paper is organized as follows. 
In Section~\ref{sec:slice}, we provide a conceptional overview of ideal and hybrid slice sampling. 
Section~\ref{sec:DASS} presents our approach to exploiting approximations of $\varrho$ for slice sampling, 
describing delayed acceptance ideal slice sampling, and proving its theoretical superiority compared to its MH counterpart.
In Section~\ref{sec:DAHSS} we discuss delayed acceptance for hybrid slice samplers, state delayed acceptance versions of popular hybrid slice samplers and show ergodicity of the resulting algorithms.
Section~\ref{sec:num} illustrates the cost benefits of our novel approach in numerical examples and we conclude in Section~\ref{sec:concl}.

\paragraph{Notation}
In the following, let $(\Omega, \mathcal{F}, \mathbb{P})$ be a probability space of all subsequently used random variables. By $\lambda_d$, with $d\in\mathbb N$, we denote the $d$-dimensional Lebesgue measure and by $\lVert \cdot \rVert$ the Euclidean norm on $\mathbb R^d$. 
By $\sigma_d$ we denote the surface measure on the unit sphere $\mathbb{S}_{d-1}\subset \mathbb R^d$ and write $\kappa_d=\sigma_d(\mathbb{S}_{d-1})$ for the total surface area. By $\mathrm{U}(a,b)$ we denote the uniform distribution on $(a,b)$ for $a, b\in\mathbb{R}$ with $a<b$ and by $\mathrm{U}(A)$ the uniform distribution on a set $A$, whenever it is well-defined. By $\mathrm N(m,C)$ we denote the normal distribution on $\mathbb{R}^d$ with mean $m \in \mathbb{R}^d$ and non-degenerate covariance matrix $C \in \mathbb{R}^{d\times d}$.
For convenience, we also 
abbreviate expressions 
such as \eqref{PMLR:distribution} simply by writing $\pi(\mathrm{d}x)\propto\varrho(x) \pi_0(\mathrm{d}x)$.

\section{Slice Sampling}\label{sec:slice}
We aim to sample approximately from $\pi$ given in \eqref{PMLR:distribution} and want to compute expectations $\pi(f) := \int_G f(x)\, \pi(\mathrm{d} x)$ of $\pi$-integrable quantities of interest $f\colon G\to\mathbb R$. 
To this end, we employ MCMC, i.e., we generate a Markov chain $(X_n)_{n\in\mathbb{N}}$ with $\pi$ as its limit distribution.  
A necessary condition for the latter is that the \emph{transition} or \emph{Markov kernel} $P\colon G \times \mathcal{B}(G) \to [0,1]$ of the Markov chain,
\[
	P(x,A) 
    :=
    \mathbb{P}(X_{n+1}\in A \mid X_n=x),
    \qquad
    x \in G, \ A \in \mathcal{B}(G),
\]
is $\pi$-invariant, i.e., $\pi(A) = \pi P(A) = \int_G P(x, A)\, \pi(\mathrm d x)$ holds for any $A \in \mathcal{B}(G)$. To this end, it suffices to show \emph{$\pi$-reversibility} of $P$, i.e.,
\begin{align*}
    \int_{A} P(x,B)  \, \pi(\mathrm{d}x)=\int_{B} P(y,A)  \, \pi(\mathrm{d}y), \qquad \forall A,B\in\mathcal{B}(G).
\end{align*}

\noindent However, the $\pi$-invariance of a transition kernel $P$ alone does not guarantee the convergence of the underlying Markov chain for an arbitrary initial value $x_0 \in G$ to the target $\pi$, i.e.
 \begin{align}\label{eq:ergodic}
        \lim_{n \to \infty} 
        \| P^n(x_0,\cdot) - \pi  \|_{\mathrm{TV}} = 0,
  \end{align}
where $P^n(x_0, \cdot)$ denotes the $n$-step transition kernel of $P$ and $\| \cdot \|_{\mathrm{TV}}$ the total variation distance. Therefore, we require $P$ to be $\pi$-invariant and either \emph{ergodic}, namely when (\ref{eq:ergodic}) holds for all $x_0 \in G$, or at least \emph{$\pi$-almost everywhere ergodic}, namely when (\ref{eq:ergodic}) is satisfied for $\pi$-almost every $x_0 \in G$. 
Given a 
$\pi$-invariant Markov chain $(X_n)_{n\in\mathbb{N}}$ we can use the path average estimator 
\begin{align}\label{path}
A_n(f) := \frac 1n \sum_{k=1}^n f(X_k)
\end{align}
to approximate $\pi(f)$ where almost sure convergence of $A_n(f)$ to $\pi(f)$ as $n\to\infty$ is ensured under additional mild assumptions (cf. \cite[Theorem 3]{Tierney1994}).

Slice sampling is a classical technique for the construction of $\pi$-invariant Markov chains $(X_n)_{n\in\mathbb{N}}$, see e.g. \cite{besag1993spatial,Neal2003}, which are, moreover, (geometrically) ergodic under relatively mild assumptions on the target distribution $\pi$, see e.g. \cite{RobertsRosenthal1999,MiraTierney2002}.
In the following, we provide basic definition and a conceptional overview of ideal and hybrid slice sampling.

\subsection{Ideal Slice Sampling}
For $\varrho$ from \eqref{PMLR:distribution}, for well-definedness purposes, we permanently assume that the supremum norm $\Vert \varrho \Vert_\infty$ coincides with the $\pi_0$-essential supremum of $\varrho$. Then, for $t \in (0, \|\varrho\|_\infty)$
we define the $\mathcal{B}(G)$-measurable superlevel set of $\varrho$ by
\begin{align}\label{PMLR:levelset}
	G_{t,\varrho}:=\left\{x\in G \colon \varrho(x) > t\right\}	
\end{align}
and denote with
\begin{align*}
	\pi_{0,t}(A):=\frac{\pi_0(A \cap G_{t,\varrho})}{\pi_0(G_{t,\varrho})}, \quad A\in\mathcal{B}(G),
\end{align*}
the normalized restriction of the reference measure $\pi_{0}$ to the superlevel set $G_{t,\varrho}$. The transition mechanism of ideal slice sampling given a current state $X_n = x\in G$ is presented in Algorithm \ref{alg:SS}.

\begin{algorithm}
	\caption{Ideal Slice Sampling}\label{alg:SS}
	\renewcommand{\algorithmicrequire}{\textbf{Input:}}
	\renewcommand{\algorithmicensure}{\textbf{Output:}}
	\begin{algorithmic}[1]
		\Require current state $X_n = x$
		\Ensure next state $X_{n+1} = y$
		\State Draw sample $t$ of $\mathrm{U}(0,\varrho(x))$
		\State Draw sample $y$ of $\pi_{0,t}$
	\end{algorithmic}
\end{algorithm}

The transition kernel $S\colon G \times \mathcal{B}(G) \to [0,1]$ of Markov chains $(X_n)_{n\in\mathbb{N}}$ generated by ideal slice sampling reads as follows
\begin{align*}
	S(x,A) 
    =
    \frac{1}{\varrho(x)}\int_{0}^{\varrho(x)} \pi_{0,t}(A) \, \mathrm{d}t, \quad x\in G, \phantom{.} A\in\mathcal{B}(G).
\end{align*}
It is straightforward to show that $S$ is \emph{$\pi$-reversible}.
If the reference measure $\pi_0$ is the $d$-dimensional Lebesgue measure, then in the second step of Algorithm \ref{alg:SS} we draw samples w.r.t.~the uniform distribution on the superlevel set $G_{t,\varrho}$. 
In this case, we also refer to Algorithm \ref{alg:SS} as \emph{simple slice sampling}. 
A convergence analysis for simple slice sampling can be found, e.g., in \cite{RobertsRosenthal1999, MiraTierney2002,  NatarovskiiEtAl2021b} and theoretical properties of different ideal slice sampler are, e.g., developed in \cite{rudolf2024dimension,schar2023wasserstein,LatuszynskiRudolf2024,PowerEtAl2024}.

\subsection{Hybrid Slice Sampling}\label{subsec:hss}
In practice, a direct simulation of the distribution $\pi_{0,t}$ is often not feasible and ideal slice sampling, hence, often not implementable.
A remedy are hybrid slice sampling (HSS) methods, where sampling w.r.t. $\pi_{0,t}$ is substituted by applying a $\pi_{0,t}$-invariant transition kernel $H_t$.
In particular, let $(H_t)_{t \in (0, \|\varrho\|_\infty)}$ denote in the following a family of $\pi_{0,t}$-reversible transition kernels $H_t$. 
Then, hybrid slice sampling is given by the transition kernel 
\begin{align*}
	H(x,A)=\frac{1}{\varrho(x)}\int_{0}^{\varrho(x)} H_t(x,A) \, \mathrm{d}t, \quad x\in G,\, A\in\mathcal{B}(G).   	
\end{align*}
Algorithmically, this corresponds to replacing the second step in Algorithm \ref{alg:SS} with

\vspace*{0.2cm}

\begin{algorithmic}[1]
\setcounter{ALG@line}{1}
       \State Draw sample $y$ of $H_t(x,\cdot)$.
\end{algorithmic}

\vspace*{0.2cm}

By the reversibility of the kernels $H_t$ the transition kernel $H$ is itself $\pi$-reversible \cite[Lemma~1]{LatuszynskiRudolf2024}.
However, it should be noted that hybrid slice sampling cannot perform better than ideal slice sampling in terms of the speed of convergence to $\pi$ or the asymptotic variance of $A_n(f)$, see, e.g., \cite{LatuszynskiRudolf2024, RudolfUllrich2018} for more details.

Well-known examples of implementable hybrid slice sampling methods are 
elliptical slice sampling (ESS), developed in \cite{MurrayEtAl2010}, hit-and-run uniform slice sampling combined with stepping-out and shrinkage (HRUSS), considered in \cite{LatuszynskiRudolf2024}, and Gibbsian polar slice sampling (GPSS), suggested in \cite{SchaerEtAl2023}.

\section{Delayed Acceptance Slice Sampling}\label{sec:DASS}
Hybrid slice sampling methods may require several evaluations of the density or likelihood $\varrho$ for generating a single new state, as can be seen, e.g., in the shrinkage procedure of ESS. This can be a computational bottleneck, particularly if the evaluation of $\varrho$ is costly. To reduce the computational burden of slice sampling, we propose to exploit an approximation $\rhoapp\colon G \to (0,\infty)$ to $\varrho$ which is cheaper to evaluate. In general, the construction of a suitable $\rhoapp$ is problem-dependent. Examples of such approximations are provided in the numerical experiments in Section \ref{sec:num}.
In order to obtain again $\pi$-invariant Markov chains 
we adapt the concept of delayed acceptance for MH algorithms \cite{ChristenFox2005} to slice sampling.
Given an approximating density $\rhoapp$ we can decompose the original density $\varrho$ as follows
\begin{align}
	\varrho = \widehat{\varrho} \cdot \rhoapp, \quad \widehat{\varrho}:=\varrho/\rhoapp \label{PMLR:factorization}.
\end{align}  
Consequently, the target can be represented as $ \pi(\text{d} x) \propto \widehat{\varrho}(x) \, \rhoapp(x) \, \pi_0(\text{d}x)$. 
In fact, the algorithm we propose in Section \ref{sec:DASS} corresponds to product slice sampling for a two-factor product density as considered in \cite{RobertsRosenthal1999, MiraTierney2002}.
However, the crucial novelty lies in the consideration and reduction of the computational costs for a specific algorithmic implementation as well as the extension to feasible hybrid slice samplers.
For comparison, we recall the approach of delayed acceptance for MH algorithms before explaining how to exploit this idea for slice sampling.

\subsection{Delayed Acceptance Metropolis-Hastings Algorithm}\label{sec:DAMH}
Let $Q\colon G \times \mathcal B(G)\to [0,1]$ be a $\pi_0$-reversible transition kernel used to propose possible new states of the Markov chain by drawing a sample $y$ from $Q(x,\cdot)$ given $X_n=x$. 
In the delayed acceptance approach, the proposed $y$ is then accepted with probability
\begin{align*}
	\widetilde{\alpha}(x,y):=\underbrace{\text{min}\Bigg\{1, \frac{\rhoapp(y)}{\rhoapp(x)} \Bigg\}}_{=:\alpha_1(x,y)} \ \underbrace{\text{min}\Bigg\{1, \frac{ \widehat{\varrho}(y)}{\widehat{\varrho}(x)} \Bigg\}}_{=:\alpha_2(x,y)}.
\end{align*}
This allows for a two-stage procedure: First accept or reject w.r.t.~$\rhoapp$, i.e., drawing $u_1$ from $\mathrm{U}(0,1)$ and reject immediately if $u_1 \geq \alpha_1(x,y)$, then, in case of $u_1 < \alpha_1(x,y)$, draw $u_2$ from $\mathrm{U}(0,1)$ independently and finally accept $y$ as new state only if also $u_2 < \alpha_2(x,y)$.
The computational advantage here is that if $y$ is already rejected in the first stage, then we do not have to evaluate the costly density $\varrho(y)$.
Only if the proposed state $y$ seems promising in terms of passing the acceptance test for the cheap proxy $\rhoapp$ we pay the additional cost of computing $\widehat{\varrho}(y) = \varrho(y)/\rhoapp(y)$. 
The algorithmic description of the resulting delayed acceptance Metropolis-Hastings (DA-MH), see \cite{ChristenFox2005}, can be found in Algorithm \ref{alg:DAMH}. 
Its transition kernel is given by 
\begin{align*}
	M_{\text{DA}}(x,A):=\int_{A} &\widetilde{\alpha}(x,y) \, Q(x,\mathrm{d}y) + \Big(1-\int_{G} \widetilde{\alpha}(x,y) \,  Q(x,\mathrm{d}y)\Big)\delta_x(A),
\end{align*}
where $x\in G$, $A\in\mathcal{B}(G)$, and $\delta_x(A)$ denotes the Dirac measure located at $x$. 
By construction $M_{\text{DA}}$ is $\pi$-reversible and can, interestingly, be rewritten as a hybrid slice sampler; see Proposition \ref{PMLR:prop:DAMH:HSS} below.

\begin{algorithm}
	\caption{Delayed Acceptance Metropolis-Hastings}\label{alg:DAMH}
	\renewcommand{\algorithmicrequire}{\textbf{Input:}}
	\renewcommand{\algorithmicensure}{\textbf{Output:}}
	\begin{algorithmic}[1]
		\Require current state $X_n = x$
		\Ensure next state $X_{n+1} = y$
        \State Set $X_{n+1} = x$
        \State Draw sample $y$ of $Q(x, \cdot)$
        \State Draw sample $u_1$ of $\mathrm{U}(0,1)$
        \If{$u_1<\alpha_1(x,y)$}
        \State Draw sample $u_2$ of $\mathrm{U}(0,1)$
        \If{$u_2<\alpha_2(x,y)$}
        \State Set $X_{n+1} = y$
        \EndIf
        \EndIf
	\end{algorithmic}
\end{algorithm}

\subsection{Delayed Acceptance Ideal Slice Sampling}\label{sec:DAISS}
We now transfer the cost reduction strategy employed in DA-MH to slice sampling methods.
To this end, we perform slice sampling on the intersection of two superlevel sets $G_{s,\rhoapp}$ and $G_{t,\widehat{\varrho}}$ of $\rhoapp$ and $\widehat{\varrho}$, respectively.
These superlevel sets are defined analogously to (\ref{PMLR:levelset}). 
However, the sampling is performed in a delayed acceptance manner, which avoids unnecessary evaluations of the costly density $\varrho$.
To this end, for $s \in (0, \|\rhoapp\|_\infty)$ define the normalized restriction of $\pi_0$ to $G_{s,\rhoapp}$ by
\begin{align*}
	\widetilde{\pi}_{0,s}(A):=\frac{\pi_0(A \cap G_{s,\rhoapp})}{\pi_0(G_{s,\rhoapp})}, \quad A\in\mathcal{B}(G).
\end{align*} 
Then, the transition mechanism of delayed acceptance slice sampling (DA-SS) is given by Algorithm \ref{alg:DASS}. 
Moreover, we provide illustrations of slice sampling and DA-SS in Figure \ref{fig:DASS}.
There, the left panel illustrates one transition by Algorithm \ref{alg:SS} and the middle and right panel depict the one transition by Algorithm \ref{alg:DASS} by the two-step procedure of first generating a possible new state $y$ by sampling from $\widetilde{\pi}_{0,s}$ and then checking whether it is accepted, i.e., whether $y \in G_{t,\widehat{\varrho}}$, or not. 

\begin{algorithm}
	\caption{Delayed Acceptance Ideal Slice Sampling}\label{alg:DASS}
	\renewcommand{\algorithmicrequire}{\textbf{Input:}}
	\renewcommand{\algorithmicensure}{\textbf{Output:}}
	\begin{algorithmic}[1]
		\Require current state $X_n = x$
		\Ensure next state $X_{n+1} = y$
     	\State Draw sample $s$ of $\mathrm{U}(0,\rhoapp(x))$
		\State Draw sample $t$ of $\mathrm{U}(0,\widehat{\varrho}(x))$
        \Repeat
		\State Draw sample $y$ of $\widetilde{\pi}_{0,s}$
		\Until{$y\in G_{t,\widehat{\varrho}}$}
	\end{algorithmic}
\end{algorithm}

\begin{figure*}[t]
	\begin{minipage}{0.32\linewidth}
		\centering
		\includegraphics[height=0.9\linewidth, width=0.88\linewidth]{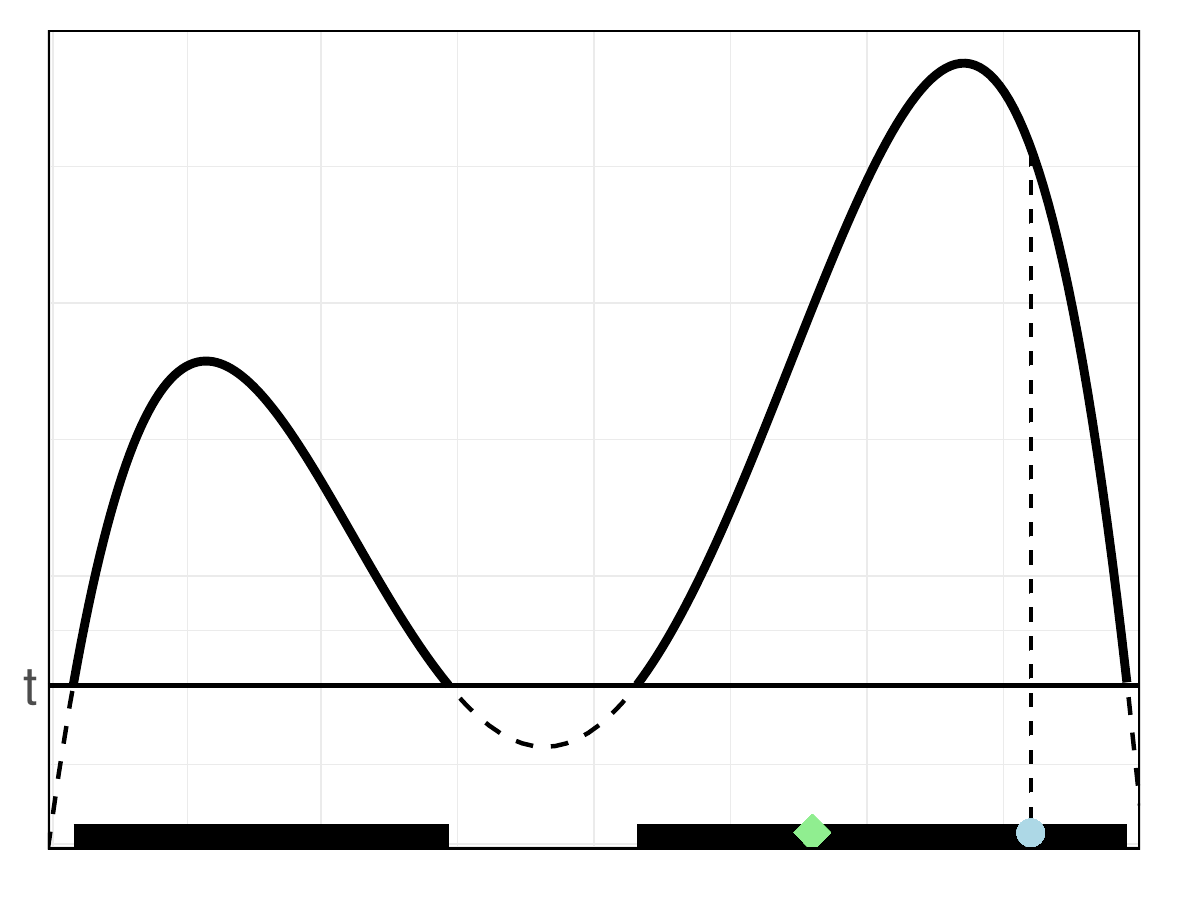}
	\end{minipage}
	\begin{minipage}{0.32\linewidth}
		\centering
		\includegraphics[height=0.9\linewidth, width=0.88\linewidth]{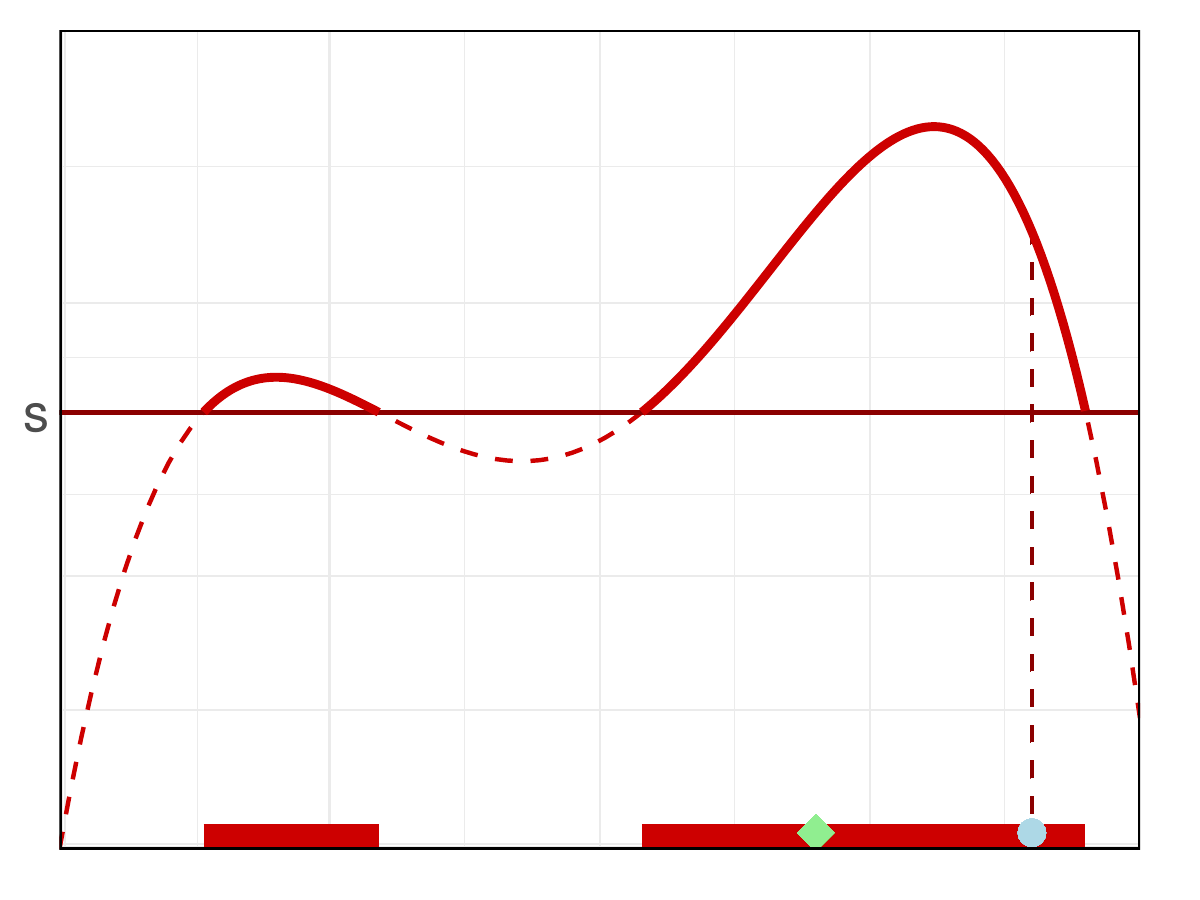}
	\end{minipage}
	\begin{minipage}{0.32\linewidth}
		\centering
		\includegraphics[height=0.9\linewidth, width=0.88\linewidth]{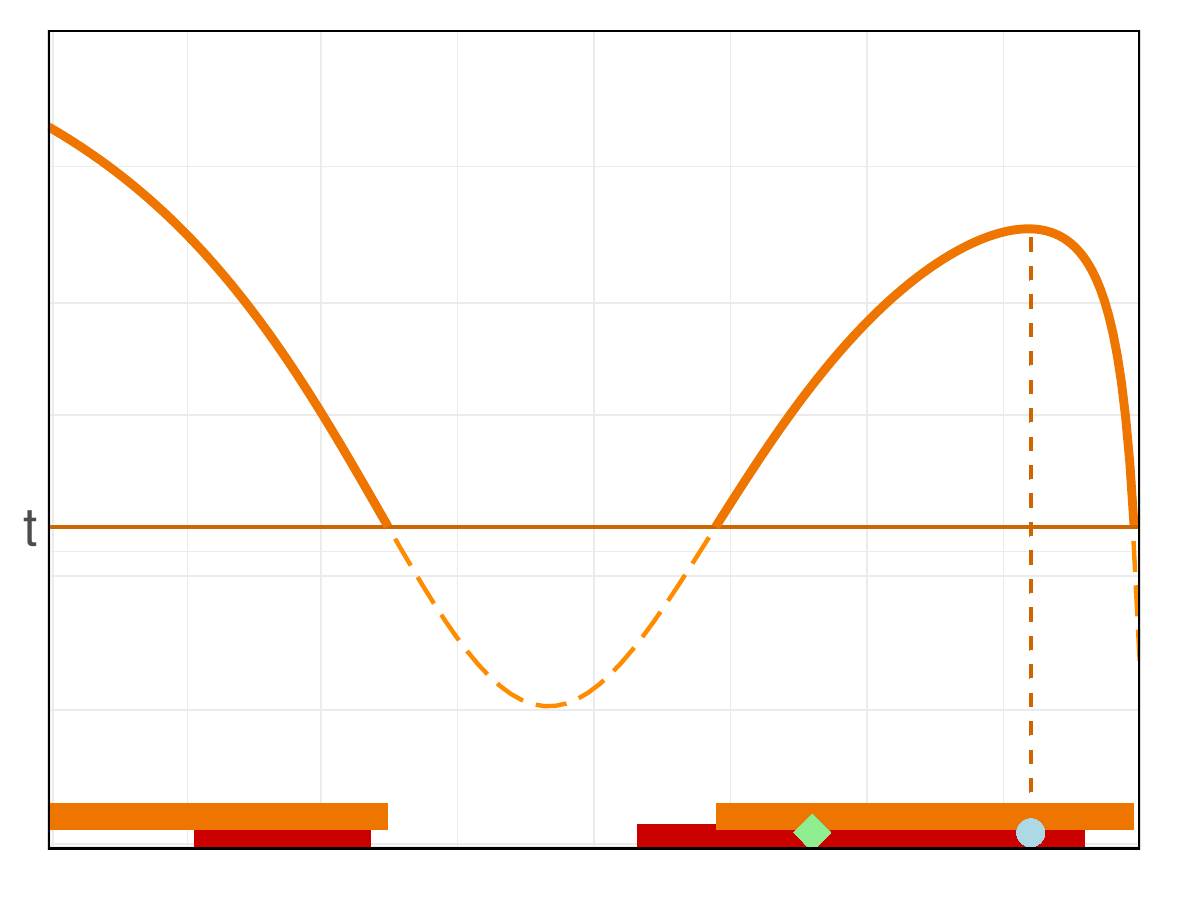}
	\end{minipage}
	\caption{\rev{\emph{Left:} Illustration of Algorithm \ref{alg:SS}, i.e., density $\varrho$ of target $\pi(\text{d} x) \propto \varrho(x) \, \text{d} x$ and resulting superlevel set $G_{t,{\varrho}}$ (thick black lines) given sample $t$ of $\mathrm{U}(0,\varrho(x))$ for the transition from current state $x$ (blue ball) to next state $y$ (green diamond).
			\emph{Middle:} Illustration of step 4 in Algorithm \ref{alg:DASS}, i.e., approximation $\rhoapp$ of $\varrho$ with superlevel set $G_{s,{\rhoapp}}$ (thick red lines) given sample $s$ of $\mathrm{U}(0,\rhoapp(x))$ for generating proposed new state $y$ (green diamond) given current state $x$ (blue ball).
			\emph{Right:} Illustration of step 5 in Algorithm \ref{alg:DASS}, i.e., ratio $\widehat{\varrho}=\varrho/\rhoapp$ with superlevel set $G_{t,\widehat{\varrho}}$ (thick orange lines) given sample $t$ of $\mathrm{U}(0,\widehat{\varrho}(x))$ and verification that proposed new state $y$ (green diamond) belongs to $G_{t,\widehat{\varrho}}$.}}
	\label{fig:DASS}
\end{figure*}
In order for the loop in line 3 to 5 in Algorithm \ref{alg:DASS} to terminate, we formulate the following assumption.

\begin{assumption}\label{assum:positivity_condition}
For $\pi$ determined by \eqref{PMLR:distribution} with $\rhoapp\colon G \to (0,\infty)$ being a measurable approximation of $\varrho$ set 
$\widehat{\varrho}:=\varrho/\rhoapp$ and assume that for all $(s,t)\in(0,
\Vert \rhoapp \Vert_\infty)\times (0,\Vert \widehat{\varrho} \Vert_\infty)$ holds $\pi_0(G_{s,\rhoapp} \cap G_{t,\widehat{\varrho}}) > 0$, where $G_{s,\rhoapp}$ and $G_{t,\widehat{\varrho}}$ are the corresponding superlevel sets defined as in \eqref{PMLR:levelset}.
\end{assumption}

This assumption is ensured, for example, if $\rhoapp$ and $\varrho$ are continuous, the support of $\pi_0$ coincides with $G$ and for every $(s,t)\in(0,
\Vert \rhoapp \Vert_\infty)\times (0,\Vert \widehat{\varrho} \Vert_\infty)$ there exists $x\in G$ with $\rhoapp(x)>s$ and $\widehat{\varrho}(x)>t$.

A computational advantage of delayed acceptance ideal slice sampling over ideal slice sampling is not yet present in Algorithm \ref{alg:DASS}, since we also have to evaluate $\varrho$ in each iteration of the loop.
However, it will become obvious when we extend the delayed acceptance approach to hybrid slice sampling in Section \ref{sec:DAHSS}.
It should be noted that the number of loop iterations in Algorithm \ref{alg:DASS}, and consequently the number of evaluations of $\varrho$, depend on the quality of the approximation $\rhoapp$. A poor choice can result in a large number of samples $y$ drawn from $\widetilde{\pi}_{0,s}$ before one of them is accepted (that is $y\in G_{t,\widehat{\varrho}}$).

\paragraph{Reversibility}
Given Assumption \ref{assum:positivity_condition} we define for 
any $s \in (0, \sup \rhoapp)$ and $t \in (0, \sup \widehat\varrho)$ the probability measure
\begin{align*}
	\pi_{0,t,s}(A):=\frac{\pi_0(A \cap G_{s,\rhoapp} \cap  G_{t,\widehat{\varrho}})}{\pi_0(G_{s,\rhoapp} \cap G_{t,\widehat{\varrho}})}, \qquad A\in\mathcal{B}(G).
\end{align*}
Then, the transition kernel of DA-SS is given by
\begin{align}\label{PMLR:DASS:transitionkernel}
	S_{DA}(x,A):= \frac{1}{\varrho(x)}\int_0^{\widehat{\varrho}(x)} \int_0^{\rhoapp(x)} \pi_{0,t,s}(A)  \, \mathrm{d}s  \, \mathrm{d}t,
\end{align}
where $x\in G$, $A\in\mathcal{B}(G)$.

\begin{remark}[Relation to product slice sampling]
Delayed acceptance slice sampling coincides with product slice sampling for the factorized target $\pi(\text{d} x) \propto \widehat{\varrho}(x)\,\rhoapp(x) \, \pi_0(\text{d}x)$. Results on geometric and uniform ergodicity of product slice sampling are provided in \cite{RobertsRosenthal1999} as well as in \cite{MiraTierney2002}.
However, since their assumptions require $\| \widehat{\varrho}\|_\infty < \infty$ common approximations $\rhoapp$ may be not applicable.
Also, the proof for the uniform ergodicity in \cite{MiraTierney2002} works only if $\pi_0(G)<\infty$, which excludes important cases, e.g. when $G=\mathbb R^d$ and the reference measure $\pi_0$ is the Lebesgue measure.
\end{remark}

DA-SS and the transition kernel $S_{DA}$, as well as in fact any product slice sampler, can also be rewritten as hybrid slice sampling. 
To this end, we introduce an auxiliary probability measure $\piapp$ on $(G, \mathcal B(G))$ given by
\begin{align}\label{PMLR:distribution:factorization}
	\piapp(\text{d} x) 
    & \propto \rhoapp(x)\ \pi_0(\text{d}x),
    & \pi(\mathrm{d}x)
	& \propto \widehat{\varrho}(x) \ \pi_{\text{app}}(\mathrm{d}x).
\end{align} 
We can then rewrite $S_{DA}$ as a hybrid slice sampler, where the Markov kernels $H_t$ coincide with ideal slice sampling of the normalized restriction of $\piapp$ to $G_{t,\widehat{\varrho}}$.
\begin{proposition}\label{PMLR:prop:hybrid}
	For $t \in (0, \|\widehat\varrho\|_\infty)$ let
	\begin{align*}
		\pi_{{\rm app},t}(A):=\frac{\pi_{{\rm app}}(A \cap G_{t,\widehat{\varrho}})}{\pi_{{\rm app}}(G_{t,\widehat{\varrho}})}, \quad A\in\mathcal{B}(G) 	
	\end{align*}
	and define 
	\begin{align}\label{eq:S_DA_Ht}
		H_t(x,A):=\frac{1}{\rhoapp(x)} \int_0^{\rhoapp(x)} \pi_{0,t,s}(A)  \, \mathrm{d}s
	\end{align}
	where $x\in G$, $A\in\mathcal{B}(G)$ .
	Then, $H_t$ is $\pi_{{\rm app},t}$-reversible for each $t \in (0, \|\widehat\varrho\|_\infty)$ and
	\begin{align}\label{eq:S_DA_hybrid}
		S_{DA}(x,A) 
		& = \frac{1}{\widehat\varrho(x)} \int_{0}^{\widehat\varrho(x)}
		H_t(x,A) \, \mathrm{d}t. 
	\end{align}  
Moreover, the transition kernel $S_{DA}$ is $\pi$-reversible.
\end{proposition}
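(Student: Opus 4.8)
The plan is to prove the three assertions in the order they are stated, exploiting that the last one follows almost immediately from the first two. The identity \eqref{eq:S_DA_hybrid} is pure bookkeeping: by the factorisation \eqref{PMLR:factorization} we have $\varrho(x)=\widehat\varrho(x)\,\rhoapp(x)$ for every $x\in G$, so moving the constant factor $1/\rhoapp(x)$ inside the inner $s$-integral of \eqref{PMLR:DASS:transitionkernel} turns that inner integral into $H_t(x,A)$ as defined in \eqref{eq:S_DA_Ht}, and what remains is precisely $\frac{1}{\widehat\varrho(x)}\int_0^{\widehat\varrho(x)}H_t(x,A)\,\mathrm{d}t$. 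No measure-theoretic subtlety enters at this step.

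For the reversibility of $H_t$ the key observation I would make is that, for each fixed $t\in(0,\|\widehat\varrho\|_\infty)$, the kernel $H_t$ is exactly the ideal slice sampling kernel associated with the target $\pi_{{\rm app},t}$. To see this, write $\pi_0^{(t)}$ for the normalised restriction of $\pi_0$ to the superlevel set $G_{t,\widehat\varrho}$; then \eqref{PMLR:distribution:factorization} gives $\pi_{{\rm app},t}(\mathrm{d}x)\propto\rhoapp(x)\,\pi_0^{(t)}(\mathrm{d}x)$, and the normalised restriction of $\pi_0^{(t)}$ to $G_{s,\rhoapp}=\{\rhoapp>s\}$ is nothing but $\pi_{0,t,s}$ --- and it is here that Assumption \ref{assum:positivity_condition} is used, since it guarantees $\pi_0(G_{s,\rhoapp}\cap G_{t,\widehat\varrho})>0$, so that $\pi_{0,t,s}$ is well defined for all relevant $s$. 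Consequently $H_t(x,A)=\frac{1}{\rhoapp(x)}\int_0^{\rhoapp(x)}\pi_{0,t,s}(A)\,\mathrm{d}s$ has precisely the structure of an ideal slice sampler with density $\rhoapp$ and reference measure $\pi_0^{(t)}$, hence is $\pi_{{\rm app},t}$-reversible by the (straightforward) reversibility of ideal slice sampling recalled in Section \ref{sec:slice}; the values of $H_t(x,\cdot)$ for $x\notin G_{t,\widehat\varrho}$ play no role since $\pi_{{\rm app},t}$ charges only $G_{t,\widehat\varrho}$. If a self-contained argument is preferred, one instead unfolds the definitions and applies Tonelli's theorem: writing $c_t:=\int_{G_{t,\widehat\varrho}}\rhoapp\,\mathrm{d}\pi_0$, one checks for all $A,B\in\mathcal B(G)$ that $\int_A H_t(x,B)\,\pi_{{\rm app},t}(\mathrm{d}x)$ equals $\frac{1}{c_t}\int_0^{\|\rhoapp\|_\infty}\frac{\pi_0(A\cap G_{s,\rhoapp}\cap G_{t,\widehat\varrho})\,\pi_0(B\cap G_{s,\rhoapp}\cap G_{t,\widehat\varrho})}{\pi_0(G_{s,\rhoapp}\cap G_{t,\widehat\varrho})}\,\mathrm{d}s$, which is manifestly symmetric in $A$ and $B$; note that the weight $\rhoapp(x)$ carried by $\pi_{{\rm app},t}$ cancels the prefactor $1/\rhoapp(x)$ in $H_t$, which is the usual mechanism behind slice sampler reversibility.

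Finally, combining the two previous points, $S_{DA}$ is by \eqref{eq:S_DA_hybrid} the hybrid slice sampler for the target $\pi$ with respect to the base measure $\piapp$ --- recall from \eqref{PMLR:distribution:factorization} that $\pi(\mathrm{d}x)\propto\widehat\varrho(x)\,\piapp(\mathrm{d}x)$ --- with slices $G_{t,\widehat\varrho}$ and inner kernels $H_t$, which we just showed to be $\pi_{{\rm app},t}$-reversible. Therefore $S_{DA}$ is $\pi$-reversible by \cite[Lemma~1]{LatuszynskiRudolf2024}, the statement already recalled in Section \ref{subsec:hss}. I do not expect a genuine obstacle here; the only points that require some care are the well-definedness of the measures $\pi_{{\rm app},t}$ and $\pi_{0,t,s}$ --- for which Assumption \ref{assum:positivity_condition} is exactly tailored --- and, should one take the alternative direct route, the routine justification of the interchange of the order of integration via Tonelli's theorem.
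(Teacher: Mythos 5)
Your proof is correct, and while your fallback computation (unfolding $H_t$ against $\pi_{{\rm app},t}$, using $\mathbb{1}_{G_{s,\rhoapp}}(x)=\mathbb{1}_{(0,\rhoapp(x))}(s)$ and Tonelli to reach an expression symmetric in $A$ and $B$) is essentially the paper's own argument, your primary route is organized differently and is worth contrasting. The paper proves both reversibility claims by direct calculation: it expands $\int_A H_t(x,B)\,\pi_{{\rm app},t}(\mathrm{d}x)$ into a triple integral, interchanges the outer integrals, and ``argues backwards''; it then repeats this pattern for $\int_A S_{DA}(x,B)\,\pi(\mathrm{d}x)$, reducing it to $\tilde c\int_0^\infty\int_A H_t(x,B)\,\pi_{{\rm app},t}(\mathrm{d}x)\,\mathrm{d}t$ and invoking the just-proved reversibility of $H_t$. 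You instead make the structural identifications explicit: $H_t$ is the ideal slice sampler for $\pi_{{\rm app},t}(\mathrm{d}x)\propto\rhoapp(x)\,\pi_0^{(t)}(\mathrm{d}x)$ with reference measure $\pi_0^{(t)}$ (the normalized restriction of $\pi_0$ to $G_{t,\widehat\varrho}$), whose level-set restrictions are exactly the $\pi_{0,t,s}$, so its reversibility follows from the known reversibility of ideal slice sampling; and $S_{DA}$ is then the hybrid slice sampler for $\pi(\mathrm{d}x)\propto\widehat\varrho(x)\,\piapp(\mathrm{d}x)$, so its $\pi$-reversibility follows from \cite[Lemma~1]{LatuszynskiRudolf2024} as recalled in Section~\ref{subsec:hss}. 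This modular reduction buys brevity and makes transparent the remark in the text that $S_{DA}$ is a two-factor product slice sampler, at the cost of leaning on external statements; the paper's calculation is self-contained and exhibits the cancellation mechanism explicitly. You also correctly flag the two genuine technical points: Assumption~\ref{assum:positivity_condition} guaranteeing well-definedness of $\pi_{0,t,s}$ and $\pi_{{\rm app},t}$, and Tonelli for the interchange of integrals. I see no gap.
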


\noindent The proof of Proposition \ref{PMLR:prop:hybrid} can be found in Appendix \ref{app:prop:hybrid}. 

\paragraph{Ergodicity}

We start with stating an auxiliary tool which follows by \cite[Corollary 1 and Theorem 1]{Tierney1994} that is also formulated and used in \cite[Section~1]{Schaer2025}. 

\begin{proposition}
\label{Schaer:Convergence}
Assume that transition kernel $P\colon G \times \mathcal{B}(G)\to[0,1]$ is $\pi$-invariant as well as that for each $x\in G$ and $A\in \mathcal{B}(G)$ we have
\begin{align}\label{eq:ergodic:abscon1}
    \pi(A) > 0
    &\quad \Rightarrow \ \
    P(x,A) > 0.
\end{align}
Then, $P$ is $\pi$-almost everywhere ergodic, i.e., satisfies \eqref{eq:ergodic} for $\pi$-almost all $x_0 \in G$. \\
If additionally, for each $x\in G$ and $A\in \mathcal{B}(G)$ we also have
\begin{align}\label{eq:ergodic:abscon2} 
    \pi(A) = 0
    & \quad \Rightarrow \ \
    P(x,A) = 0,
\end{align}
then $P$ is ergodic, i.e., satisfies \eqref{eq:ergodic} for all $x_0 \in G$.
\end{proposition}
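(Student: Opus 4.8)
The plan is to recognize Proposition~\ref{Schaer:Convergence} as a specialization of the standard convergence theorems for $\pi$-irreducible Markov chains (see \cite{Tierney1994,RobertsRosenthal2004}), so the work is essentially to translate the one-step conditions \eqref{eq:ergodic:abscon1}--\eqref{eq:ergodic:abscon2} into the classical hypotheses $\pi$-irreducibility, aperiodicity and, for the second part, Harris recurrence, and then to invoke \cite[Theorem~1 and Corollary~1]{Tierney1994}.

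First I would observe that \eqref{eq:ergodic:abscon1} gives $\pi$-irreducibility immediately: if $\pi(A)>0$ then $\sum_{n\ge 1}P^n(x,A)\ge P(x,A)>0$ for every $x\in G$, so $\pi$ is an irreducibility measure for $P$; since $P$ is also $\pi$-invariant, $\pi$ is a maximal irreducibility measure. Next I would derive aperiodicity by contradiction. If $P$ had period $d\ge 2$, the cyclic decomposition of a $\pi$-irreducible chain would provide disjoint sets $D_0,\dots,D_{d-1}\in\mathcal B(G)$ with $P(x,D_{(i+1)\bmod d})=1$ for $x\in D_i$ and $\pi\bigl(G\setminus\bigcup_i D_i\bigr)=0$; $\pi$-invariance then forces $\pi(D_0)\le\pi(D_1)\le\cdots\le\pi(D_{d-1})\le\pi(D_0)$, hence $\pi(D_i)=1/d>0$ for all $i$, and for $x\in D_0$ we would get $P(x,D_0)\le 1-P(x,D_1)=0$ while $\pi(D_0)>0$, contradicting \eqref{eq:ergodic:abscon1}. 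Thus $P$ is $\pi$-irreducible, $\pi$-invariant and aperiodic, and \cite[Theorem~1]{Tierney1994} yields that $\pi$ is the unique invariant distribution and $\|P^n(x,\cdot)-\pi\|_{\mathrm{TV}}\to 0$ for $\pi$-almost every $x\in G$, which is the first assertion.

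For the second assertion I would use \eqref{eq:ergodic:abscon2} to upgrade ``$\pi$-a.e.\ $x$'' to ``all $x$''. Let $N\in\mathcal B(G)$ be the $\pi$-null set from the first part, so that $\|P^n(y,\cdot)-\pi\|_{\mathrm{TV}}\to 0$ for every $y\in N^c$; by \eqref{eq:ergodic:abscon2} applied to $A=N$ we have $P(x,N)=0$, i.e.\ $P(x,N^c)=1$, for all $x\in G$. By Chapman--Kolmogorov, for every $x\in G$, $A\in\mathcal B(G)$ and $n\ge 1$,
\[
P^n(x,A)-\pi(A)=\int_{N^c}\bigl(P^{n-1}(y,A)-\pi(A)\bigr)\,P(x,\mathrm{d}y),
\]
whence $\|P^n(x,\cdot)-\pi\|_{\mathrm{TV}}\le\int_{N^c}\|P^{n-1}(y,\cdot)-\pi\|_{\mathrm{TV}}\,P(x,\mathrm{d}y)$; since the integrand is bounded by $1$ and converges to $0$ on $N^c$, dominated convergence gives $\|P^n(x,\cdot)-\pi\|_{\mathrm{TV}}\to 0$ for all $x\in G$. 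Equivalently, \eqref{eq:ergodic:abscon2} forces the chain to enter the maximal Harris set in one step, so $P$ is Harris recurrent and \cite[Corollary~1]{Tierney1994} applies directly.

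The argument is bookkeeping rather than a deep computation, and I do not expect a real obstacle; the points that need care are (i) deriving aperiodicity, i.e.\ checking that \eqref{eq:ergodic:abscon1} genuinely rules out periodic behaviour via the cyclic decomposition, and (ii) ensuring that the first-part conclusion is honest total-variation convergence along the whole sequence for $\pi$-a.e.\ $x$ (not merely convergence of ergodic averages), since this is exactly what legitimizes the dominated-convergence bootstrap used for the second part. A minor technicality to mention is the measurability of $y\mapsto\|P^{n-1}(y,\cdot)-\pi\|_{\mathrm{TV}}$, which is standard.
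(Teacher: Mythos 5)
Your proof is correct and follows essentially the same route as the paper, which simply invokes \cite[Theorem~1 and Corollary~1]{Tierney1994}: you supply the standard verification that \eqref{eq:ergodic:abscon1} yields $\pi$-irreducibility and aperiodicity, and that \eqref{eq:ergodic:abscon2} forces the chain into the Harris set (equivalently, your dominated-convergence bootstrap) in one step. No gaps.
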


Note that \eqref{eq:ergodic:abscon1} and \eqref{eq:ergodic:abscon2} are equivalent to $\pi \ll P(x,\cdot)$ and $P(x,\cdot) \ll \pi$, respectively, for any $x\in G$.
These sufficient conditions can be verified for delayed acceptance slice sampling, see Appendix \ref{app:DASS_convergence}, and we obtain
the following result.

\begin{theorem}\label{DASS:Ergodicity}
    Given Assumption \ref{assum:positivity_condition} the transition kernel $S_{DA}$ of delayed acceptance slice sampling is $\pi$-reversible and ergodic.
\end{theorem}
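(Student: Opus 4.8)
The plan is to obtain $\pi$-reversibility directly from Proposition~\ref{PMLR:prop:hybrid} and then to deduce ergodicity from Proposition~\ref{Schaer:Convergence} by verifying its two absolute-continuity hypotheses \eqref{eq:ergodic:abscon1} and \eqref{eq:ergodic:abscon2} for the kernel $S_{DA}$. The one structural fact I would use throughout is that, since $\varrho>0$ on $G$ and $0<Z<\infty$, the target $\pi$ and the reference $\pi_0$ are mutually absolutely continuous, so that $\pi(A)>0 \iff \pi_0(A)>0$ for every $A\in\mathcal B(G)$. (That $S_{DA}(x,\cdot)$ is itself a probability measure is immediate from $S_{DA}(x,G)=\rhoapp(x)\widehat\varrho(x)/\varrho(x)=1$.)

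For \eqref{eq:ergodic:abscon2}, i.e.\ $S_{DA}(x,\cdot)\ll\pi$: if $\pi(A)=0$ then $\pi_0(A)=0$, hence $\pi_0(A\cap G_{s,\rhoapp}\cap G_{t,\widehat\varrho})=0$ and therefore $\pi_{0,t,s}(A)=0$ for every admissible pair $(s,t)$; inserting this into the representation \eqref{PMLR:DASS:transitionkernel} gives $S_{DA}(x,A)=0$.

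For \eqref{eq:ergodic:abscon1}, i.e.\ $\pi\ll S_{DA}(x,\cdot)$: fix $x\in G$ and $A$ with $\pi(A)>0$, equivalently $\pi_0(A)>0$. Because $\rhoapp>0$ and $\widehat\varrho>0$ on $G$, the superlevel sets increase to $G$, namely $G_{s,\rhoapp}\uparrow G$ as $s\downarrow 0$ and $G_{t,\widehat\varrho}\uparrow G$ as $t\downarrow 0$, so continuity from below of $\pi_0$ yields $\pi_0(A\cap G_{s,\rhoapp}\cap G_{t,\widehat\varrho})\to\pi_0(A)>0$. Hence I can choose $s_0\in(0,\rhoapp(x))$ and $t_0\in(0,\widehat\varrho(x))$ small enough that $c:=\pi_0(A\cap G_{s_0,\rhoapp}\cap G_{t_0,\widehat\varrho})>0$. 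By nestedness of superlevel sets, $\pi_0(A\cap G_{s,\rhoapp}\cap G_{t,\widehat\varrho})\ge c$ for all $(s,t)\in(0,s_0]\times(0,t_0]$, while the normalizer stays finite there since $G_{s,\rhoapp}\cap G_{t,\widehat\varrho}\subseteq G_{st,\varrho}$ together with the Markov-type bound $\pi_0(G_{st,\varrho})\le Z/(st)<\infty$ (this inclusion is also what makes $\pi_{0,t,s}$ a genuine probability measure under Assumption~\ref{assum:positivity_condition}). Thus $\pi_{0,t,s}(A)>0$ on the rectangle $(0,s_0]\times(0,t_0]$, which has positive Lebesgue measure and lies inside the integration domain of \eqref{PMLR:DASS:transitionkernel}; integrating then gives $S_{DA}(x,A)>0$. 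Since both \eqref{eq:ergodic:abscon1} and \eqref{eq:ergodic:abscon2} hold, Proposition~\ref{Schaer:Convergence} yields full ergodicity.

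I expect the only genuinely delicate point to be the last step: one must make sure the $(s,t)$-region on which $\pi_{0,t,s}(A)$ is bounded away from zero actually overlaps the finite integration rectangle $(0,\rhoapp(x))\times(0,\widehat\varrho(x))$, which is precisely why it matters that the continuity-from-below limit lets $s_0,t_0$ be taken arbitrarily small; the rest is routine bookkeeping with nested level sets and the equivalence of $\pi$ and $\pi_0$.
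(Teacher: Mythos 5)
Your proof is correct and follows the same overall skeleton as the paper's: reversibility is quoted from Proposition~\ref{PMLR:prop:hybrid}, and ergodicity is reduced to the two absolute-continuity conditions \eqref{eq:ergodic:abscon1} and \eqref{eq:ergodic:abscon2} of Proposition~\ref{Schaer:Convergence}; your treatment of the case $\pi(A)=0$ is identical to the paper's. The one place you diverge is the case $\pi(A)>0$: the paper routes this through an auxiliary equivalence (Proposition~\ref{positive:measure:iff}), proved by writing $\pi(A)$ as a double integral of $\pi_0(A\cap G_{s,\rhoapp}\cap G_{t,\widehat{\varrho}})$ via the layer-cake representation of $\varrho$ and Fubini, and then invoking monotonicity of that integrand in $(s,t)$. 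You instead observe that $\pi$ and $\pi_0$ are mutually absolutely continuous (since $\varrho>0$ on $G$) and obtain a rectangle $(0,s_0]\times(0,t_0]$ on which $\pi_0(A\cap G_{s,\rhoapp}\cap G_{t,\widehat{\varrho}})\geq c>0$ directly from continuity from below along the increasing superlevel sets $G_{s,\rhoapp}\uparrow G$ and $G_{t,\widehat{\varrho}}\uparrow G$; this is slightly more elementary (no Fubini needed), and since $s_0,t_0$ can be taken arbitrarily small the rectangle lands inside the integration domain $(0,\rhoapp(x))\times(0,\widehat{\varrho}(x))$ of \eqref{PMLR:DASS:transitionkernel}, playing exactly the role of the paper's $S_{x,A}$ and $T_{x,A}$. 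A small bonus of your write-up is the explicit check that the normalizer $\pi_0(G_{s,\rhoapp}\cap G_{t,\widehat{\varrho}})$ is finite, via the inclusion $G_{s,\rhoapp}\cap G_{t,\widehat{\varrho}}\subseteq G_{st,\varrho}$ and the Markov-type bound $\pi_0(G_{st,\varrho})\leq Z/(st)$, so that $\pi_{0,t,s}$ is a genuine probability measure under Assumption~\ref{assum:positivity_condition}; the paper leaves this well-definedness implicit.
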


\subsection{Comparison of Delayed Acceptance Metropolis--Hastings and Delayed Acceptance Ideal Slice Sampling}\label{sec:comparison}

We now argue for the superiority of DA-SS over DA-MH.
The following result can be considered as extension of the comparison arguments of \cite{RudolfUllrich2018} for slice sampling and MH to their delayed acceptance counterparts.
Firstly, we rewrite the DA-MH algorithm as suitable hybrid slice sampler.

\begin{proposition}\label{PMLR:prop:DAMH:HSS}
	Consider the family $(H_t)_{t \in (0, \|\widehat\varrho\|_\infty)}$ of $\pi_{{\rm app},t}$-reversible transition kernels defined by
	\begin{align*}
		H_t(x,A):= &\frac{1}{\rhoapp(x)} \int_0^{\rhoapp(x)} \Big(Q(x,A \cap G_{s,\rhoapp} \cap  G_{t,\widehat{\varrho}}) 
		\\ & \quad +\big(1-Q(x,G_{s,\rhoapp} \cap  G_{t,\widehat{\varrho}})\big)\delta_x(A) \Big) \, \mathrm{d}s,
	\end{align*}
	where $x\in G$, $A\in\mathcal{B}(G)$.
	Then, for all $x\in G$ and $ A\in\mathcal{B}(G)$ we have 
	\begin{align*}
		M_{DA}(x,A) = \frac{1}{\widehat\varrho(x)}\int_{0}^{\widehat\varrho(x)} H_t(x,A)\, \mathrm{d}t. 
	\end{align*}	
\end{proposition}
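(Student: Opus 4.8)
The plan is to establish the two assertions of Proposition~\ref{PMLR:prop:DAMH:HSS} in turn: first the $\pi_{{\rm app},t}$-reversibility of each $H_t$, then the representation of $M_{DA}$ as the $t$-average of the $H_t$. The computational engine for the second part is the elementary ``layer--cake'' identity
\begin{align*}
\frac{1}{c}\int_0^{c}\mathbf{1}\{f(y)>u\}\,\mathrm{d}u=\min\Bigl\{1,\tfrac{f(y)}{c}\Bigr\},\qquad c\in(0,\infty),\ f\colon G\to(0,\infty),\ y\in G,
\end{align*}
which, applied with $f=\rhoapp$, $c=\rhoapp(x)$ and integration variable $s$, returns $\alpha_1(x,y)$, and applied with $f=\widehat\varrho$, $c=\widehat\varrho(x)$ and integration variable $t$, returns $\alpha_2(x,y)$; recall that $\widetilde\alpha=\alpha_1\alpha_2$.

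For the reversibility of $H_t$ I would fix $s$ and $t$ and observe that
\begin{align*}
K_{s,t}(x,A):=Q(x,A\cap G_{s,\rhoapp}\cap G_{t,\widehat\varrho})+\bigl(1-Q(x,G_{s,\rhoapp}\cap G_{t,\widehat\varrho})\bigr)\delta_x(A)
\end{align*}
is exactly a Metropolis--Hastings kernel with $\pi_0$-reversible proposal $Q$ and acceptance probability $\mathbf{1}\{y\in G_{s,\rhoapp}\cap G_{t,\widehat\varrho}\}$; hence $K_{s,t}$ is reversible with respect to the normalised restriction of $\pi_0$ to $G_{s,\rhoapp}\cap G_{t,\widehat\varrho}$, which under Assumption~\ref{assum:positivity_condition} has positive $\pi_0$-mass. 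Since $H_t(x,\cdot)=\frac{1}{\rhoapp(x)}\int_0^{\rhoapp(x)}K_{s,t}(x,\cdot)\,\mathrm{d}s$ is the slice-sampling average of these kernels for the density $\rhoapp$ restricted to $G_{t,\widehat\varrho}$, the hybrid slice sampling reversibility lemma \cite[Lemma~1]{LatuszynskiRudolf2024}, applied exactly as in the proof of Proposition~\ref{PMLR:prop:hybrid}, yields $\pi_{{\rm app},t}$-reversibility of $H_t$.

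For the representation formula I would insert the definition of $H_t$ into $\frac{1}{\widehat\varrho(x)}\int_0^{\widehat\varrho(x)}H_t(x,A)\,\mathrm{d}t$ and split off the two contributions. The ``move'' term is
\begin{align*}
\frac{1}{\widehat\varrho(x)}\int_0^{\widehat\varrho(x)}\frac{1}{\rhoapp(x)}\int_0^{\rhoapp(x)}\int_A\mathbf{1}\{y\in G_{s,\rhoapp}\}\,\mathbf{1}\{y\in G_{t,\widehat\varrho}\}\,Q(x,\mathrm{d}y)\,\mathrm{d}s\,\mathrm{d}t,
\end{align*}
and since all integrands are non-negative and jointly measurable in $(s,t,y)$, Tonelli's theorem lets me pull the $Q(x,\mathrm{d}y)$-integral outside and factor the $s$- and $t$-integrals; the layer--cake identity then collapses them to $\alpha_1(x,y)\alpha_2(x,y)=\widetilde\alpha(x,y)$, giving $\int_A\widetilde\alpha(x,y)\,Q(x,\mathrm{d}y)$. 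The coefficient multiplying $\delta_x(A)$ equals
\begin{align*}
\frac{1}{\widehat\varrho(x)\,\rhoapp(x)}\int_0^{\widehat\varrho(x)}\int_0^{\rhoapp(x)}\bigl(1-Q(x,G_{s,\rhoapp}\cap G_{t,\widehat\varrho})\bigr)\,\mathrm{d}s\,\mathrm{d}t=1-\int_G\widetilde\alpha(x,y)\,Q(x,\mathrm{d}y),
\end{align*}
by the same Tonelli-plus-layer-cake computation with $A=G$. Summing the two contributions reproduces precisely $M_{DA}(x,A)$ as defined in Section~\ref{sec:DAMH}.

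I do not expect a genuine obstacle: the only points demanding care are the joint measurability of $(s,y)\mapsto\mathbf{1}\{\rhoapp(y)>s\}$ and its $\widehat\varrho$-analogue, needed to invoke Tonelli, and the bookkeeping of the normalising constants $1/\rhoapp(x)$ and $1/\widehat\varrho(x)$ so that the two one-dimensional averages assemble into the \emph{product} $\widetilde\alpha$ rather than a sum. The subtlest ingredient is arguably the $\pi_{{\rm app},t}$-reversibility of $H_t$, which relies on Assumption~\ref{assum:positivity_condition} for well-definedness of the restricted measures and on the structural identification of $K_{s,t}$ as a Metropolis kernel; once that is in place, combining \eqref{PMLR:distribution:factorization} with the representation formula and \cite[Lemma~1]{LatuszynskiRudolf2024} also re-derives the $\pi$-reversibility of $M_{DA}$ already noted in Section~\ref{sec:DAMH}.
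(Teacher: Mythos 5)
Your proposal is correct, and for the main identity it follows the paper's own route: insert the definition of $H_t$, apply Tonelli, and collapse the $s$- and $t$-averages of $\mathbb{1}_{(0,\rhoapp(y))}(s)$ and $\mathbb{1}_{(0,\widehat\varrho(y))}(t)$ into $\alpha_1(x,y)\,\alpha_2(x,y)=\widetilde\alpha(x,y)$ via the layer-cake identity. Two organisational differences are worth recording. First, for the Dirac mass you compute the coefficient of $\delta_x(A)$ directly and identify it with $1-\int_G\widetilde\alpha(x,y)\,Q(x,\mathrm{d}y)$ by repeating the same computation with $A=G$; the paper instead first treats sets $A$ with $x\notin A$ (so the Dirac term drops out), recovers $H(x,\{x\})=M_{DA}(x,\{x\})$ by complementation with $A=G\setminus\{x\}$, and reassembles general $A$ --- both are valid, and yours is marginally more direct. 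Second, for the $\pi_{{\rm app},t}$-reversibility of $H_t$ you identify the inner kernel $K_{s,t}$ as a Metropolis--Hastings kernel with acceptance indicator $\mathbb{1}_{G_{s,\rhoapp}\cap G_{t,\widehat\varrho}}$, hence $\pi_{0,t,s}$-reversible, and then invoke the general hybrid-slice-sampling reversibility lemma with reference measure the normalised restriction of $\pi_0$ to $G_{t,\widehat\varrho}$ and density $\rhoapp$; the paper instead carries out the symmetrisation by hand (disjoint $A,B$, the rewriting $\mathbb{1}_{(0,\rhoapp(x))}(s)=\mathbb{1}_{G_{s,\rhoapp}}(x)$, and the $\pi_0$-reversibility of $Q$). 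Your modular argument is a legitimate shortcut that buys brevity at the cost of leaning on the external lemma; it needs Assumption~\ref{assum:positivity_condition} for the on-slice measures $\pi_{0,t,s}$ to be well defined, which you correctly flag.
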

The proof of Proposition \ref{PMLR:prop:DAMH:HSS} can be found in Appendix \ref{sec:DAMH:HSS}.
For the comparison result between (ideal) delayed acceptance slice sampling and the corresponding DA-MH algorithm we consider the so-called covariance ordering for Markov operators, cf. \cite{Mira2001, RudolfUllrich2018}.
Let $L^2_{\pi}(G)$ denote the Hilbert space of (w.r.t. $\pi$) square integrable measurable functions $f:G\rightarrow \mathbb{R}$ with inner product given by
\begin{align*}
	\langle f_1,f_2\rangle_{\pi}=\int_{G} f_1(x) f_2(x)  \, \pi(\mathrm{d}x).	
\end{align*}
For a transition kernel $P$ define the corresponding Markov operator, which is also denoted by $P\colon L^2_{\pi}(G) \to L^2_{\pi}(G)$, through
\begin{align*}
	Pf(x)=\int_{G} f(y)  \, P(x,\mathrm{d}y), \quad f\in L^2_{\pi}(G).	
\end{align*}
It is well-known that for a $\pi$-reversible transition kernel the corresponding operator $P$ is self-adjoint. 
We call such a self-adjoint operator $P$ positive if 
\begin{align*}
	\langle Pf,f\rangle_{\pi}\geq 0, \quad \forall f\in L^2_{\pi}(G). 	
\end{align*}
For two Markov operators $P_1$ and $P_2$, we then define a partial ordering by
\begin{align}\label{PMLR:ordering}
	P_1\leq P_2
    \quad \Longleftrightarrow \quad  \langle P_1f,f \rangle_{\pi}\geq \langle P_2f,f\rangle_{\pi}, \quad \forall f\in L^2_{\pi}(G).
\end{align}
This is a usefule ordering, since 
for example in \cite{MiraLeisen2009} it has been shown that
\begin{align*}
	P_2\geq P_1 \quad  \iff \quad V(f, P_2)\leq V(f, P_1), \quad \forall f\in L^2_{\pi}(G), 	
\end{align*}
where, $V(f, P) := \lim_{n\to\infty} n\mathbb E|A_n(f) - \pi(f)|^2 $, for $f\in L^2_{\pi}(G)$ and $\pi$-invariant transition kernel $P$, denotes the asymptotic variance of the path average estimator $A_n(f)$ based on a Markov chain $(X_k)_{k\in\mathbb N}$ with transition kernel $P$ and initial distribution $\pi$. 
Moreover, the relation \eqref{PMLR:ordering} also has implications on the speed of convergence of the associated Markov chains. As outlined in \cite{RudolfUllrich2018}, $P_2\geq P_1$ implies that the spectral gap, conductance, and log-Sobolev constant of $P_2$ are greater or equal to those of $P_1$. 
\begin{theorem}\label{PMLR:theorem:comparison}
	Let $S_{DA}$ and $M_{DA}$ be the Markov operators of the delayed acceptance slice sampling and of the delayed acceptance Metropolis-Hastings, respectively, with a $\pi_0$-reversible proposal kernel $Q$, which is positive on $L^2_{\pi_0}(G)$, where $\pi_0$ is a $\sigma$-finite measure. Then
	\begin{align*}
		M_{DA}\leq S_{DA}. 
	\end{align*}   
\end{theorem}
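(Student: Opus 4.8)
The proof mirrors the level-by-level comparison argument of \cite{RudolfUllrich2018} between ideal slice sampling and Metropolis--Hastings. By Proposition~\ref{PMLR:prop:hybrid} and Proposition~\ref{PMLR:prop:DAMH:HSS}, both $S_{DA}$ and $M_{DA}$ are hybrid slice samplers for the factorization $\pi(\mathrm d x)\propto\widehat\varrho(x)\,\piapp(\mathrm d x)$ with the \emph{same} outer slice variable; write $H_t^{\mathrm{SS}}$ and $H_t^{\mathrm{MH}}$ for the corresponding families of $\pi_{{\rm app},t}$-reversible inner kernels from \eqref{eq:S_DA_Ht} and Proposition~\ref{PMLR:prop:DAMH:HSS}, so that by \eqref{eq:S_DA_hybrid} both operators have the form $\frac1{\widehat\varrho(x)}\int_0^{\widehat\varrho(x)}H_t^{\bullet}(x,A)\,\mathrm d t$. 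The first step is the quadratic-form decomposition of a hybrid slice sampler: for $f\in L^2_\pi(G)$ one has, with $Z':=\int_G\widehat\varrho\,\mathrm d\piapp$,
\begin{equation*}
  \langle S_{DA}f,f\rangle_\pi
  =\frac1{Z'}\int_0^{\|\widehat\varrho\|_\infty}\piapp(G_{t,\widehat\varrho})\,
  \big\langle H_t^{\mathrm{SS}}f,f\big\rangle_{\pi_{{\rm app},t}}\,\mathrm d t ,
\end{equation*}
and likewise for $M_{DA}$ with the same constant $Z'$; this follows by Fubini after writing $\pi(\mathrm d x)=Z'^{-1}\widehat\varrho(x)\piapp(\mathrm d x)$, using $\{\widehat\varrho>t\}=G_{t,\widehat\varrho}$, and noting that the restriction of $f$ to $G_{t,\widehat\varrho}$ lies in $L^2_{\pi_{{\rm app},t}}(G)$ for every $t>0$ (since $\rhoapp<\varrho/t$ there). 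A version of this identity is already contained in \cite{RudolfUllrich2018, LatuszynskiRudolf2024}. In view of \eqref{PMLR:ordering}, the claim $M_{DA}\le S_{DA}$ reduces to the pointwise-in-$t$ comparison $H_t^{\mathrm{MH}}\le H_t^{\mathrm{SS}}$ on $L^2_{\pi_{{\rm app},t}}(G)$ for Lebesgue-almost every $t\in(0,\|\widehat\varrho\|_\infty)$.

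For this per-level comparison, observe that $H_t^{\mathrm{SS}}$ is precisely \emph{ideal} slice sampling for the target $\pi_{{\rm app},t}$, viewed as having density $\rhoapp$ over the reference measure $\pi_0$ restricted (and normalized) to $G_{t,\widehat\varrho}$, while a short computation of the $s$-integral identifies $H_t^{\mathrm{MH}}$ with the Metropolis--Hastings kernel for the \emph{same} target $\pi_{{\rm app},t}$ using the $\pi_0|_{G_{t,\widehat\varrho}}$-reversible proposal $\overline Q$ obtained from $Q$ by staying put whenever $Q$ proposes a state outside $G_{t,\widehat\varrho}$. Since $Q$ is $\pi_0$-reversible and positive on $L^2_{\pi_0}(G)$, the restricted proposal $\overline Q$ is $\pi_0|_{G_{t,\widehat\varrho}}$-reversible and positive on $L^2_{\pi_{{\rm app},t}}(G)$: for $g$ vanishing outside $G_{t,\widehat\varrho}$ (and extended by zero to $G$),
\begin{equation*}
  \big\langle\overline Q g,g\big\rangle_{\pi_0|_{G_{t,\widehat\varrho}}}
  =\langle Qg,g\rangle_{\pi_0}+\int_{G_{t,\widehat\varrho}}\big(1-Q(x,G_{t,\widehat\varrho})\big)\,g(x)^2\,\pi_0(\mathrm d x)\ge0 .
\end{equation*}
Hence $H_t^{\mathrm{MH}}\le H_t^{\mathrm{SS}}$ is exactly the statement that ideal slice sampling dominates, in the covariance ordering, any Metropolis--Hastings chain with a positive reference-reversible proposal, i.e.\ the main comparison result of \cite{RudolfUllrich2018} applied to $\pi_{{\rm app},t}$. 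Integrating the per-level inequality against $\piapp(G_{t,\widehat\varrho})/Z'$ over $t$ yields $\langle M_{DA}f,f\rangle_\pi\ge\langle S_{DA}f,f\rangle_\pi$, that is, $M_{DA}\le S_{DA}$.

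To keep the argument self-contained one may reduce the per-level step one further level, applying the quadratic-form decomposition again to the inner slicing of $\rhoapp$ common to $H_t^{\mathrm{SS}}$ and $H_t^{\mathrm{MH}}$: for a.e.\ $s$ the innermost kernels are the direct sampler of $\pi_{0,t,s}$ and the rejection kernel $R_{s,t}(x,\cdot)=Q(x,\cdot\cap G_{s,\rhoapp}\cap G_{t,\widehat\varrho})+\bigl(1-Q(x,G_{s,\rhoapp}\cap G_{t,\widehat\varrho})\bigr)\delta_x$, both $\pi_{0,t,s}$-reversible, with $R_{s,t}$ positive on $L^2_{\pi_{0,t,s}}(G)$ by the display above. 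Decomposing $g=c_g+g_0$ with $c_g:=\int g\,\mathrm d\pi_{0,t,s}$ and $g_0$ of $\pi_{0,t,s}$-mean zero gives $\langle R_{s,t}g,g\rangle_{\pi_{0,t,s}}=c_g^2+\langle R_{s,t}g_0,g_0\rangle_{\pi_{0,t,s}}\ge c_g^2=\langle\pi_{0,t,s}g,g\rangle_{\pi_{0,t,s}}$, i.e.\ the direct sampler attains the minimal quadratic form among positive $\pi_{0,t,s}$-reversible kernels; integrating over $s$ and then $t$ again gives the claim. All occurring restrictions of $\pi_0$ are finite, since $G_{s,\rhoapp}\cap G_{t,\widehat\varrho}\subseteq G_{st,\varrho}$ and $\pi_0(G_{st,\varrho})\le Z/(st)$ by \eqref{PMLR:distribution}.

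The step I expect to require the most care is the first one: justifying measurability of $t\mapsto\langle H_t^{\bullet}f,f\rangle_{\pi_{{\rm app},t}}$ and the interchange of integration --- cleanly handled by first treating bounded $f$, where $\int_0^{\|\widehat\varrho\|_\infty}\piapp(G_{t,\widehat\varrho})\,\mathrm d t=Z'<\infty$ makes Fubini immediate, and then passing to general $f\in L^2_\pi(G)$ by density --- together with the routine but careful verification that $H_t^{\mathrm{MH}}$ really is a genuine Metropolis--Hastings kernel for $\pi_{{\rm app},t}$ and that restriction to a superlevel set preserves positivity of the proposal. Once these are in place, the comparison itself follows at once from the extremality of the direct sampler.
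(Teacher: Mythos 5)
Your proof is correct and follows essentially the same route as the paper: both reduce the claim to a slice-by-slice comparison of the direct sampler $\pi_{0,t,s}$ against the rejection kernel built from $Q$, with the two key inputs being the positivity of that rejection kernel on $L^2_{\pi_{0,t,s}}(G)$ (inherited from positivity of $Q$ on $L^2_{\pi_0}(G)$, via exactly the computation in your display) and the extremality of the direct sampler among positive reversible kernels. The paper packages this as a single application of Lemma~1 of \cite{RudolfUllrich2018} with the joint index $a=(s,t)$ and the absorbing property $M_{DA_a}S_{DA_a}=S_{DA_a}$, whereas you nest the disintegration ($t$ first, then $s$) and re-derive the lemma's conclusion via the mean-zero decomposition $g=c_g+g_0$; these are the same argument in different clothing.
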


Theorem \ref{PMLR:theorem:comparison} is proven in Appendix \ref{Apendix:Comparison} and yields that delayed acceptance slice sampling performs better than delayed acceptance Metropolis-Hastings with $\pi_0$-reversible proposals. In particular, it implies that the asymptotic variance of the path average estimator is larger when DA-MH is used instead of DA-SS, when starting w.r.t. $\pi$. We illustrate the latter observation in the following numerical experiment. 

\begin{example}\label{exam:DASS} 
    Consider $G=\mathbb R$, the Lebesgue measure as reference measure, i.e. $\pi_0=\lambda_1$ and the unnormalized densities
\begin{align*}
    \varrho(x)=\exp(\lvert x \rvert - \lvert x \rvert^2/2), \qquad \rhoapp(x)=\exp(-\lvert x \rvert^2/2).
\end{align*}
    For the representation of the normalized densities, see Figure \ref{fig:Comparison}. We apply delayed acceptance ideal slice sampling and the delayed acceptance version of the Gaussian Random Walk MH with the proposal kernel $Q(x, \cdot)=\mathrm{N}(0,s^2)$, where $s>0$ denotes a step size parameter, to sample approximately from the resulting probability measure $\pi$. 
    We choose the step size $s$ so that the acceptance rate for the delayed acceptance Gaussian Random Walk is around $30\%$. We set $f(x):=x$ and run both algorithms $n=10^6$ iterations after a burn-in of $n_0=10^5$ iterations (believing this is `almost' starting w.r.t. $\pi$) and compare the estimated asymptotic variances. We obtain
    \[          
    \widehat{V}(f, S_{DA}) = 2.2912, \qquad         \widehat{V}(f, M_{DA}) = 31.1482,     
    \] 
    that is, the estimated asymptotic variance for DA-SS is significantly lower than for DA-MH. In addition, the histograms of the Markov chain realizations in Figure \ref{fig:Comparison} further illustrate the conclusions of Theorem \ref{PMLR:theorem:comparison} on the convergence 
    behavior. 
    \end{example}

\begin{figure*}[t]
	\begin{minipage}{0.32\linewidth}
		\centering
		\includegraphics[width=1.1\linewidth]{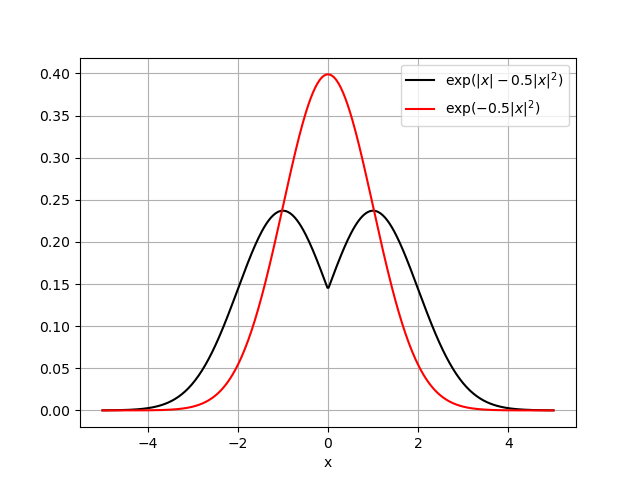}
	\end{minipage}
	\begin{minipage}{0.32\linewidth}
		\centering
		\includegraphics[width=1.1\linewidth]{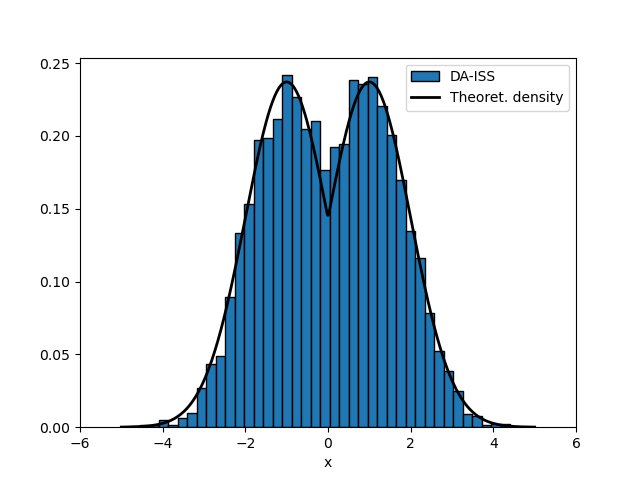}
	\end{minipage}
	\begin{minipage}{0.32\linewidth}
		\centering
		\includegraphics[width=1.1\linewidth]{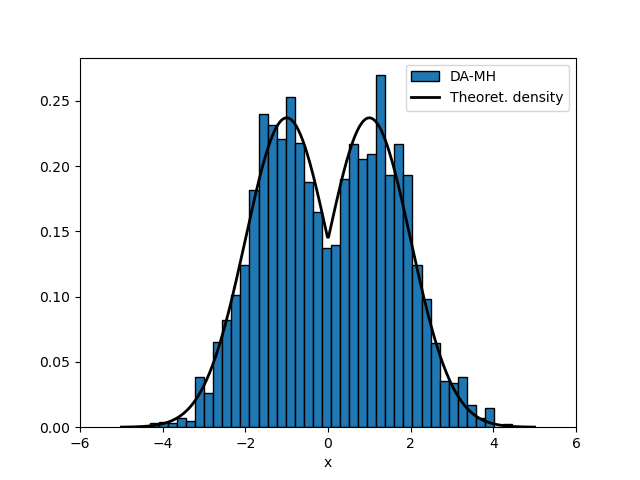}
	\end{minipage}
	\caption{Normalized densities $\varrho$ and $\rhoapp$ (left), histogram of DA-SS simulation compared to theoretical normalized density $\varrho$ (middle) and histogram of DA-MH simulation compared to theoretical normalized density $\varrho$ (right) each for the first $10.000$ iterations of the burn-in.}
	\label{fig:Comparison}
\end{figure*}

However, the difficulty in implementing DA-SS still persists, whereas this disadvantage is not present for the DA-MH algorithm. 
Therefore, the comparison result presented above should be interpreted primarily on a theoretical level and is largely confined to settings in which delayed acceptance ideal slice sampling is implementable, that is, to cases where the normalized restrictions $\widetilde{\pi}_{0,s}$ of the prior to superlevel sets of $\rhoapp$ can be simulated directly. 
In order to derive implementable delayed acceptance slice sampling algorithms, we again turn to hybrid slice sampling.

\section{Delayed Acceptance Hybrid Slice Sampling}\label{sec:DAHSS}
As 
a direct simulation of 
$\widetilde{\pi}_{0,s}$ 
in step 4 of Algorithm \ref{alg:DASS} is in general not possible, we assume to have access to
a family $(\widetilde H_s)_{s \in (0, \|\varrho_\text{app}\|_\infty)}$ of $\widetilde{\pi}_{0,s}$-reversible transition kernels $\widetilde H_s$ and replace step 4 of Algorithm \ref{alg:DASS} by  

\vspace*{0.2cm}

\begin{algorithmic}[1]
\setcounter{ALG@line}{3}
	\State Draw sample $y$ of $\widetilde H_s(x,\cdot)$.
\end{algorithmic}
\vspace*{0.2cm}
This gives a first version of delayed acceptance hybrid slice sampling (DA-HSS) with corresponding transition kernel
\begin{align*}
	\widetilde{H}_{DA}(x,A)= \frac{1}{\varrho(x)}\int_0^{\widehat{\varrho}(x)} \int_0^{\rhoapp(x)} \frac{\widetilde H_s(x,A \cap G_{t, \widehat{\varrho}})}{\widetilde H_s(x,G_{t, \widehat{\varrho}})}  \, \mathrm{d}s  \, \mathrm{d}t,
\end{align*}
where $x\in G$, $A\in\mathcal{B}(G)$, which follows due to the representation of $\pi_{0,t,s}$ by 
\[
\pi_{0,t,s}(A)=\frac{\widetilde{\pi}_{0,s}(A \cap G_{t, \widehat{\varrho}})}{\widetilde{\pi}_{0,s}(G_{t, \widehat{\varrho}})}, \quad A\in\mathcal{B}(G).
\]
The transition kernel $\widetilde{H}_{DA}$ is well-defined, if Assumption \ref{assum:positivity_condition} holds and if we have for all $x\in G$, $s\in(0, \rhoapp(x))$ and $t\in (0,\widehat{\varrho}(x))$ that $\widetilde H_s(x,G_{t, \widehat{\varrho}})>0$. For the hybrid samplers considered in this paper, this can be shown under relatively mild conditions.

Another approach to define delayed acceptance hybrid slice sampling is via the representation \eqref{PMLR:DASS:transitionkernel} of $S_{DA}$. Since often it is not possible to sample from $\pi_{0,t,s}$ directly, we now assume to have a family $(H_{t,s})_{{s \in (0, \|\varrho_\text{app}\|_\infty)},{t \in (0, \|\widehat{\varrho}\|_\infty)}}$ of $\pi_{0,t,s}$-reversible transition kernels and set
\begin{align}
    \label{PMLR:DAHSS:transitionkernel}
	H_{DA}(x,A)
    := \frac{1}{\varrho(x)}\int_0^{\widehat{\varrho}(x)} \int_0^{\rhoapp(x)} H_{t,s}(x,A)  \, \mathrm{d}s  \, \mathrm{d}t,
\end{align}
where $x\in G$, $A\in\mathcal{B}(G)$.
Algorithmically this transition kernel corresponds to replacing steps 3 to 5 in Algorithm \ref{alg:DASS} by the single step
\begin{algorithmic}[1]
	\setcounter{ALG@line}{3}
	\State Draw sample $y$ of $H_{t,s}(x,\cdot)$.
\end{algorithmic}
This version of delayed acceptance hybrid slice sampling appears to be more promising in terms of computational efficiency, see the following remark.

\begin{remark}[Computational aspects of DA-HSS]
Any common hybrid slice sampling method can be applied to draw approximate samples of $\widetilde{\pi}_{0,s}$ or $\pi_{0,t,s}$, respectively. 
However, in variant $\widetilde{H}_{DA}$ of DA-HSS we apply hybrid slice sampling within a while-loop and, hence, possibly multiple times, whereas in the algorithmic implementation of $H_{DA}$ we apply hybrid slice sampling exactly once per step in order to generate a sample approximately distributed according to $\pi_{0,t,s}$.
In particular, many common hybrid slice samplers involve a shrinkage procedure on one-dimensional sets. 
For $\widetilde{H}_{DA}$ such a shrinkage procedure looks for a suitable $y \in G_{s,\rhoapp}$, and thus, would have to be started for any single try within the while-loop until a $y \in G_{s,\rhoapp} \cap  G_{t,\widehat{\varrho}}$ is found.
On the other hand, for $H_{DA}$ we can apply a single shrinkage procedure: firstly find $y \in G_{s,\rhoapp}$ and then apply the shrinkage a bit further until we obtain $y \in G_{s,\rhoapp} \cap  G_{t,\widehat{\varrho}}$.
Therefore, $H_{DA}$ seems to be the more advantageous variant of DA-HSS compared to $\widetilde{H}_{DA}$.
\end{remark}

The latter arguments let us focus on $H_{DA}$ as general delayed acceptance hybrid slice sampling. The next result provides a statement of reversibility of $H_{DA}$.

\begin{proposition}\label{prop:DAHSS:reversible}
Let Assumption \ref{assum:positivity_condition} be satisfied and let $(H_{t,s})_{{s \in (0, \|\varrho_\text{app}\|_\infty)},{t \in (0, \|\widehat{\varrho}\|_\infty)}}$ be a family of $\pi_{0,t,s}$-reversible transition kernels. Then, $H_{DA}$ given as in \eqref{PMLR:DAHSS:transitionkernel} is $\pi$-reversible. If 
the members 
$H_{t,s}$ of the family are only assumed to be $\pi_{0,t,s}$-invariant (instead of $\pi_{0,t,s}$-reversible), then $H_{DA}$ is $\pi$-invariant.    
\end{proposition}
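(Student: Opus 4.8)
The plan is to verify $\pi$-reversibility of $H_{DA}$ directly from the definition \eqref{PMLR:DAHSS:transitionkernel} by checking that
\[
\int_A H_{DA}(x,B)\,\pi(\mathrm dx) = \int_B H_{DA}(x,A)\,\pi(\mathrm dx)
\]
for all $A,B\in\mathcal B(G)$, and then to note that $\pi$-reversibility implies $\pi$-invariance, with the invariance-only case handled by the same computation omitting the final symmetry step. First I would write $\pi(\mathrm dx)=\tfrac1Z\varrho(x)\pi_0(\mathrm dx)$ and substitute the definition of $H_{DA}$, so that the left-hand side becomes
\[
\frac1Z\int_A \int_0^{\widehat\varrho(x)}\int_0^{\rhoapp(x)} H_{t,s}(x,B)\,\mathrm ds\,\mathrm dt\;\pi_0(\mathrm dx),
\]
where the factor $\varrho(x)$ from $\pi$ cancels the $1/\varrho(x)$ in $H_{DA}$. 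The key move is to recognize $\int_0^{\widehat\varrho(x)}\int_0^{\rhoapp(x)}(\cdot)\,\mathrm ds\,\mathrm dt$ together with the indicator of $A$ as an integral over the region $\{(x,s,t): x\in A,\ 0<s<\rhoapp(x),\ 0<t<\widehat\varrho(x)\}$; by Tonelli (all integrands are nonnegative) this equals
\[
\frac1Z\int_0^{\|\widehat\varrho\|_\infty}\int_0^{\|\rhoapp\|_\infty}\int_{A\cap G_{s,\rhoapp}\cap G_{t,\widehat\varrho}} H_{t,s}(x,B)\,\pi_0(\mathrm dx)\,\mathrm ds\,\mathrm dt,
\]
using that $x\in G_{s,\rhoapp}\cap G_{t,\widehat\varrho}$ is exactly the condition $s<\rhoapp(x)$ and $t<\widehat\varrho(x)$.

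Next I would bring in the definition of $\pi_{0,t,s}$: for fixed $(s,t)$ the restricted reference measure on the inner set is $c_{t,s}\,\pi_{0,t,s}$ with $c_{t,s}:=\pi_0(G_{s,\rhoapp}\cap G_{t,\widehat\varrho})$, which is finite and, by Assumption \ref{assum:positivity_condition}, strictly positive for all relevant $(s,t)$. Hence $\int_{A\cap G_{s,\rhoapp}\cap G_{t,\widehat\varrho}} H_{t,s}(x,B)\,\pi_0(\mathrm dx) = c_{t,s}\int_{A} H_{t,s}(x,B\cap G_{s,\rhoapp}\cap G_{t,\widehat\varrho})\,\pi_{0,t,s}(\mathrm dx)$, where I have also used that $H_{t,s}$ is a transition kernel of the Markov chain living on that set, so $H_{t,s}(x,B)=H_{t,s}(x,B\cap G_{s,\rhoapp}\cap G_{t,\widehat\varrho})$ for $\pi_{0,t,s}$-a.e.\ $x$ — more precisely this requires that $H_{t,s}(x,\cdot)$ is supported on the slice, which is part of what "$\pi_{0,t,s}$-reversible transition kernel" should mean here. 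Then $\pi_{0,t,s}$-reversibility of $H_{t,s}$ gives, for each fixed $(s,t)$,
\[
\int_{A} H_{t,s}(x, B')\,\pi_{0,t,s}(\mathrm dx) = \int_{B} H_{t,s}(x, A')\,\pi_{0,t,s}(\mathrm dx),
\]
with $A'=A\cap G_{s,\rhoapp}\cap G_{t,\widehat\varrho}$, $B'=B\cap G_{s,\rhoapp}\cap G_{t,\widehat\varrho}$. Reinserting $c_{t,s}$, undoing the normalization, and integrating back over $s$ and $t$ reverses the chain of equalities and produces $\int_B H_{DA}(x,A)\,\pi(\mathrm dx)$, establishing reversibility. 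For the invariance-only statement I would stop before invoking reversibility of $H_{t,s}$: $\pi_{0,t,s}$-invariance gives $\int_G H_{t,s}(x,B')\,\pi_{0,t,s}(\mathrm dx)=\pi_{0,t,s}(B')$ (taking $A=G$), and tracing this back through the same Tonelli rearrangement with $\pi$ in place of the path average yields $\pi H_{DA}(B)=\pi(B)$.

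The main obstacle is bookkeeping the "support on the slice" point carefully: the kernels $H_{t,s}$ are defined only as kernels targeting $\pi_{0,t,s}$, so I must either adopt the convention that a $\pi_{0,t,s}$-invariant kernel satisfies $H_{t,s}(x,G\setminus(G_{s,\rhoapp}\cap G_{t,\widehat\varrho}))=0$, or argue that the value of $H_{DA}(x,A)$ only depends on $H_{t,s}(x,A\cap G_{s,\rhoapp}\cap G_{t,\widehat\varrho})$ for the pairs $(s,t)$ that actually occur (namely $s<\rhoapp(x)$, $t<\widehat\varrho(x)$, so $x$ itself lies in the slice). A secondary technical point is measurability of $(x,s,t)\mapsto H_{t,s}(x,A)$ jointly, which is needed to apply Tonelli; I would assume this as part of the standing hypotheses on the family $(H_{t,s})$, consistent with how hybrid slice samplers are treated elsewhere in the paper (e.g.\ \cite[Lemma~1]{LatuszynskiRudolf2024}). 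Everything else is the routine Tonelli manipulation sketched above, essentially identical in structure to the reversibility proof for hybrid slice sampling and for $S_{DA}$ in Proposition \ref{PMLR:prop:hybrid}.
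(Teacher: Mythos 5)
Your proposal is correct and follows essentially the same route as the paper's proof: cancel $\varrho(x)$ against $1/\varrho(x)$, use Tonelli and the identity $\mathbb{1}_{(0,\rhoapp(x))}(s)\,\mathbb{1}_{(0,\widehat\varrho(x))}(t)=\mathbb{1}_{G_{s,\rhoapp}\cap G_{t,\widehat\varrho}}(x)$ to pull the $(s,t)$-integration outside, renormalize by $\pi_0(G_{s,\rhoapp}\cap G_{t,\widehat\varrho})$ to obtain $\pi_{0,t,s}$, apply the reversibility (resp.\ invariance) of $H_{t,s}$, and reverse the chain; the invariance case likewise ends at $\int_0^\infty\int_0^\infty\pi_0(A\cap G_{s,\rhoapp}\cap G_{t,\widehat\varrho})\,\mathrm ds\,\mathrm dt = Z\,\pi(A)$. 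Your extra care about $H_{t,s}(x,\cdot)$ being supported on the slice is not needed, since the paper's definition of $\pi_{0,t,s}$-reversibility already holds for arbitrary $A,B\in\mathcal B(G)$ and can be applied directly without replacing $B$ by $B\cap G_{s,\rhoapp}\cap G_{t,\widehat\varrho}$.
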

The proof of Proposition \ref{prop:DAHSS:reversible} can be found in Appendix \ref{app:DAHSS:reversible}.
To derive ergodicity statements we can again use Proposition~\ref{Schaer:Convergence}.

\begin{proposition}
    \label{prop:DAHSS:ergodic}
    Let Assumption \ref{assum:positivity_condition} be satisfied and let $(H_{t,s})_{{s \in (0, \|\varrho_\text{app}\|_\infty)},{t \in (0, \|\widehat{\varrho}\|_\infty)}}$ be a family of $\pi_{0,t,s}$-invariant transition kernels.
    \begin{enumerate}
        \item \label{statement1}
        If for any $x\in G$ and any $A\in\mathcal{B}(G)$ with $\pi(A)>0$ there exist $T_{x,A}\in(0, \widehat{\varrho}(x)]$ and $S_{x,A}\in(0, \rhoapp(x)]$ such that for almost all $t\in(0,T_{x,A})$ and almost all $s\in(0,S_{x,A})$ we have 
        \[
           H_{t,s}(x,A)>0,
        \]
        then $H_{DA}$ is $\pi$-almost everywhere ergodic.
        \item \label{statement2}
        If in addition to 1. we have for any $x\in G$ and (almost) all $t\in(0,\widehat{\varrho}(x))$ as well as (almost) all $s\in(0, \rhoapp(x))$ that
        \[
        H_{t,s}(x,\cdot) \ll \pi,
        \]    
        then $H_{DA}$ is ergodic.
    \end{enumerate}
\end{proposition}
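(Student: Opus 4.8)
The plan is to verify the two sufficient conditions of Proposition \ref{Schaer:Convergence} for the transition kernel $H_{DA}$ given by \eqref{PMLR:DAHSS:transitionkernel}. By Proposition \ref{prop:DAHSS:reversible}, $\pi$-invariance of $H_{DA}$ is already secured from the $\pi_{0,t,s}$-invariance of the family $(H_{t,s})$, so only the absolute continuity relations \eqref{eq:ergodic:abscon1} and \eqref{eq:ergodic:abscon2} need to be established; statement \ref{statement1} then follows from the first part of Proposition \ref{Schaer:Convergence} and statement \ref{statement2} from its second part.

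For statement \ref{statement1}, fix $x \in G$ and $A \in \mathcal{B}(G)$ with $\pi(A) > 0$. First I would recall from \eqref{PMLR:DAHSS:transitionkernel} that
\begin{align*}
	H_{DA}(x,A) = \frac{1}{\varrho(x)} \int_0^{\widehat{\varrho}(x)} \int_0^{\rhoapp(x)} H_{t,s}(x,A) \, \mathrm{d}s \, \mathrm{d}t,
\end{align*}
so it suffices to exhibit a subset of the $(t,s)$-rectangle $(0,\widehat{\varrho}(x)) \times (0,\rhoapp(x))$ of positive $\lambda_2$-measure on which the integrand $H_{t,s}(x,A)$ is strictly positive. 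By hypothesis there exist $T_{x,A} \in (0,\widehat{\varrho}(x)]$ and $S_{x,A} \in (0,\rhoapp(x)]$ such that $H_{t,s}(x,A) > 0$ for almost all $(t,s) \in (0,T_{x,A}) \times (0,S_{x,A})$, a set of positive $\lambda_2$-measure; since the inner and outer integrands are nonnegative, Tonelli's theorem gives $H_{DA}(x,A) > 0$. Hence \eqref{eq:ergodic:abscon1} holds for every $x$, and Proposition \ref{Schaer:Convergence} yields $\pi$-almost everywhere ergodicity.

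For statement \ref{statement2}, I would verify \eqref{eq:ergodic:abscon2}, i.e.\ $H_{DA}(x,\cdot) \ll \pi$ for all $x$. Fix $x \in G$ and $A \in \mathcal{B}(G)$ with $\pi(A) = 0$. By the additional hypothesis, $H_{t,s}(x,\cdot) \ll \pi$ for (almost) all $t \in (0,\widehat{\varrho}(x))$ and (almost) all $s \in (0,\rhoapp(x))$, so $H_{t,s}(x,A) = 0$ for $\lambda_2$-almost every $(t,s)$ in the rectangle; the double integral in \eqref{PMLR:DAHSS:transitionkernel} therefore vanishes, giving $H_{DA}(x,A) = 0$. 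Together with the already established \eqref{eq:ergodic:abscon1}, the second part of Proposition \ref{Schaer:Convergence} applies and $H_{DA}$ is ergodic. The only point requiring a little care—though entirely routine—is the measurability in $(t,s)$ of the map $(t,s) \mapsto H_{t,s}(x,A)$ needed to invoke Tonelli/Fubini; this is part of the standing assumptions on the family $(H_{t,s})$ implicit in writing down \eqref{PMLR:DAHSS:transitionkernel}, and is inherited from the analogous measurability already used for the kernels $H_t$ in Proposition \ref{PMLR:prop:hybrid}. I do not anticipate a genuine obstacle here; the proposition is essentially a bookkeeping application of Proposition \ref{Schaer:Convergence} once the integral representation \eqref{PMLR:DAHSS:transitionkernel} is in hand.
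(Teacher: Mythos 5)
Your argument is correct and coincides with the paper's own proof: both verify conditions \eqref{eq:ergodic:abscon1} and \eqref{eq:ergodic:abscon2} of Proposition \ref{Schaer:Convergence} by bounding the double integral in \eqref{PMLR:DAHSS:transitionkernel} from below over the rectangle $(0,T_{x,A})\times(0,S_{x,A})$ for statement \ref{statement1}, and by noting the integrand vanishes almost everywhere for statement \ref{statement2}. Your explicit appeal to Proposition \ref{prop:DAHSS:reversible} for the $\pi$-invariance and your remark on the joint measurability of $(t,s)\mapsto H_{t,s}(x,A)$ are points the paper leaves implicit, but they do not change the route.
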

\begin{proof}
   The  
   $\pi$-almost everywhere ergodicity follows by Proposition \ref{Schaer:Convergence} given that 
    for $A\in\mathcal{B}(G)$ with $\pi(A)>0$ we have 
    \begin{align*}
        H_{DA}(x,A) & = \frac{1}{\varrho(x)}\int_0^{\widehat{\varrho}(x)} \int_0^{\rhoapp(x)} H_{t,s}(x,A)  \, \mathrm{d}s  \, \mathrm{d}t\\
        & \geq \frac{1}{\varrho(x)}\int_0^{T_{x,A}} \int_0^{S_{x,A}} H_{t,s}(x,A)  \, \mathrm{d}s  \, \mathrm{d}t>0.
    \end{align*}
    If the additional assumption in \ref{statement2}. is satisfied, then for any $A\in\mathcal{B}(G)$ with $\pi(A)=0$ we also have
    \begin{align*}
        H_{DA}(x,A) & = \frac{1}{\varrho(x)}\int_0^{\widehat{\varrho}(x)} \int_0^{\rhoapp(x)} H_{t,s}(x,A)  \, \mathrm{d}s  \, \mathrm{d}t
        =
        0
    \end{align*}
    implying ergodicity by Proposition \ref{Schaer:Convergence}.
\end{proof}

\begin{remark}
In Proposition \ref{prop:DAHSS:ergodic} we could replace the assumption formulated in \ref{statement2}. also by
\[
    \forall x\in G \ 
    \forall s\in(0, \rhoapp(x)) \ 
    \forall t\in(0,\widehat{\varrho}(x)) \ \colon \
    H_{t,s}(x,\cdot) \ll \pi_{0,t,s}
\]
and the assumptions formulated in \ref{statement1}. by ``\,for all $x\in G$ exists $S_x \in (0, \rhoapp(x)]$ and $T_x \in (0, \widehat{\varrho}(x)]$ such that 
\[
    \forall s\in(0, S_x) \ 
    \forall t\in(0,T_x) \ \colon \ 
    \pi_{0,t,s} \ll H_{t,s}(x,\cdot)\text{"}.
\]
\end{remark}

\noindent We provide delayed acceptance algorithms of common hybrid slice samplers and discuss whether the assumptions of Propositions \ref{prop:DAHSS:reversible} and \ref{prop:DAHSS:ergodic} are satisfied. 

The hybrid slice sampling algorithms we consider follow a common structure. They assume a certain reference measure $\pi_0$, e.g., a Gaussian distribution or the Lebesgue measure, and, given a current state $x\in G$ and a certain measurable subset $B \subset G \subseteq \mathbb{R}^d$ with $\pi_0(B)>0$ they generate approximate samples of $\pi_0$ restricted to $B$ (in the sense of samples drawn from a transition kernel $H(x, \cdot)$ which is invariant w.r.t.~$\pi_0$ restricted to $B$).
In hybrid slice sampling we have $B = G_{t,\varrho}$ and for delayed acceptance hybrid slice sampling $B = G_{s,\rhoapp} \cap G_{t,\widehat{\varrho}}$ where the latter allows for a reduction in computational cost as discussed in the following. 

\subsection{Delayed Acceptance Elliptical Slice Sampling}
\label{sec:DA-ESS}
Elliptical slice sampling (abbreviated as ESS) was introduced by \citet{MurrayEtAl2010} and assumes a Gaussian reference measure $\pi_0=\mathrm{N}(0,C)$ on $G=\mathbb{R}^d$. 
In order to generate approximate samples from $\pi_0$ restricted to a set $B\subset G$ with $\pi_0(B)>0$, ESS takes a sample $v$ from $\pi_0$ and considers the ellipse defined by the function $p: \mathbb{R}^d \times \mathbb{R}^d \times [0,2\pi] \rightarrow \mathbb{R}^d$,
\begin{align*}
	p(x, v, \theta) := \cos(\theta) \ x + \sin(\theta)\ v, \quad \theta\in[0,2\pi)
\end{align*}
where $x\in B$ denotes the current state of the Markov chain.
ESS then generates randomly a point $y = p(x,v,\theta)$ on the ellipse by uniformly sampling $\theta$, and checks if it belongs to $B$ in which case the point $y$ is the output.
To conveniently refer to the angles $\theta$ for which the corresponding point on the ellipse is an element of $B$, we define
\begin{align*}\label{eq:theta_B_def}
   \Theta_B(x,v):=\{\theta\in[0,2\pi): \, p(x, v, \theta)\in B\}.
\end{align*}
To increase efficiency, i.e. for finding such a $\theta\in\Theta_B(x,v)$, a shrinkage procedure is applied. 
Specifically, the segments of the ellipse between rejected points, which do not contain the current state $x$, are excluded. New points are then drawn randomly from the remaining part of the ellipse. 
For details on the transition kernel of the ESS and a proof of its $\pi$-reversibility we refer to \cite{HasenpflugEtAl2023}. 
Furthermore, a convergence analysis in terms of geometric ergodicity of ESS can be found in \cite{NatarovskiiEtAl2021}.

For \emph{delayed acceptance elliptical slice sampling} (DA-ESS) the set $B$ is the intersection of two superlevel sets, i.e. $B = G_{s,\rhoapp} \cap G_{t,\widehat{\varrho}}$.
This can be exploited by first checking if the proposed point $y = p(x,v,\theta)$ belongs to $G_{s,\rhoapp}$ and only if this is the case, a checking whether $y\in G_{t,\widehat{\varrho}}$ takes place, since the latter is more costly.
For this we use a delayed acceptance shrinkage procedure, which is generally formulated  in Algorithm~\ref{alg:DA-Shrinkage}, since we also use it in other DA-HSS methods later.

\begin{algorithm}
    \caption{Delayed Acceptance Shrinkage Routine $\mathtt{DA\text{\_}Shrink}$}\label{alg:DA-Shrinkage}
    \renewcommand{\algorithmicrequire}{\textbf{Input:}}
	\renewcommand{\algorithmicensure}{\textbf{Output:}}
    \begin{algorithmic}[1]
        \Require 
        sets $V$ and $W$, reference point $z$, interval bounds $l$ and $r$ 
	    \Ensure updated reference point $z$

        \While{true}
        \If{$z\in V$}
		\If{$z\in W$}
        \State  \textbf{break}
		\EndIf
		\EndIf
		\State \textbf{do Shrinkage:}
		\State \phantom{....} \textbf{if} $z<0$ \textbf{then} Set $l \leftarrow z$ \textbf{else} Set $r\leftarrow z$
		\State \phantom{....} Draw sample $z$ of $\mathrm{U}(l,r)$
	
        \EndWhile
    \end{algorithmic}
\end{algorithm}

Using the delayed acceptance shrinkage routine, we can now provide an algorithmic description of DA-ESS in Algorithm \ref{alg:DAESS}.

\begin{algorithm}
	\caption{Delayed Acceptance Elliptical Slice Sampling}\label{alg:DAESS}
	\renewcommand{\algorithmicrequire}{\textbf{Input:}}
	\renewcommand{\algorithmicensure}{\textbf{Output:}}
	\begin{algorithmic}[1]
		\Require current state $X_n = x$
	    \Ensure next state $X_{n+1} = y$
		\State Draw sample $s$ of $\mathrm{U}(0,\rhoapp(x))$ and $t$ of $\mathrm{U}(0,\widehat{\varrho}(x))$
	    \State Draw sample $v$ of $\mathrm{N}(0,C)$ and $\theta$ of $\mathrm{U}(0,2\pi)$
	    \State Set $\theta_{\text{min}}\leftarrow \theta-2\pi$, $\theta_{\text{max}}\leftarrow \theta$
        \State Set $\theta \leftarrow \mathtt{DA\text{\_}Shrink}(\Theta_{G_{s,\rhoapp}}(x,v), \Theta_{G_{t,\widehat{\varrho}}}(x,v), \theta, \theta_{\text{min}}, \theta_{\text{max}})$
        \State Set $y \leftarrow p(x, v, \theta)$ 
	\end{algorithmic}
\end{algorithm}

We can extend the reversibility result for ESS given by \citet{HasenpflugEtAl2023} to DA-ESS and also the assumption of statement \ref{statement1}. of Proposition \ref{prop:DAHSS:ergodic} can be verified for DA-ESS.

\begin{theorem}\label{theo:DAESS}
Let $\rhoapp$ and $\widehat{\varrho}$ be lower semi-continuous and let
Assumption~\ref{assum:positivity_condition} be satisfied. Then, the transition kernel that corresponds to Algorithm~\ref{alg:DAESS}, i.e. the DA-ESS method, is $\pi$-reversible and $\pi$-almost everywhere ergodic.
\end{theorem}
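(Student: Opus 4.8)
The plan is to identify Algorithm~\ref{alg:DAESS} with the delayed acceptance hybrid slice sampler $H_{DA}$ of \eqref{PMLR:DAHSS:transitionkernel} whose inner kernels $H_{t,s}$ are the elliptical slice sampling kernels (with shrinkage) targeting the reference measure $\pi_0=\mathrm N(0,C)$ restricted to the intersected superlevel set $B_{s,t}:=G_{s,\rhoapp}\cap G_{t,\widehat\varrho}$, and then to invoke Propositions~\ref{prop:DAHSS:reversible} and~\ref{prop:DAHSS:ergodic}. For this first step one notes that, since $s$ is drawn from $\mathrm U(0,\rhoapp(x))$ and $t$ from $\mathrm U(0,\widehat\varrho(x))$, the current state always satisfies $\rhoapp(x)>s$ and $\widehat\varrho(x)>t$, hence $x\in B_{s,t}$, while Assumption~\ref{assum:positivity_condition} guarantees $\pi_0(B_{s,t})>0$. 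Moreover, the delayed acceptance shrinkage routine (Algorithm~\ref{alg:DA-Shrinkage}) invoked with $V=\Theta_{G_{s,\rhoapp}}(x,v)$ and $W=\Theta_{G_{t,\widehat\varrho}}(x,v)$ terminates precisely when $z\in V\cap W=\Theta_{B_{s,t}}(x,v)$ and otherwise performs the standard ESS bracket contraction started at $(\theta_{\min},\theta_{\max})=(\theta-2\pi,\theta)$; thus it realizes exactly the ordinary elliptical slice sampling transition on the set $B_{s,t}$, the two-stage test merely reordering the cheap membership check for $G_{s,\rhoapp}$ before the costly one for $G_{t,\widehat\varrho}$. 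Consequently the transition kernel of Algorithm~\ref{alg:DAESS} equals $H_{DA}$ with $H_{t,s}$ the ESS-with-shrinkage kernel on $B_{s,t}$, which is $\pi_{0,t,s}$-invariant.

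For $\pi$-reversibility I would argue as follows. Lower semi-continuity of $\rhoapp$ and $\widehat\varrho$ makes each superlevel set $G_{s,\rhoapp}$, $G_{t,\widehat\varrho}$, and hence $B_{s,t}$, open; since $p(x,v,0)=x\in B_{s,t}$, the angular set $\Theta_{B_{s,t}}(x,v)$ is an open subset of $[0,2\pi)$ containing a neighbourhood of $0$, so the shrinkage terminates almost surely and $H_{t,s}$ is well defined. Applying the reversibility result for elliptical slice sampling of \citet{HasenpflugEtAl2023} with the open set $B_{s,t}$ in the role of a generic superlevel set yields that $H_{t,s}$ is $\pi_{0,t,s}$-reversible. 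Proposition~\ref{prop:DAHSS:reversible} then delivers $\pi$-reversibility of $H_{DA}$, i.e.\ of DA-ESS.

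For $\pi$-almost everywhere ergodicity I would verify the hypothesis of statement~\ref{statement1}.\ of Proposition~\ref{prop:DAHSS:ergodic}. Fix $x\in G$ and $A\in\mathcal B(G)$ with $\pi(A)>0$; then $\pi_0(A)>0$ because $\varrho>0$. As $\varrho=\widehat\varrho\,\rhoapp$ with both factors strictly positive on $G$, we have $G_{1/n,\rhoapp}\uparrow G$ and $G_{1/n,\widehat\varrho}\uparrow G$, so continuity from below of $\pi_0$ furnishes some $n$ with $\pi_0\bigl(A\cap G_{1/n,\rhoapp}\cap G_{1/n,\widehat\varrho}\bigr)>0$; set $S_{x,A}:=\min\{1/n,\rhoapp(x)\}\in(0,\rhoapp(x)]$ and $T_{x,A}:=\min\{1/n,\widehat\varrho(x)\}\in(0,\widehat\varrho(x)]$. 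Then for every $s\in(0,S_{x,A})$ and $t\in(0,T_{x,A})$ we have $x\in B_{s,t}$ and $\pi_{0,t,s}(A)=\pi_0(A\cap B_{s,t})/\pi_0(B_{s,t})>0$, and the positivity (full angular reachability) of the ESS kernel on the open set $B_{s,t}$, which underlies the geometric ergodicity analysis of \citet{NatarovskiiEtAl2021}, gives $H_{t,s}(x,A)>0$. Proposition~\ref{prop:DAHSS:ergodic}, statement~\ref{statement1}., then yields $\pi$-almost everywhere ergodicity, completing the proof.

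The main obstacle is the positivity claim $H_{t,s}(x,A)>0$ for the ESS-with-shrinkage kernel: one must show that, starting from $x\in B_{s,t}$, drawing $v\sim\mathrm N(0,C)$ (which has full support) and $\theta\sim\mathrm U(0,2\pi)$, and then running the shrinkage, the terminal point lands in $A$ with positive probability whenever $\pi_0(A\cap B_{s,t})>0$. This needs the openness of $B_{s,t}$ (so that $\Theta_{B_{s,t}}(x,v)$ is open and the shrinkage converges with positive probability into any open angular subinterval) together with a Fubini argument translating positive $\pi_0$-mass of $A\cap B_{s,t}$ into positive angular mass for a positive-$\mathrm N(0,C)$-mass set of directions $v$; this is exactly the content that has to be carried over from the ESS convergence literature, now with $B_{s,t}$ in place of a single superlevel set.
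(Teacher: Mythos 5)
Your overall architecture matches the paper's proof exactly: identify Algorithm~\ref{alg:DAESS} with $H_{DA}$ whose on-slice kernels $H_{t,s}$ are ESS-with-shrinkage kernels targeting $\pi_{0,t,s}$ on $B_{s,t}=G_{s,\rhoapp}\cap G_{t,\widehat\varrho}$, obtain their $\pi_{0,t,s}$-reversibility by carrying over \citet[Theorem~3.1]{HasenpflugEtAl2023}, conclude $\pi$-reversibility via Proposition~\ref{prop:DAHSS:reversible}, and verify statement~\ref{statement1}.\ of Proposition~\ref{prop:DAHSS:ergodic} using the existence of thresholds $S_{x,A},T_{x,A}$ (which the paper isolates as Proposition~\ref{positive:measure:iff}; your monotone-limit argument for their existence is a fine substitute). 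However, the one step you yourself flag as ``the main obstacle''---the positivity $H_{t,s}(x,A)>0$ for arbitrary measurable $A$ with $\pi_0(A\cap B_{s,t})>0$---is left unproved, with only a pointer to ``the ESS convergence literature.'' That is a genuine gap: it is the only quantitative content of the ergodicity half, and the route you sketch (openness of $\Theta_{B_{s,t}}(x,v)$ plus ``the shrinkage converges with positive probability into any open angular subinterval'') does not directly apply, because $A$ is an arbitrary Borel set, not open, so controlling where the shrinkage \emph{converges to} is not enough to show the terminal point lands in $A$.

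The paper closes this gap with a short self-contained argument that avoids analyzing the shrinkage dynamics altogether: since the loop terminates immediately whenever the \emph{first} proposed angle already yields a point of $B_{s,t}$, one has the lower bound
\begin{align*}
E_{t,s}(x,A)\;\geq\;\frac{1}{2\pi}\int_0^{2\pi}\int_{\mathbb R^d}\mathbb{1}_{A\cap B_{s,t}}\bigl(p(x,v,\theta)\bigr)\,\pi_0(\mathrm d v)\,\mathrm d\theta .
\end{align*}
For fixed $\theta_0\in(0,2\pi)\setminus\{\pi\}$ the map $v\mapsto p(x,v,\theta_0)$ is an invertible affine map, and the image of the nondegenerate Gaussian $\pi_0$ under it is equivalent to Lebesgue measure, hence assigns positive mass to $A\cap B_{s,t}$; integrating over $\theta_0$ gives positivity. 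You should replace your appeal to \citet{NatarovskiiEtAl2021} by this (or an equivalent) explicit argument. Note also that this step does not need openness of $B_{s,t}$; the lower semi-continuity assumption is only needed for the well-definedness/termination of the shrinkage and the reversibility argument, as in your second paragraph.
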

The proof of Theorem \ref{theo:DAESS} can be found in Appendix \ref{app:DAESS}.

\subsection{Delayed Acceptance Hit-and-Run Slice Sampling with Stepping-Out and Shrinkage}\label{sec:DA-HRUSS}

Hit-and-run uniform slice sampling (abbreviated as HRUSS) is a combination of the classical hit-and-run algorithm, cf. \cite{Smith1984}, with slice sampling. Here $\pi_0$ is assumed to be the $d$-dimensional Lebesgue measure on $G\subseteq \mathbb{R}^d$. 
In order to generate an approximate sample of the uniform distribution on a subset $B\subset G$, where $0 < \pi_0(B) < \infty$, HRUSS first draws a random direction $v$ according to $\mathrm{U}(\mathbb{S}_{d-1})$ 
and then, with $x\in B$, generates an approximate sample of the uniform distribution on the one-dimensional segment
\begin{align*}
    L_{B}(x,v)=\{p \in \mathbb R\colon \ x+p\, v \in B \}.
\end{align*}
Since direct sampling of $\mathrm U(L_{B}(x,v))$ is often not efficiently implementable, particularly, when $B$ represents superlevel sets, we propose, as in \cite{LatuszynskiRudolf2024}, to apply the stepping-out and shrinkage method developed in \cite{Neal2003}. 
To this end, an initial interval $[\ell, r]$ around $0$ is gradually enlarged by subtracting, respectively adding, multiples of a step size $w>0$ to $\ell$ and $r$, respectively, until $\ell,r \notin L_{B}(x,v)$. Then, candidates $y=x + p v$ are generated by taking samples $p$ of $\mathrm U(\ell,r)$. If $y \in B$, then $y$ is the output, otherwise the interval $[\ell,r]$ is shrunk and new candidates are generated until the algorithm found a valid candidate $y \in B$.
For further details such as the formulation of the transition kernel, the proof of $\pi$-reversibility and a convergence analysis in terms of geometric ergodicity we refer to \cite{LatuszynskiRudolf2024}.    

For a transition of \emph{delayed acceptance hit-and-run uniform slice sampling} (DA-HRUSS), see Algorithm \ref{alg:DAHRSS}. We again exploit the structure of $B = G_{s,\rhoapp} \cap G_{t,\widehat{\varrho}}$ to reduce the computational cost: the stepping-out procedure is carried out w.r.t. 
\[
    L_{s,\rhoapp}(x,v) := L_{G_{s,\rhoapp}}(x,v) 
    \supset L_{G_{s,\rhoapp}\cap G_{t,\widehat{\varrho}}}(x,v)
\]
and during the shrinkage procedure we first shrink $(\ell,r)$ until we find a $y = x + p v \in G_{s,\rhoapp}$ and then, if $y \notin G_{s,\rhoapp}\cap G_{t,\widehat{\varrho}}$, apply shrinkage further until a valid point $y \in G_{s,\rhoapp}\cap G_{t,\widehat{\varrho}}$ is found. 
This exploits the evaluation of the cheap approximate $\rhoapp$ as much as possible in our setting, before a first evaluation of the costly $\widehat{\varrho}$ is required.

\begin{algorithm}
   \caption{Delayed Acceptance Hit-And-Run Slice Sampling}\label{alg:DAHRSS}
   \renewcommand{\algorithmicrequire}{\textbf{Input:}}
   \renewcommand{\algorithmicensure}{\textbf{Output:}}
   \begin{algorithmic}[1]
		\Require current state $X_n = x$, step size parameter $w>0$
	    \Ensure next state $X_{n+1} = y$
        \State Draw sample $s$ of $\mathrm{U}(0,\rhoapp(x))$, $t$ of $\mathrm{U}(0,\widehat{\varrho}(x))$
        \State Draw sample $v$ of $\mathrm{U}(\mathbb{S}_{d-1})$ and $u$ of $\mathrm{U}(0,1)$ 
        \State Set $\ell \leftarrow -uw$ and $r \leftarrow \ell+w$
        \While{$\ell \in L_{s,\rhoapp}(x,v)$} 
        \State Set $\ell \leftarrow \ell-w$
		\EndWhile
        \While{$r \in L_{s,\rhoapp}(x,v)$} 
        \State Set $r \leftarrow r+w$
		\EndWhile
        \State Draw sample $p$ of $\mathrm{U}(\ell,r)$
        \State Set $p \leftarrow \mathtt{DA\text{\_}Shrink}(L_{G_{s,\rhoapp}}(x,v), L_{G_{t,\widehat{\varrho}}}(x,v), p, \ell, r)$
        \State Set $y \leftarrow x+p\,v$
	\end{algorithmic}
\end{algorithm}

We further formulate the following Assumption.\begin{assumption}\label{assum:bimodal}
   Let $\rhoapp$ be unimodal, i.e. for all $s\in(0,\|\rhoapp\|_\infty)$ the corresponding superlevel set $G_{s,\rhoapp}$ is connected and $\widehat{\varrho}$ be either unimodal, i.e. for all $t\in(0,\|\widehat{\varrho}\|_\infty)$ the corresponding superlevel set $G_{t,\widehat{\varrho}}$ is connected, or bimodal, i.e. there exist levels $t_1, t_2\in(0,\|\widehat{\varrho}\|_\infty)$ with $t_1\leq t_2$, such that
   \begin{enumerate}
       
       \item for all $t\in(0,\|\widehat{\varrho}\|_\infty)\setminus[t_1,t_2)$: \, $G_{t,\widehat{\varrho}}$ is connected and   

       \item for all $t\in [t_1,t_2)$: \, $G_{t,\widehat{\varrho}}=G^{(1)}_{t,\widehat{\varrho}}\cup G^{(2)}_{t,\widehat{\varrho}}$, where $G^{(1)}_{t,\widehat{\varrho}}$ and $G^{(2)}_{t,\widehat{\varrho}}$ are disjoint, have positive Lebesgue measure and $G^{(i)}_{t,\widehat{\varrho}}\subseteq G^{(i)}_{\tilde{t},\widehat{\varrho}}$ for $t_1\leq\tilde{t}\leq t<t_2$, $i=1,2$, holds.     
       
   \end{enumerate}
\end{assumption}

With that at hand we can derive  reversibility for DA-HRUSS and are also able to formulate an ergodicity statement. 

\begin{theorem}\label{theo:DAHRUSS}
Let Assumption~\ref{assum:positivity_condition} and Assumption~\ref{assum:bimodal} be satisfied. Let $G\subset\mathbb R^d$ be an open set and let $\rhoapp, \widehat{\varrho}$ be lower semi-continuous. 
Then, the transition kernel that corresponds to Algorithm~\ref{alg:DAHRSS}, i.e. the DA-HRUSS method, is $\pi$-reversible and ergodic.
\end{theorem}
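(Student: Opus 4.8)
The plan is to view DA-HRUSS as a special case of the delayed acceptance hybrid slice sampler $H_{DA}$ from \eqref{PMLR:DAHSS:transitionkernel}, with $H_{t,s}$ the transition kernel of hit-and-run uniform slice sampling with stepping-out and shrinkage on the set $B_{t,s} := G_{s,\rhoapp}\cap G_{t,\widehat\varrho}$, i.e.\ targeting $\pi_{0,t,s}=\mathrm{U}(B_{t,s})$. First I would check that lines~2--12 of Algorithm~\ref{alg:DAHRSS} really realise a draw from such an $H_{t,s}(x,\cdot)$: the stepping-out in lines~4--9 brackets the larger section $L_{s,\rhoapp}(x,v)\supseteq L_{G_{s,\rhoapp}\cap G_{t,\widehat\varrho}}(x,v)$, and the nested test ``$z\in V$, then $z\in W$'' inside $\mathtt{DA\text{\_}Shrink}$ is logically the single membership test ``$z\in V\cap W$''; since $V\cap W = L_{G_{s,\rhoapp}\cap G_{t,\widehat\varrho}}(x,v)$, the routine reproduces exactly the stepping-out/shrinkage kernel associated with $B_{t,s}$. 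Lower semi-continuity of $\rhoapp,\widehat\varrho$ makes $G_{s,\rhoapp}$, $G_{t,\widehat\varrho}$, hence $B_{t,s}$, open subsets of the open set $G$, and Assumption~\ref{assum:positivity_condition} gives $\pi_0(B_{t,s})=\lambda_d(B_{t,s})>0$, so that $\pi_{0,t,s}$ is well defined. One also has to note that stepping-out and shrinkage terminate almost surely: stepping-out because the line sections of the unimodal $G_{s,\rhoapp}$ are bounded intervals, and shrinkage because $x$ is an interior point of $B_{t,s}$, so the connected component of $L_{B_{t,s}}(x,v)$ around $0$ has positive length and stays inside the shrinking interval.

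For reversibility I would invoke the detailed-balance analysis of the stepping-out/shrinkage hit-and-run kernel in \cite{LatuszynskiRudolf2024}, applied with the target set $B_{t,s}$, to obtain that each $H_{t,s}$ is $\pi_{0,t,s}$-reversible; $\pi$-reversibility of $H_{DA}$ then follows from Proposition~\ref{prop:DAHSS:reversible}. This is the place where Assumption~\ref{assum:bimodal} is used: unimodality of $\rhoapp$ ensures the stepping-out operates on intervals, and the uni-/bimodality of $\widehat\varrho$ together with the nestedness of the two components guarantees that $L_{B_{t,s}}(x,v)$ is a union of at most two intervals, one containing $0$, so the hypotheses of the reversibility result in \cite{LatuszynskiRudolf2024} remain satisfied for every admissible level pair $(s,t)$, including those for which $G_{t,\widehat\varrho}$ is disconnected.

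For ergodicity I would apply Proposition~\ref{prop:DAHSS:ergodic}\,(\ref{statement2}) and verify its two hypotheses. The absolute-continuity requirement is most easily checked in the equivalent form $H_{t,s}(x,\cdot)\ll\pi_{0,t,s}$ (see the remark after Proposition~\ref{prop:DAHSS:ergodic}): a hit-and-run draw $y=x+p\,v$ with $v\sim\mathrm{U}(\mathbb S_{d-1})$ and $p$ possessing a one-dimensional Lebesgue density is $\lambda_d$-absolutely continuous, carries no atom at $x$, and is supported in $B_{t,s}$; since $\pi_{0,t,s}$ is $\lambda_d$ restricted and normalised to $B_{t,s}$, this yields $H_{t,s}(x,\cdot)\ll\pi_{0,t,s}$. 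For the positivity condition of statement~\ref{statement1}, fix $x\in G$ and $A$ with $\pi(A)>0$, equivalently $\lambda_d(A)>0$ since $\varrho>0$. Because the \emph{first} shrinkage candidate is drawn uniformly on the stepped-out interval $(\ell,r)\supseteq L_{B_{t,s}}(x,v)$, \emph{every} $p\in L_{B_{t,s}}(x,v)$ is proposed and immediately accepted with positive probability; a polar-coordinate/Fubini computation then gives $H_{t,s}(x,A)>0$ whenever $\lambda_d(A\cap B_{t,s})>0$. Finally, $(s,t)\mapsto B_{t,s}$ is decreasing with respect to inclusion and $\bigcup_{s\in(0,\rhoapp(x)),\,t\in(0,\widehat\varrho(x))}B_{t,s}=G$ (as $\rhoapp,\widehat\varrho>0$ on $G$), so $\lambda_d(A\cap B_{t,s})\uparrow\lambda_d(A\cap G)>0$ as $s,t\downarrow0$ along a countable cofinal family of levels; hence there are $S_{x,A}\in(0,\rhoapp(x)]$ and $T_{x,A}\in(0,\widehat\varrho(x)]$ with $\lambda_d(A\cap B_{t,s})>0$, and therefore $H_{t,s}(x,A)>0$, for all $s<S_{x,A}$ and $t<T_{x,A}$. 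Proposition~\ref{prop:DAHSS:ergodic}\,(\ref{statement2}) then delivers $\pi$-reversibility and ergodicity of the DA-HRUSS kernel.

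I expect the main obstacle to be pushing the reversibility proof of \cite{LatuszynskiRudolf2024} through for the intersection sets $B_{t,s}$: the bracketing interval produced by the stepping-out only ``sees'' $G_{s,\rhoapp}$, while acceptance is decided on the smaller, possibly disconnected set $B_{t,s}$, so one must re-examine detailed balance of the combined stepping-out plus delayed-acceptance-shrinkage step with respect to $\mathrm{U}(B_{t,s})$ and check that the precise structure imposed by Assumption~\ref{assum:bimodal} (at most two nested components, one containing the current point on each line) is exactly what is needed for the bracket to remain ``valid'' in the sense of that argument. The remaining steps---termination of the procedures, the absolute-continuity verification, and the monotone-exhaustion argument for positivity---are comparatively routine.
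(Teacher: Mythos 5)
Your proposal is correct and follows essentially the same route as the paper: reduce to Proposition~\ref{prop:DAHSS:ergodic} via Proposition~\ref{positive:measure:iff}-type monotonicity in the levels, defer the $\pi_{0,t,s}$-reversibility of the on-slice hit-and-run/stepping-out/shrinkage kernel on $G_{s,\rhoapp}\cap G_{t,\widehat{\varrho}}$ to \cite{LatuszynskiRudolf2024} (with Assumption~\ref{assum:bimodal} controlling the at-most-two-component structure), and check the positivity and absolute-continuity conditions by integration in polar coordinates. The only notable difference is that the paper makes the obstacle you flag explicit by writing the on-slice kernel $R_{t,s}$ as a concrete mixture---with weight $\xi_{t,s}(x,v)$ depending on the gap between the two line-section components---of the uniform distribution on the full section $L_{G_{s,\rhoapp}\cap G_{t,\widehat{\varrho}}}(x,v)$ and the uniform distribution on the component containing the current point, which is precisely the form to which the cited reversibility lemma applies and which also furnishes the quantitative lower bound for positivity that you instead obtain via the first-accepted-candidate argument.
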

The proof of Theorem \ref{theo:DAHRUSS} is stated in Appendix \ref{app:DAHRUSS}. 
Note that the choice of the step size parameter $w>0$ in DA‑HRUSS does not affect ergodicity, provided the assumptions of Theorem \ref{theo:DAHRUSS} are satisfied. 

\subsection{Delayed Acceptance Gibbsian Polar Slice Sampling}\label{sec:DA-GPSS}

Polar slice sampling relies on the reference measure $\pi_0(\mathrm dx) = \|x\|^{1-d} \, \mathrm dx$ on $G \subseteq \mathbb R^d$ and exploits polar coordinates for $y\in G$, i.e. $y = r\,v$ where $v \in \mathbb{S}_{d-1}$ and $r \in [0, \infty)$. 
In order to generate an approximate sample of $\pi_0$ restricted to a subset $B\in\mathcal B(G)$ with $0 < \pi_0(B) < \infty$ we need to sample a suitable $v \in \mathbb{S}_{d-1}$ and $r \in [0, \infty)$ such that $r\,v \in B$.
In Gibbsian polar slice sampling this is done within two separate steps---one for sampling the direction $v$ and one for sampling the radius $r$. 
To this end, we assume $x = r_0\,v_0 \in B$ to be the current state of the Markov chain in polar coordinates.  

For the direction update in GPSS, we first draw a point $v_{\perp}$ according to the uniform distribution on  
\begin{align*}
   \mathbb{S}^{v_0}_{d-2}:=\{v_{\perp} \in\mathbb{S}_{d-1}: v_0^\top v_{\perp} = 0 \}
\end{align*}
and then seek for a next direction
\[
    v 
    = p(v_0, v_\perp, \theta)
    = \cos(\theta) \ v_0 + \sin(\theta)\ v_\perp, \quad \theta\in[0,2\pi)
\]
by finding a $\theta$ of
\[
   \Theta_B(v_0,v_\perp,r_0):=\{\theta\in[0,2\pi): \, r_0\cdot p(v_0, v_\perp, \theta)\in B\}.     
\]
To this end, we apply the same shrinkage procedure as for elliptical slice sampling.

For the radius update, we draw a sample $r$ according to the uniform distribution on the segment 
\begin{align*}
    L_{B}(v):=\{r \in [0, \infty)\colon \ r\, v \in B \}.
\end{align*}
To this end, Neal’s stepping-out and shrinkage procedure \cite{Neal2003} is applied, cf. Section~\ref{sec:DA-HRUSS}.

For the \emph{delayed acceptance Gibbsian polar slice sampling} (DA-GPSS) with $B = G_{s,\rhoapp} \cap G_{t,\widehat{\varrho}}$ we can apply the same cost reduction strategies for both shrinkage procedures as already explained in Section~\ref{sec:DA-ESS} and Section~\ref{sec:DA-HRUSS}. 
A single transition of DA-GPSS is presented in Algorithm \ref{alg:DAGPSS}.

\begin{algorithm}
	\caption{Delayed Acceptance Gibbsian Polar Slice Sampling}\label{alg:DAGPSS}
	\renewcommand{\algorithmicrequire}{\textbf{Input:}}
	\renewcommand{\algorithmicensure}{\textbf{Output:}}
	\begin{algorithmic}[1]
		\Require current state $X_n = x$, where $x = r_0\, v_0$ with $r_{0}\in[0,\infty)$ and $v_{0} \in \mathbb{S}_{d-1}$, initial interval length $w>0$ 
		\Ensure next state $X_{n+1} = y$
		\State Draw sample $s$ of $\mathrm{U}(0,\rhoapp(x))$ and $t$ of $\mathrm{U}(0,\widehat{\varrho}(x))$
        \State Draw sample $v_\perp$ of $\mathrm{U}(\mathbb{S}^{v_0}_{d-2})$, $\theta$ of $\mathrm{U}(0,2\pi)$ and $u$ of $\mathrm{U}(0,1)$  
		\State Set $\theta_{\text{min}} \leftarrow \theta - 2\pi$, $\theta_{\text{max}}\leftarrow \theta$
        \State Set $\theta \leftarrow \mathtt{DA\text{\_}Shrink}(\Theta_{G_{s,\rhoapp}}(v_0,v_\perp, r_0), \Theta_{G_{t,\widehat{\varrho}}}(v_0,v_\perp, r_0), \theta, \theta_{\text{min}}, \theta_{\text{max}})$
        \State Set $v \leftarrow p(v_0, v_\perp, \theta)$
        \State Set $r_\text{min} \leftarrow \text{max}(r_0-u\cdot w, 0)$ and $r_\text{max} \leftarrow r_0+(1-u)\cdot w$
        \While{$r_\text{min}>0$ and $r_\text{min}v \in G_{s,\varrho_{\text{app}}}$} 
        \State Set $r_\text{min} \leftarrow \text{max}(r_\text{min} - w, 0)$
		\EndWhile
        \While{$r_\text{max}v \in G_{s,\varrho_{\text{app}}}$}
        \State Set $r_\text{max} \leftarrow  r_\text{max} + w$
        \EndWhile
        \State Draw sample $r$ of $\mathrm{U}(r_\text{min},r_\text{max})$
        \State Set $r \leftarrow \mathtt{DA\text{\_}Shrink}(L_{G_{s,\rhoapp}}(v), L_{G_{t,\widehat{\varrho}}}(v), r, r_\text{min}, r_\text{max})$
		\State Set $y \leftarrow r \, v$
	\end{algorithmic}
\end{algorithm}

We further assume the following:
\begin{assumption}\label{assum:radius:update}
The update of the radius, i.e. calling the shrinkage procedure in line~14 of Algorithm~~\ref{alg:DAGPSS}, is always realized according to 
$\mathrm{U}(L_{G_{s,\rhoapp} \cap G_{t,\widehat{\varrho}}}(v))$.   
\end{assumption}

The Assumption \ref{assum:radius:update} is satisfied, for example, if we choose $\rhoapp$ in a way, such that $\rhoapp$ and $\widehat{\varrho}$ are unimodal along rays, i.e. for every $v \in \mathbb{S}_{d-1}$, every $s\in(0,\|\rhoapp\|_\infty)$ and every $t\in(0,\|\widehat{\varrho}\|_\infty)$ the sets $L_{G_{s,\varrho_{\text{app}}}}(v)$ and $L_{G_{t,\widehat{\varrho}}}(v)$ are both intervals in $[0,\infty)$ and, hence, also $L_{G_{s,\varrho_{\text{app}}}\cap G_{t,\widehat{\varrho}}}(v)$ is an intervall, i.e., a connected set in $[0,\infty)$.

In this setting we have the following well-definedness result.

\begin{theorem}\label{theo:DAGPSS}
Let Assumption~\ref{assum:positivity_condition} and Assumption~\ref{assum:radius:update} be satisfied. Let $G\subset\mathbb R^d$ be an open set and let $\rhoapp, \widehat{\varrho}$ be lower semi-continuous. 
Then, the transition kernel that corresponds to Algorithm~\ref{alg:DAGPSS}, i.e. the DA-GPSS method, is $\pi$-invariant and $\pi$-almost everywhere ergodic.
\end{theorem}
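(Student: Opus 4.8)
The plan is to apply Proposition \ref{prop:DAHSS:reversible} and Proposition \ref{prop:DAHSS:ergodic} once we have identified the transition kernel of DA-GPSS in the form \eqref{PMLR:DAHSS:transitionkernel}. First I would fix $x = r_0 v_0 \in G$ and observe that, conditionally on the slice levels $s\in(0,\rhoapp(x))$ and $t\in(0,\widehat\varrho(x))$ drawn in line~1, the remaining steps of Algorithm~\ref{alg:DAGPSS} (lines 2--15) define a transition kernel $H_{t,s}(x,\cdot)$ on $G$. The crucial structural fact, already exploited in the GPSS analysis of \cite{SchaerEtAl2023}, is that for $B = G_{s,\rhoapp}\cap G_{t,\widehat\varrho}$ the two-step Gibbs update---direction update via the elliptical shrinkage on $\Theta_B(v_0,v_\perp,r_0)$ followed, under Assumption~\ref{assum:radius:update}, by the radius update via stepping-out/shrinkage according to $\mathrm U(L_B(v))$---produces a sample from $\pi_0$ restricted and normalized to $B$; equivalently, $H_{t,s}(x,\cdot)$ is $\pi_{0,t,s}$-invariant. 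Here I would invoke the correctness of the (delayed acceptance) shrinkage routine $\mathtt{DA\_Shrink}$: because shrinkage only removes sub-intervals not containing the current reference point and $0\in\Theta_B$, $0\in L_B(v)$ by $x\in B$, the routine terminates a.s.\ and returns a point in the respective slice with the correct (uniform-on-the-admissible-set) conditional law---this is where lower semi-continuity of $\rhoapp,\widehat\varrho$ enters, guaranteeing that the superlevel sets are open so that the relevant angle/radius sets are open and the stepping-out halts. Since each Gibbs sub-step preserves $\pi_{0,t,s}$, their composition does too, giving $\pi_{0,t,s}$-invariance of $H_{t,s}$; then Proposition~\ref{prop:DAHSS:reversible} yields $\pi$-invariance of the DA-GPSS kernel $H_{DA}$.

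For $\pi$-almost everywhere ergodicity I would verify statement~\ref{statement1} of Proposition~\ref{prop:DAHSS:ergodic}. Fix $x\in G$ and $A\in\mathcal B(G)$ with $\pi(A)>0$; then $\pi_0(A\cap G_{s,\rhoapp}\cap G_{t,\widehat\varrho})>0$ for all sufficiently small $s,t$ (by Assumption~\ref{assum:positivity_condition} together with $\pi(A)>0$ forcing $\pi_0(A\cap G_{s,\rhoapp}\cap G_{t,\widehat\varrho})>0$ for $s,t$ small enough, since $A\cap G_{s,\rhoapp}\cap G_{t,\widehat\varrho}\uparrow A\cap\{\varrho>0\}=A$ up to $\pi_0$-null sets as $s,t\downarrow 0$). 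It then suffices to show $H_{t,s}(x,A)>0$ for a.e.\ $t\in(0,T_{x,A})$ and a.e.\ $s\in(0,S_{x,A})$ with suitable $T_{x,A}\in(0,\widehat\varrho(x)]$, $S_{x,A}\in(0,\rhoapp(x)]$. This reduces to the corresponding irreducibility statement for GPSS restricted to $B=G_{s,\rhoapp}\cap G_{t,\widehat\varrho}$: the direction update puts positive mass on every direction $v\in\mathbb S_{d-1}$ with $L_B(v)$ of positive length reachable from $v_0$ on some great circle, and the radius update then puts positive mass on every sub-segment of $L_B(v)$ of positive length; integrating over the uniformly drawn $v_\perp$ and $\theta$ shows that every set $A$ with $\pi_{0,t,s}(A)>0$ is hit with positive probability. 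I would borrow this accessibility argument essentially verbatim from \cite{SchaerEtAl2023}, noting that Assumption~\ref{assum:radius:update} (unimodality along rays) is exactly what is needed so that $L_B(v)$ is an interval and the shrinkage reaches all of it.

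The main obstacle I anticipate is the careful bookkeeping for the delayed acceptance shrinkage routine $\mathtt{DA\_Shrink}$ in line~14 combined with Assumption~\ref{assum:radius:update}: one must check that first shrinking to land in $G_{s,\rhoapp}$ and then continuing to shrink until also in $G_{t,\widehat\varrho}$ yields exactly the same conditional distribution as a single shrinkage targeting $\mathrm U(L_{G_{s,\rhoapp}\cap G_{t,\widehat\varrho}}(v))$---i.e.\ that the two-phase procedure does not bias the output. The point is that, once Assumption~\ref{assum:radius:update} guarantees $L_{G_{s,\rhoapp}\cap G_{t,\widehat\varrho}}(v)$ is an interval containing the current radius, the delayed-acceptance splitting of the acceptance test changes only \emph{which} rejected points trigger which interval-endpoint update, but every accepted point lies in $L_{G_{s,\rhoapp}\cap G_{t,\widehat\varrho}}(v)$ and, by the memoryless structure of the shrinkage (each new draw is uniform on the current interval, which always contains the target interval), the accepted point is uniform on the target interval. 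A symmetric remark handles the direction shrinkage. I would also need the (routine but necessary) observations that $G$ open and $\rhoapp,\widehat\varrho$ lower semi-continuous make $G_{s,\rhoapp}$, $G_{t,\widehat\varrho}$ open, so that $\Theta_B(v_0,v_\perp,r_0)$ and $L_B(v)$ are open, ensuring the stepping-out loops in lines 7--12 terminate almost surely and the shrinkage is well-posed. Why only $\pi$-invariance rather than $\pi$-reversibility, and only $\pi$-a.e.\ ergodicity: the Gibbsian (two-sweep) structure of the direction/radius updates is in general not reversible, so Proposition~\ref{prop:DAHSS:reversible} gives only invariance, and correspondingly we can only appeal to the first part of Proposition~\ref{prop:DAHSS:ergodic}; establishing the stronger absolute-continuity condition \eqref{eq:ergodic:abscon2} for GPSS-type kernels is known to be delicate (cf.\ \cite{SchaerEtAl2023}) and is not pursued here.
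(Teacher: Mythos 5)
Your overall strategy matches the paper's: identify the on-slice kernel of Algorithm~\ref{alg:DAGPSS} as the Gibbs composition of a direction update and a radius update, obtain $\pi_{0,t,s}$-invariance from the GPSS analysis of \cite{SchaerEtAl2023} (hence $\pi$-invariance of the overall kernel via Proposition~\ref{prop:DAHSS:reversible}), and verify statement~\ref{statement1} of Proposition~\ref{prop:DAHSS:ergodic} for $\pi$-almost everywhere ergodicity. Your explanation of why only invariance (not reversibility) and only a.e.\ ergodicity are claimed is also consistent with the paper.

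There is, however, a genuine gap in your accessibility argument. You assert that the direction update ``puts positive mass on every direction $v$ with $L_B(v)$ of positive length reachable from $v_0$ on some great circle.'' That is not the correct reachability condition: the direction update keeps the radius fixed at $r_0$ and samples $\theta$ from (a shrinkage approximation of) $\mathrm{U}(\Theta_B(v_0,v_\perp,r_0))$ with $\Theta_B(v_0,v_\perp,r_0)=\{\theta: r_0\, p(v_0,v_\perp,\theta)\in B\}$, so only directions $v$ with $r_0\,v\in B$ are attainable in that sub-step, irrespective of whether $L_B(v)$ has positive length. Consequently, $\pi_0(A\cap B)>0$ alone does not yield $H_{t,s}(x,A)>0$: one must additionally show that the set $F=\{y\in A\cap B:\ r_0\,y/\|y\|\in B\}$ has positive Lebesgue measure. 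This is precisely the substantive step in the paper's proof: it bounds the direction kernel from below by $\varepsilon\int_{\mathbb{S}_{d-1}}\mathbb{1}_{B}(r_0\nu)\,\delta_{r_0\nu}(\mathrm{d}z)\,\sigma_d(\mathrm{d}\nu)$ via \cite[Theorem~15]{HabeckEtAl2023}, composes with the radius kernel to get a lower bound whose integrand carries the extra indicator $\mathbb{1}_{B}(r_0 y/\|y\|)$, and then proves $\lambda_d(F)>0$ for $\lambda_1\otimes\lambda_1$-almost all sufficiently small $(s,t)$ by a Fubini computation resting on $\int_0^{S}\mathbb{1}_{G_{s,\rhoapp}}(y)\,\mathbb{1}_{G_{s,\rhoapp}}(r_0 y/\|y\|)\,\mathrm{d}s=\min\bigl(S,\rhoapp(y),\rhoapp(r_0 y/\|y\|)\bigr)>0$, which uses the strict positivity of $\rhoapp$ and $\widehat{\varrho}$; it also invokes \cite[Lemma~C.4]{SchaerEtAl2023} to ensure the normalizing length $\lambda_1(L_{B}(y/\|y\|))$ in the denominator is finite almost everywhere. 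Your sketch, which defers to an ``essentially verbatim'' accessibility argument from \cite{SchaerEtAl2023}, omits this radial-projection constraint and the accompanying positivity argument, and without them the claimed positivity of $H_{t,s}(x,A)$ does not follow.
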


\section{Numerical Experiments} \label{sec:num}

\rev{We consider two test scenarios for Bayesian inference on $G = \mathbb R^d$ with Gaussian prior $\pi_0=\mathrm{N}(0,C)$. In each scenario, we apply the described hybrid slice sampling methods and their delayed acceptance variants to estimate the posterior expectation $\pi(f)$ of a given quantity of interest $f\colon G \to \mathbb R$. To this end, we consider posterior distributions of the form
\begin{align}\label{num:exp:posterior}
	\pi_h(\text{d} x)
	& \propto
    \varrho_h(x)\, \pi_0(\mathrm{d}x),
\end{align}
where $h\in [0,1]$ denotes a discretization or approximation parameter. We then choose $\pi = \pi_{h_\text{ref}}$ for a suitable $h_\text{ref}\in[0,1]$ with $\varrho = \varrho_{h_\text{ref}}$ to be the ``true" or ``fine" target as defined in (\ref{PMLR:distribution}) and consider $\piapp=\pi_h$, $h\in[0,1]$, for various $h>h_\text{ref}$, with approximations $\rhoapp = \varrho_{h}$ of the density $\varrho$ to be the ``approximations" or ``coarser" targets as given in (\ref{PMLR:distribution:factorization}).  
We show that DA-HSS can benefit from reduced costs, measured by the total computational time $\mathrm{t_{{total}}}$ (in seconds) required by each algorithm to generate $n$ iterates of a Markov chain targeting $\pi$. However, in addition to costs, accuracy also matters. For quantifying accuracy, we use the effective sample size $\mathrm{n_{eff}}$ of the Markov chains underlying the path average estimator $A_n(f)$ in (\ref{path}) which is given by
\begin{align*}
  \mathrm{n_{eff}}
    = n \cdot \Bigg(1+2\sum_{j=1}^{\infty}\text{Corr}(f(X_1),f(X_{1+j}))\Bigg)^{-1}
\end{align*}
and represents roughly speaking the number of i.i.d. samples from $\pi$ that would lead to a Monte Carlo estimator with the same variance as $A_n(f)$. Finally, combining cost and accuracy leads to the following relative efficiency measure 
\begin{align}\label{rel:efficiency}
   \frac{\mathrm{n}^{\text{DA-HSS}}_{\mathrm{eff}}(h)}{\mathrm{n}^{\text{HSS}}_{\mathrm{eff}}(h_{\text{ref}})} \times \frac{\mathrm{t}_{\mathrm{total}}^{\text{HSS}}(h_{\text{ref}})}{\mathrm{t}_{\mathrm{total}}^{\text{DA-HSS}}(h)},    
\end{align}
where we denote by $\mathrm{t}_{\mathrm{total}}^{\text{HSS}}(h_{\text{ref}})$ and $\mathrm{t}_{\mathrm{total}}^{\text{DA-HSS}}(h)$ the total computational time of HSS and DA-HSS, as well as by $\mathrm{n}^{\text{HSS}}_{\mathrm{eff}}(h_{\text{ref}})$ and $\mathrm{n}^{\text{DA-HSS}}_{\mathrm{eff}}(h)$ the effective sample size of the Markov chains generated by HSS and DA-HSS. The criterion (\ref{rel:efficiency}) quantifies by which factor DA‑HSS generates more (or fewer) effective independent samples per second compared to its plain version.
}

\ifthenelse{\zeigeErstePassage=0}{
\subsection{Bayesian inverse problem}\label{sec:BIP}
.
}{

\subsection{Bayesian Inverse Problem}\label{sec:BIP}
Given noisy observation $\delta_j = q(0.25j, x)+\varepsilon_j$, $\varepsilon_j \sim \mathrm{N}(0,\sigma^2)$ i.i.d., $j=1,2,3$, of the solution $q\colon[0,1]\times \mathbb R^d\to\mathbb R$ of 
\begin{align}\label{PMLR:PDE}
	- \frac{\text{d}}{\text{d} \tau} \left(\mathrm{e}^{u(\tau, x)} \frac{\text{d}}{\text{d} \tau} q(\tau, x)\right) = 0, 
	\quad q(0, x)=0, \ q(1, x)=2,	
\end{align}
we infer the coefficient function $u\colon [0,1]\times \mathbb R^d\to\mathbb R$ which is given as 
\begin{align*}
   u(\tau, x)=\frac{\sqrt{2}}{\pi} \ \sum_{k=1}^{d} x_k \, \sin(k\pi \tau). 
\end{align*}
In particular, we infer the coefficients $x=(x_1,\ldots,x_d)\in\mathbb R^d$ in the former expansion.
We take a Gaussian prior $\pi_0 = \mathrm{N}(0, C)$ with $C=\textmd{diag}\{k^{-2}: k=1,\dots,d\}$ which relates to a Brownian bridge prior for $u$ itself.
The prior $\pi_0$ is then conditioned on observing $\delta=F(x)+\varepsilon$, where $F: \mathbb{R}^d \rightarrow \mathbb{R}^3$ denotes the forward mapping
\begin{align*}
	F \colon x \mapsto u(\cdot, x) \mapsto q(\cdot, x) \mapsto \left(q(0.25j), x\right)_{j=1,\ldots,3},
\end{align*}
$\delta=(\delta_1,\delta_2,\delta_3) \in \mathbb R^3$ the observational data, and $\varepsilon=(\varepsilon_1, \varepsilon_2, \varepsilon_3) \sim \mathrm{N}(0,\sigma^2 I_3)$ the measurement noise.
The solution $q$ of \eqref{PMLR:PDE} is given by
\begin{align*}
	q(\tau, x)= 2\frac{S_{\tau}(e^{-u(\cdot, x)})}{S_1(e^{-u(\cdot, x)})}, \quad S_{\tau}(e^{-u(\cdot, x)}) :=\int\limits_0^{\tau} e^{-u(t, x)}\, \mathrm{d}t	
\end{align*}
and requires numerical integration for computing $S_{\tau}$.
To this end, we apply the trapezoidal rule based on uniform grid points $\tau_i = i\,h$, $i=0,\ldots, 1/h \in \mathbb N$.
Let us denote by $F_h\colon \mathbb R^d \to \mathbb R^3$ the mapping $x \mapsto \left(q(0.25j, x)\right)_{j=1,\ldots,3}$ which employs integration by trapezoidal rule with grid size $h$ for computing $q(0.25j, x)$ given $u$ or $x$, respectively. As a resulting posterior distribution we get (\ref{num:exp:posterior})
with
\[
\varrho_h(x) = \exp\left(- \lVert \delta - F_h(x) \rVert^2 /2\sigma^2\right).
\]
We consider $h_{\text{ref}}=2^{-11}$ as sufficiently fine and choose $\pi = \pi_{h_\text{ref}}$ to be the ``true" target. In inverse problems involving differential equations in the forward map as in our numerical example, natural candidates for $\rhoapp$ are based on coarser numerical solutions to the differential equation. Therefore, as approximations to $\varrho = \varrho_{h_\text{ref}}$, we choose 
$\rhoapp = \varrho_{h}$ for $h\in\{2^{-10},\dots,2^{-2}\}$. Here, we apply ESS and its delayed acceptance version since they are designed for Gaussian priors, to estimate the posterior expectation $\pi(f)$ of $f(x):=\int_{0}^{1}e^{u(\tau,x)} \textmd{d}\tau$. We set $d=100$ and $\sigma^2 = 0.01$ and run each algorithm for $n=2.5\cdot 10^6$ iterations after a burn-in of $n_0=10^5$ iterations. In Figure \ref{fig:BIP} we report several results regarding the performance of the DA-ESS based on $\rhoapp = \varrho_{h}$ for varying $h\in\{2^{-10},\dots,2^{-2}\}$. 
The left panel of Figure \ref{fig:BIP} shows the total computational time $\mathrm{t}_{\mathrm{total}}^{\text{ESS}}(h_{\text{ref}})$ and $\mathrm{t}_{\mathrm{total}}^{\text{DA-ESS}}(h)$ of ESS and DA-ESS, respectively. We observe that $\mathrm{t}_{\mathrm{total}}^{\text{DA-ESS}}(h)$ is remarkable lower for $h\in\{2^{-10},\dots,2^{-4}\}$, where we have minimal total computational time at $h=2^{-8}$, as opposed to $\mathrm{t}_{\mathrm{total}}^{\text{ESS}}(h_{\text{ref}})$. 
We show in the mid panel of Figure \ref{fig:BIP} estimates of the effective sample size of the Markov chains generated by ESS and DA-ESS, which we will denote by $n^{\text{ESS}}_{\mathrm{eff}}(h_{\text{ref}})$ and $n^{\text{DA-ESS}}_{\mathrm{eff}}(h)$, respectively. We notice that $n^{\text{DA-ESS}}_{\mathrm{eff}}(h)$ tends to $n^{\text{ESS}}_{\mathrm{eff}}(h_{\text{ref}})$ for $h\to h_{\text{ref}}$. In the right panel of Figure \ref{fig:BIP} we plot the relative efficiency criterion (\ref{rel:efficiency}). We observe that DA-ESS is up to 1.75 times more efficient than ESS in terms of producing effective samples per seconds for $h\in\{2^{-10},\dots,2^{-6}\}$. But note that for too coarse approximations $\rhoapp = \varrho_{h}$ with $h\in\{2^{-5},\dots,2^{-2}\}$ we are less efficient than ESS.} 

\begin{figure*}[t]
	\begin{minipage}{0.32\linewidth}
		\centering
		\includegraphics[width=1.2\linewidth]{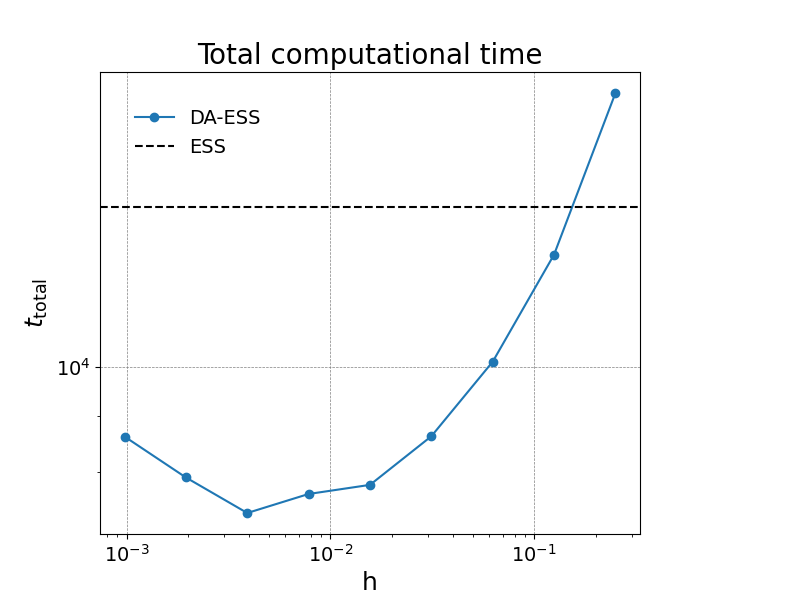}
	\end{minipage}
	\begin{minipage}{0.32\linewidth}
		\centering
		\includegraphics[width=1.2\linewidth]{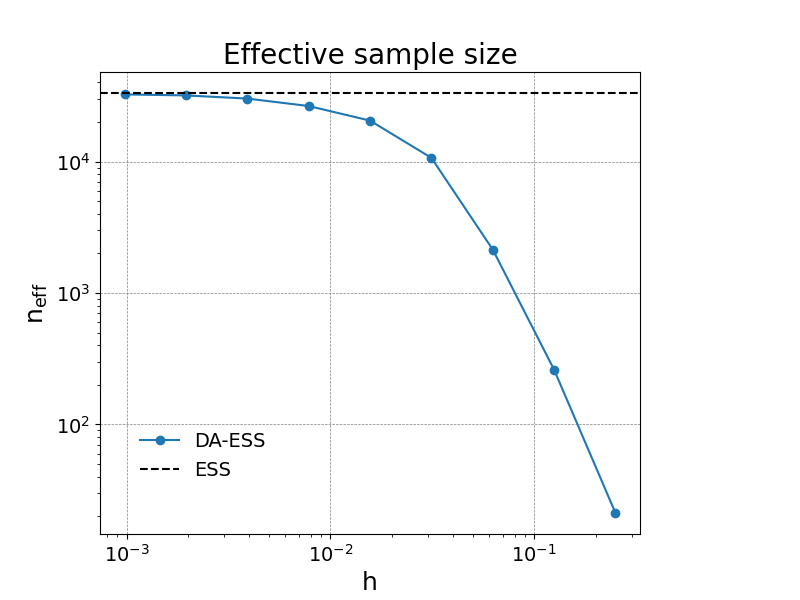}
	\end{minipage}
	\begin{minipage}{0.32\linewidth}
		\centering
		\includegraphics[width=1.2\linewidth]{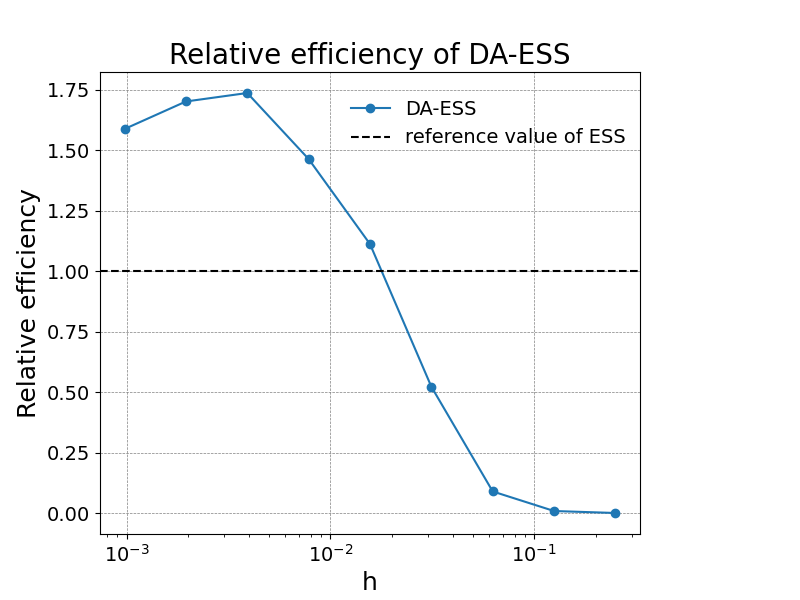}
	\end{minipage}
	\caption{Average number of evaluations of $\varrho_{h_\text{ref}}$ and $\varrho_{h}$ (left), effective sample size for MCMC integration of $f$ (middle), and resulting relative efficiency \eqref{rel:efficiency} (right) of DA-ESS for the Bayesian inverse problem of Subection \ref{sec:BIP}.}
	\label{fig:BIP}
\end{figure*}



\subsection{Bayesian Logistic Regression}\label{sec:BLR}

We now consider Bayesian logistic regression for the City Types Dataset \cite{Elebiary} which contains for $6$ locations, e.g. Zurich or Beijing, the air quality, which was measured every hour on every day in the year 2024 in each city. In total we have $m= 52 \, 705$ data points of air quality each one consisting of six real-valued variables $\xi_i\in\mathbb{R}^{6}$ such as measured carbon monoxide or ozone concentration, as well as labels $\delta_i\in\{-1,1\}$, $i, \dots, m$, denoting whether the area is residential ($\delta_i = 1$) or industrial ($\delta_i=-1$).        
We aim now to predict the probability of $\delta \in \{\pm 1\}$ given features $\xi$ by logit model, i.e.,
\[
L(\delta, \xi; x)
=
\frac{1}{1+ \exp(-\delta (x_0 + (x_1,\ldots,x_{6})^\top \xi ))}.
\]
To this end, we infer the unknown coefficients $x = (x_0, \ldots, x_{7})$ using a Bayesian approach. We take again a Gaussian prior $\pi_0=\mathrm{N}(0,C)$ with $C=0.1^2\cdot I_{7}$ this time.
Hence, the posterior for $x$ resulting from conditioning on the available data is of the form (\ref{PMLR:distribution}) with density 
\begin{align}\label{logreg:density}
	\varrho(x)
	=
	\prod_{i=1}^{m} L(\delta_i, \xi_i; x).
\end{align}

 \begin{figure*}[t]
    \begin{minipage}{0.32\linewidth}
		\centering
		\includegraphics[width=1.2\linewidth]{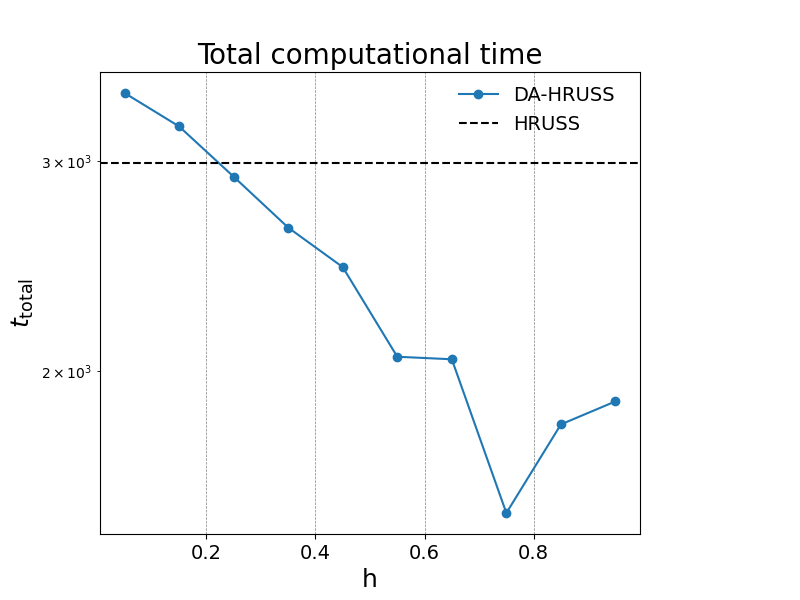}
	\end{minipage}
	\begin{minipage}{0.32\linewidth}
		\centering
		\includegraphics[width=1.2\linewidth]{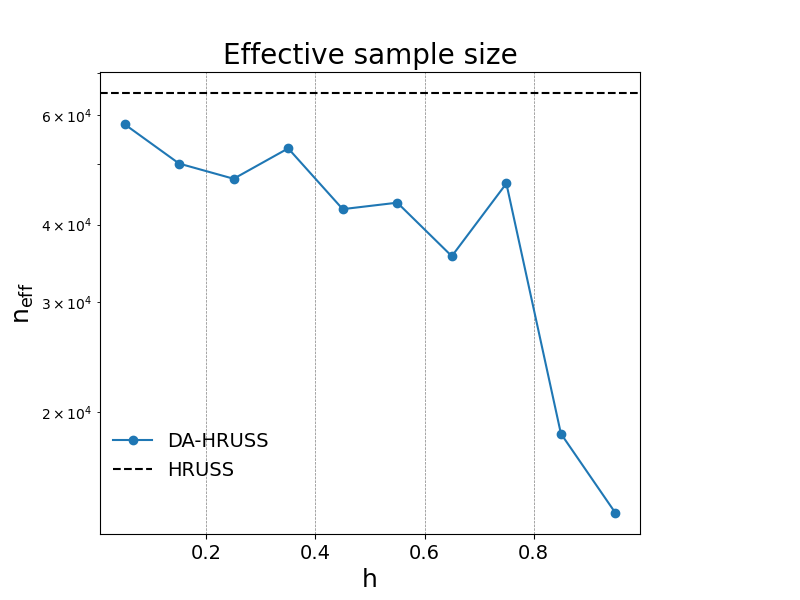}
	\end{minipage}
	\begin{minipage}{0.32\linewidth}
		\centering
		\includegraphics[width=1.2\linewidth]{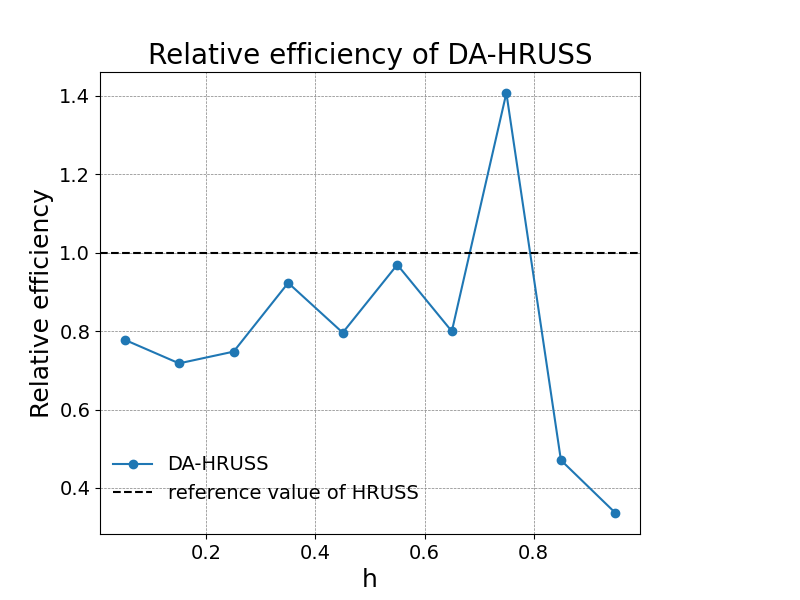}
	\end{minipage}
	\begin{minipage}{0.32\linewidth}
		\centering
		\includegraphics[width=1.2\linewidth]{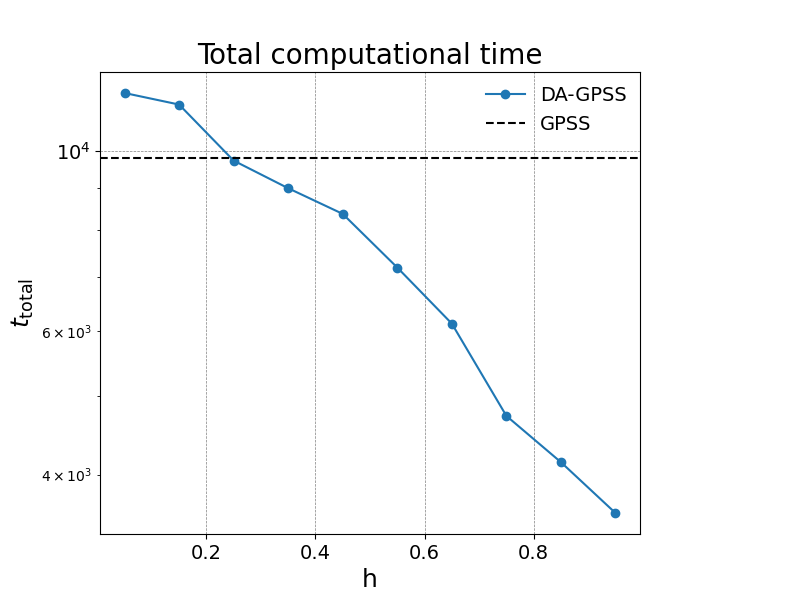}
	\end{minipage}
	\begin{minipage}{0.32\linewidth}
		\centering
		\includegraphics[width=1.2\linewidth]{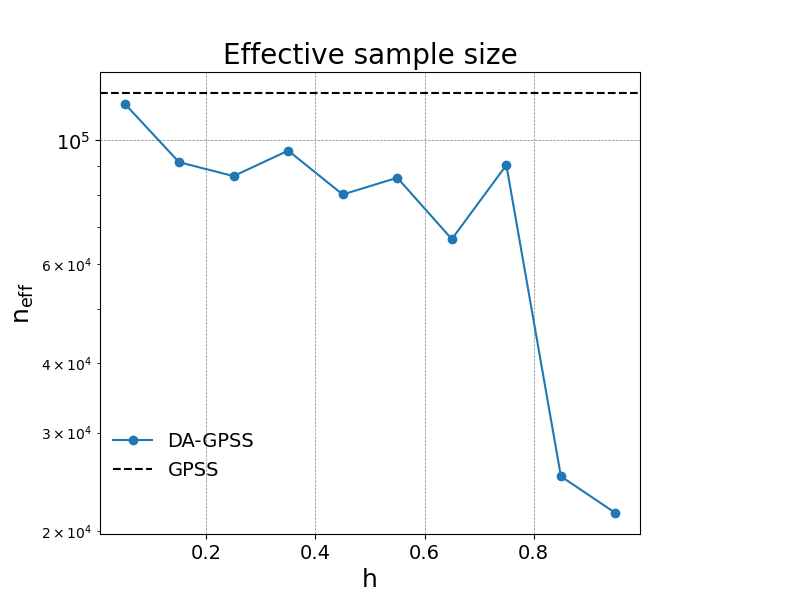}
	\end{minipage}
	\begin{minipage}{0.32\linewidth}
		\centering
		\includegraphics[width=1.2\linewidth]{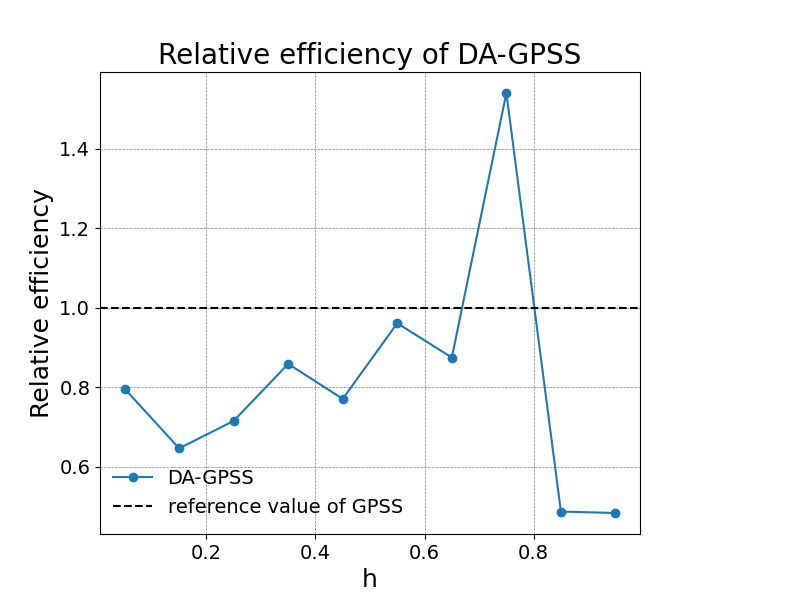}
	\end{minipage}
	\caption{Average number of evaluations of $\varrho_{h_\text{ref}}$ and $\varrho_{h}$ (left), effective sample size for MCMC integration of $f$ (middle), and resulting relative efficiency \eqref{rel:efficiency} (right) of DA-GPRSS and DA-HRUSS for the Bayesian logistic regression of Subsection \ref{sec:BLR}.}
	\label{fig:BLR}
\end{figure*}    

A natural construction of an approximation $\rhoapp=\varrho_h$ to $\varrho$ is then to consider only a subset with size $m_\text{app}(h)=(1-h)m$ of the original data set $(\xi_i, y_i)_{i=1,\cdots, m}$. This subset is determined randomly beforehand and remains unchanged during simulation. To ease notation, we assume that the data set was randomly shuffled at the beginning. This leads in the sense of (\ref{num:exp:posterior}) to the density
\begin{align*}
	\varrho_h(x) 
	=
	  \Big(\prod_{i=1}^{m_\text{app}(h)}
	L(\delta_i, \xi_i; x)\Big)^{m/m_\text{app}(h)},
\end{align*}  
where we added the exponent ${m/m_\text{app}(h)}$ to avoid that the approximate posterior $\pi_\text{app}$ is much flatter or less concentrated than the ``true" posterior $\pi$. 
Here we choose $\pi = \pi_{h_\text{ref}}$ with $h_{\text{ref}}=0$ as the ``true" target, i.e. no data points are omitted. As approximations to $\varrho=\varrho_{h_\text{ref}}$, we choose $\rhoapp=\varrho_h$ for $h \in \{0.05, 0.15, \ldots, 0.95\}$, resulting in the omission of $5\%$ up to $95\%$ of the data points. Note that the subsets for the approximations in the simulation are nested. 

We apply HRUSS and DA-HRUSS as well as GPSS and DA-GPSS to sample approximately from $\pi$. Since the reference measure $\pi_0$ for these methods is not Gaussian but the Lebesgue measure and $\pi_0(\mathrm dx) = \|x\|^{1-d} \, \mathrm dx$, respectively, we have to adjust the densities $\varrho$ and $\varrho_h$ to
\[
    \varrho_h^{\text{HRUSS}}(x)
    =
    \varrho_h(x) \cdot \frac{1}{\sqrt{(2\pi)^d\cdot \det(C)}} \exp(- x^\top C^{-1}x/2)
\]
and
\[
    \varrho_h^{\text{GPSS}}(x)
    =
    \varrho_h(x) \cdot \frac{1}{\sqrt{(2\pi)^d\cdot \det(C)} \lVert x \rVert^{1-d}} \exp(-x^\top C^{-1}x/2)
\]
respectively, where $d$ is the number of entries of $x$.

Here, each Markov chain is run for $n=5\cdot 10^5$ iterations after a burn-in of $n_0=10^4$ iterations using the step size parameter $w=0.1$. We then estimate $\pi(f)$ for $f(x) := L(1, \xi; x)$, where $\xi\in\mathbb R^{6}$ is the first point of the shuffled data set, which we removed before generating the Markov chains.

In Figure \ref{fig:BLR} we report the results analogous to the previous example. 
In the left panel of Figure \ref{fig:BLR} we observe again the total computational time of (DA-)HRUSS and (DA-)GPSS. In contrast to Figure \ref{fig:BIP}, the total computational time shows for both delayed acceptance algorithm a monotonic decay with decreasing $h$.   
Also the effective samples sizes in the center panel of Figure \ref{fig:BLR} shows for DA-HRUSS and DA-GPSS a monotonic decay with decreasing $h$, but here punctuated by small fluctuations. Nonetheless, the effective sample size departs fairly rapidly from that of the plain (full-data) algorithms HRUSS and GPSS. Still, the reduction is gradual across most values of the subsample size $m_\text{app}(h)$, with a sharp drop occurring only for very small $m_\text{app}(h)$. However, in the right panel of Figure \ref{fig:BLR} we observe again a significant increase in efficiency for both delayed acceptance algorithms compared to their plain versions at the same location, but only for $\rhoapp$ with $h=0.75$.
One possible explanation for the differences in the smoothness in the plots relative to those in Subsection \ref{sec:BIP} is, that here the approximations are no longer numerical and specifically depend on the dataset. This implies, that the structure of the dataset may be relevant, in particular with respect to which and how much data points are removed for the approximations.


\section{Discussion}\label{sec:concl}
\rev{The delayed acceptance approach offers a promising extension to slice sampling. In the two test examples of Section \ref{sec:num} the efficiency of the hybrid slice samplers was already noticeably increased for certain approximations. However, these improvements may require a suitable choice of the discretizations parameter $h$ for the approximate density $\rhoapp = \varrho_h$. How to choose such an $h$ in advance without quantitative tuning -- i.e. without running the delayed acceptance algorithm for different approximations for a small number of iterations and then checking (\ref{rel:efficiency}) -- remains an open question for future research. However, delayed acceptance slice sampling allows for further improvement by using also the non-accepted generated samples of $\pi_{\rm app}$ for variance reduction by the control variate $X\sim\pi_{\rm app}$. This can be extended to a full multilevel slice sampling model in the future, cf. \cite{LykkegaardEtAl2022}.}

\bmhead{Acknowledgements}
The authors thank Philip Schär for valuable discussions and helpful comments on this paper.

\begin{appendices}


\section{Proofs} \label{appendix:proofs}


\subsection{Proof of Proposition \ref{PMLR:prop:hybrid}}\label{app:prop:hybrid}
The representation of \eqref{eq:S_DA_hybrid} is obvious, thus, we only show the reversibility of $H_t$ defined in \eqref{eq:S_DA_Ht}. 
Let $x\in G$ and $A,B\in\mathcal{B}(G)$. 
Then, using the fact that 
$\mathbb{1}_{G_s,\rhoapp}(x)=\mathbb{1}_{(0,\rhoapp(x))}(s)$ for any $s\in(0,\Vert \rhoapp \Vert_{\infty})$
and setting $c:=1/(Z_{\text{app}}\cdot \pi_{\text{app}}(G_{t, \widehat{\varrho}}))$, where 
\begin{align}\label{normalizing:app}
   Z_{\text{app}} :=\int_G^{} \rhoapp(x)  \, \pi_0(\mathrm{d}x)    
\end{align}
yields
\begin{align*}
	&\int_{A} H_t(x,B)  \, \pi_{\text{app},t}(\mathrm{d}x) = \int_{A} \left( \frac{1}{\rhoapp(x)} \int_{0}^{\rhoapp(x)} \pi_{0,t,s}(B) \mathrm{d}s \right) \frac{\mathbb{1}_{G_{t,\widehat{\varrho}}}(x)}{\pi_{\text{app}}(G_{t, \widehat{\varrho}})}  \, \pi_{\text{app}}(\mathrm{d}x) \\
	&= c \int_{A\cap G_{t,\widehat{\varrho}}} \int_{0}^{\rhoapp(x)} \frac{\pi_0(B \cap G_{s,\rhoapp} \cap  G_{t,\widehat{\varrho}})}{\pi_0(G_{s,\rhoapp} \cap G_{t,\widehat{\varrho}})}
	 \, \mathrm{d}s  \, \pi_0(\mathrm{d}x) \\
	&= c  \int_{A\cap G_{t,\widehat{\varrho}}} \int_{B\cap G_{t,\widehat{\varrho}}} \int_{0}^{\infty}  \frac{\mathbb{1}_{(0,\rhoapp(x))}(s) \mathbb{1}_{G_{s,\rhoapp}}(y)}{\pi_0(G_{s,\rhoapp} \cap G_{t,\widehat{\varrho}})}  \, \mathrm{d}s  \, \pi_0(\mathrm{d}y)  \, \pi_0(\mathrm{d}x) \\
	&= c  \int_{A\cap G_{t,\widehat{\varrho}}} \int_{B\cap G_{t,\widehat{\varrho}}} \int_{0}^{\infty}  \frac{\mathbb{1}_{G_s,\rhoapp}(x) \mathbb{1}_{G_{s,\rhoapp}}(y)}{\pi_0(G_{s,\rhoapp} \cap G_{t,\widehat{\varrho}})}  \, \mathrm{d}s  \, \pi_0(\mathrm{d}y)  \, \pi_0(\mathrm{d}x).
\end{align*}
Interchanging the two outer integrals and arguing by the same steps (with $x$ taking the role of $y$) backwards gives the desired reversibility, i.e. 
 \begin{align*}
        \int_{A} H_t(x,B)  \, \pi_{\text{app},t}(\mathrm{d}x)=\int_{B} H_t(y,A)  \, \pi_{\text{app},t}(\mathrm{d}y), \qquad A,B\in\mathcal{B}(G).
    \end{align*}
An immediate consequence is the $\pi$-reversibility of $S_{DA}$. Therefore, let $x\in G$ as well as $A,B\in\mathcal{B}(G)$ and set $\tilde{c}:=Z^{-1} \, Z^{-1}_{\text{app}} \, \pi_{\text{app}}(G_{t,\widehat{\varrho}})$. Then       
 \begin{align*}
        \int_{A} S_{DA}(x,B) \, \pi(\mathrm{d}x)&= \int_{A} \left(\frac{1}{\widehat{\varrho}(x)} \int_{0}^{\widehat{\varrho}(x)} H_t(x,B) \mathrm{d}t \right) \, Z^{-1} \, Z^{-1}_{\text{app}} \, \widehat{\varrho}(x) \, \pi_{\text{app}}(\mathrm{d}x) \\ 
        &=Z^{-1} \, Z^{-1}_{\text{app}}\int_{A} \int_0^\infty H_t(x,B) \mathbb{1}_{G_{t,\widehat{\varrho}}}(x) \frac{\pi_{\text{app}}(G_{t,\widehat{\varrho}})}{\pi_{\text{app}}(G_{t,\widehat{\varrho}})} \, \mathrm{d}t \, \pi_{\text{app}}(\mathrm{d}x) \\
        &= \tilde{c} \int_0^\infty \int_{A} H_t(x,B) \, \pi_{\text{app},t}(\mathrm{d}x) \, \mathrm{d}t \\
        &= \tilde{c} \int_0^\infty \int_{B} H_t(y,A) \, \pi_{\text{app},t}(\mathrm{d}y) \, \mathrm{d}t.
 \end{align*}
 Again, arguing by the same steps backwards (with $y$ instead of $x$) gives the desired result.  
\begin{flushright}    
$\square$
\end{flushright}

\subsection{Proof of Theorem \ref{DASS:Ergodicity}}\label{app:DASS_convergence}

In order to show the ergodicity of delayed acceptance slice sampling, we need to verify the assumptions of Proposition \ref{Schaer:Convergence}.
To this end, we state the following auxilliary result.

\begin{proposition}\label{positive:measure:iff}
    Let $\pi$ be a probability measure given by (\ref{PMLR:distribution}) and $A\in\mathcal{B}(G)$. Then 
    $\pi(A)>0$ iff there exists $0<S_A\leq \|\rhoapp\|_\infty$ and $0 < T_A \leq \|\widehat{\varrho}\|_\infty$ with $\rhoapp, \widehat{\varrho}$ as in \eqref{PMLR:factorization} such that
    \begin{align*}
        \pi_0(A\cap G_{s, \varrho_{\textmd{app}}}\cap G_{t, \widehat{\varrho}}) > 0, \qquad  \forall s\in(0, S_A) \ \forall t\in(0, T_A).   
    \end{align*}
\end{proposition}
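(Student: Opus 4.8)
The plan is to prove both implications of the stated equivalence by unwinding the definition of $\pi$ and using the layer-cake / Fubini argument that relates the $\pi_0$-measure of a set $A$ to the $\pi_0$-measure of its intersections with superlevel sets of $\varrho_{\mathrm{app}}$ and $\widehat\varrho$. The key identity I would rely on is that for any $x\in G$ one has $\varrho(x) = \widehat\varrho(x)\,\varrho_{\mathrm{app}}(x)$ together with $\mathbb{1}_{G_{s,\rhoapp}}(x) = \mathbb{1}_{(0,\rhoapp(x))}(s)$ and $\mathbb{1}_{G_{t,\widehat\varrho}}(x) = \mathbb{1}_{(0,\widehat\varrho(x))}(t)$, so that
\begin{align*}
\varrho(x) = \widehat\varrho(x)\,\rhoapp(x)
= \int_0^\infty\!\!\int_0^\infty \mathbb{1}_{G_{s,\rhoapp}}(x)\,\mathbb{1}_{G_{t,\widehat\varrho}}(x)\,\mathrm{d}s\,\mathrm{d}t .
\end{align*}
Integrating this over $A$ against $\pi_0$ and applying Tonelli (everything is nonnegative) gives
\begin{align*}
Z\,\pi(A) = \int_A \varrho(x)\,\pi_0(\mathrm{d}x)
= \int_0^{\|\widehat\varrho\|_\infty}\!\!\int_0^{\|\rhoapp\|_\infty} \pi_0\!\bigl(A\cap G_{s,\rhoapp}\cap G_{t,\widehat\varrho}\bigr)\,\mathrm{d}s\,\mathrm{d}t ,
\end{align*}
where the outer integration ranges are justified because $\pi_0(G_{s,\rhoapp}) = 0$ for $s \ge \|\rhoapp\|_\infty$ and similarly for $\widehat\varrho$ (recall the standing assumption that the sup-norm equals the $\pi_0$-essential sup).

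For the ``if'' direction: suppose there are $S_A, T_A$ as in the statement. Then the integrand $\pi_0(A\cap G_{s,\rhoapp}\cap G_{t,\widehat\varrho})$ is strictly positive on the rectangle $(0,S_A)\times(0,T_A)$, which has positive $\lambda_2$-measure, hence the double integral above is strictly positive, so $\pi(A)>0$. For the ``only if'' direction: suppose $\pi(A)>0$; then the double integral is strictly positive, so the nonnegative integrand $g(s,t) := \pi_0(A\cap G_{s,\rhoapp}\cap G_{t,\widehat\varrho})$ cannot vanish $\lambda_2$-a.e. The point now is that $g$ is monotone nonincreasing in each variable separately (enlarging $s$ or $t$ shrinks the corresponding superlevel set), so $g(s,t)>0$ for some $(s_0,t_0)$ forces $g(s,t)>0$ for all $s\le s_0$ and $t\le t_0$; choosing $S_A := s_0$ and $T_A := t_0$ (and noting $S_A \le \|\rhoapp\|_\infty$, $T_A \le \|\widehat\varrho\|_\infty$ automatically, with strict positivity of $S_A, T_A$ since $g$ is not a.e. zero) yields the claim.

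The main obstacle — though it is modest — is making the ``only if'' direction fully rigorous: one must argue that $\{(s,t) : g(s,t)>0\}$, being a down-set (lower set) in $(0,\infty)^2$ of positive measure, contains a rectangle $(0,S_A)\times(0,T_A)$ with $S_A,T_A>0$. This follows from monotonicity: let $s_0 := \sup\{s : \pi_0(A\cap G_{s,\rhoapp})>0\}$ and similarly $t_0$; monotonicity ensures $g(s,t)>0$ whenever $s<s_0$, $t<t_0$ and $\pi_0(A\cap G_{s,\rhoapp}\cap G_{t,\widehat\varrho})$ is not killed for a spurious reason — here one needs that if $\pi_0(A\cap G_{s,\rhoapp})>0$ and $\pi_0(A\cap G_{t,\widehat\varrho})>0$ then also the triple intersection is positive for small enough levels, which again follows by letting $s,t\downarrow 0$ and using continuity of measure along the increasing sets $A\cap G_{s,\rhoapp}\cap G_{t,\widehat\varrho}\uparrow A$ as $s,t\downarrow 0$, whose limit has positive $\pi_0$-measure since $\pi(A)>0$ implies $\pi_0(A)>0$. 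I would phrase the final argument via this continuity-from-below: since $\pi_0(A)>0$ and $A\cap G_{s,\rhoapp}\cap G_{t,\widehat\varrho}\uparrow A$ as $(s,t)\downarrow(0,0)$, there exist $S_A,T_A>0$ with $\pi_0(A\cap G_{S_A,\rhoapp}\cap G_{T_A,\widehat\varrho})>0$, and monotonicity in $s,t$ then propagates positivity to the whole rectangle $(0,S_A)\times(0,T_A)$.
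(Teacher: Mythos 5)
Your proposal is correct and follows essentially the same route as the paper: the identity $\varrho(x)=\int_0^\infty\int_0^\infty \mathbb{1}_{G_{s,\rhoapp}\cap G_{t,\widehat\varrho}}(x)\,\mathrm{d}s\,\mathrm{d}t$, Tonelli to express $Z\,\pi(A)$ as the double integral of $g(s,t)=\pi_0(A\cap G_{s,\rhoapp}\cap G_{t,\widehat\varrho})$, and monotonicity of $g$ to convert positivity of the integral into positivity on a rectangle $(0,S_A)\times(0,T_A)$. The additional continuity-from-below argument you sketch for the ``only if'' direction is a valid (and slightly more explicit) way to finish, but it is not needed beyond the monotonicity step the paper itself uses.
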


\begin{proof}
    Let $x\in G$ and $A\in\mathcal{B}(G)$. We rewrite the density $\varrho$ in the following form:
    \begin{align*}
        \varrho(x)  = \widehat{\varrho}(x)  \rhoapp(x) & = \int_0^{\infty} \int_0^{\infty} \mathbb{1}_{(0,\varrho_{\textmd{app}(x)})}(s) \mathbb{1}_{(0,\widehat{\varrho}(x))}(t)  \,  \mathrm{d}s  \, \mathrm{d}t\\
        & = \int_0^{\infty} \int_0^{\infty} \mathbb{1}_{G_{s, \varrho_{\textmd{app}}}\cap G_{t, \widehat{\varrho}}}(x)  \, \mathrm{d}s  \, \mathrm{d}t
    \end{align*}
    This leads combined with Fubini's theorem to
    \begin{align}
        \pi(A)  = \frac {1}{Z} \, \int_A \varrho(x) \, \pi_0(\mathrm{d}x) & = \frac {1}{Z} \int_0^{\infty} \int_0^{\infty}  \int_A \mathbb{1}_{G_{s, \varrho_{\textmd{app}}}\cap G_{t, \widehat{\varrho}}}(x) \, \pi_0(\mathrm{d}x)  \, \mathrm{d}s  \, \mathrm{d}t \notag \\
        & = \frac {1}{Z} \int_0^{\infty} \int_0^{\infty} \pi_0(A \cap G_{s, \varrho_{\textmd{app}}}\cap G_{t, \widehat{\varrho}})  \, \mathrm{d}s  \, \mathrm{d}t \label{eq:pi}
    \end{align}
    Since $Z>0$ we can conclude that $\pi(A)>0$ if and only if the double integral on the right-hand side is also positive. Define now
    \begin{align*}
       f: (0,\infty)\times(0,\infty) \rightarrow \mathbb R, \quad (t,s) \mapsto \pi_0(A \cap G_{s, \varrho_{\textmd{app}}}\cap G_{t, \widehat{\varrho}})    
    \end{align*}
    and notice, that the mapping $f$ is non-increasing since whenever $(t_1,s_1)<(t_2,s_2)$ we have $G_{s_2, \varrho_{\textmd{app}}}\cap G_{t_2, \widehat{\varrho}}\subseteq G_{s_1, \varrho_{\textmd{app}}}\cap G_{t_1, \widehat{\varrho}}$. This leads to the fact that $\pi(A)>0$ if and only if there exist an $S_A>0$ and an $T_A>0$ such that $\pi_0(A \cap G_{s, \varrho_{\textmd{app}}}\cap G_{t, \widehat{\varrho}})>0$ for all $s\in(0, S_A)$ and $t\in(0, T_A)$.
\end{proof}

\begin{proof}[Proof of Theorem \ref{DASS:Ergodicity}]
It follows from Proposition~\ref{PMLR:prop:hybrid} that \(S_{DA}\) is \(\pi\)-reversible and therefore \(\pi\)-invariant. 
It remains to show \eqref{eq:ergodic:abscon1} and \eqref{eq:ergodic:abscon2} for $S_{DA}$, since then by Proposition \ref{Schaer:Convergence} the statement follows.
To this end, let $x\in G$ and $A\in\mathcal{B}(G)$ be arbitrary in the following. 

If $\pi(A) = 0$, then also $\pi_0(A)=0$ by \eqref{PMLR:distribution} and, thus, $\pi_0(A \cap G_{s, \varrho_{\textmd{app}}}\cap G_{t, \widehat{\varrho}})=0$ for all $(t,s)\in (0,\widehat\varrho(x))\times(0,\rhoapp(x))$.
Hence, we also have $S_{DA}(x,A) = 0$.

Let now $\pi(A)>0$. By Proposition~\ref{positive:measure:iff} there exists $0<S_A\leq \|\rhoapp\|_\infty$ and $0<T_A\leq \|\widehat{\varrho}\|_\infty$, such that $\pi_0(A\cap G_{s, \varrho_{\textmd{app}}}\cap G_{t, \widehat{\varrho}}) > 0$ for all $s\in(0, S_A)$ and $t\in(0, T_A)$. Thus, for $x\in G$ exist $S_{x,A}\in(0, \rhoapp(x)]$ and $T_{x,A}\in(0, \widehat{\varrho}(x)]$ for which by Assumption \ref{assum:positivity_condition} we can conclude that $\pi_{0,t,s}(A)>0$ for all $s\in(0, S_{x,A})$ and $t\in(0, T_{x,A})$. 
Hence, 
\begin{align*}
	S_{DA}(x,A) 
    & = 
    \frac{1}{\varrho(x)}\int_0^{\widehat{\varrho}(x)} \int_0^{\rhoapp(x)} \pi_{0,t,s}(A)  \, \mathrm{d}s  \, \mathrm{d}t\\
    & \geq \frac{1}{\varrho(x)}\int_0^{T_{x,A}} \int_0^{S_{x,A}} \pi_{0,t,s}(A)  \, \mathrm{d}s  \, \mathrm{d}t
    > 0
\end{align*}
as desired. 
\end{proof}

\subsection{Proof of Proposition \ref{PMLR:prop:DAMH:HSS}}
\label{sec:DAMH:HSS}
For $t\in (0, \|\widehat\varrho\|_\infty)$ we show that $H_t$ (defined in Proposition~\ref{PMLR:prop:DAMH:HSS}) is $\pi_{\text{app},t}$-reversible. Let $A,B\in\mathcal{B}(G)$ with $A\cap B=\emptyset$.
Then, using $c:=1/(Z_{\text{app}}\cdot \pi_{\text{app}}(G_{t, \widehat{\varrho}}))$, where $Z_{\text{app}}$ is defined as in (\ref{normalizing:app}), we obtain
\begin{align*}
	& \int_{A} H_t(x,B)  \, \pi_{\text{app},t}(\mathrm{d}x) \\
	&=\int_{A} \left( \frac{1}{\rhoapp(x)} \int_{0}^{\rhoapp(x)} Q(x,B \cap G_{s,\rhoapp}  \cap  G_{t,\widehat{\varrho}}) \mathrm{d}s \right) \frac{\mathbb{1}_{G_{t,\widehat{\varrho}}}(x)}{\pi_{\text{app}}(G_{t, \widehat{\varrho}})}  \, \pi_{\text{app}}(\mathrm{d}x) \\
	&= c  \int_{G} \int_{G} \int_{0}^{\rhoapp(x)} \mathbb{1}_{B \cap G_{s,\rhoapp} \cap G_{t,\widehat{\varrho}}}(y) \mathbb{1}_{A \cap G_{t,\widehat{\varrho}}}(x)  \, \mathrm{d}s  \, Q(x, \mathrm{d}y)  \, \pi_0(\mathrm{d}x) \\ 
	&= c  \int_{G} \int_{G} \int_{0}^{\infty} \mathbb{1}_{(0,\rhoapp(x))}(s) \mathbb{1}_{G_{s,\rhoapp} }(y) \mathbb{1}_{A \cap  G_{t,\widehat{\varrho}}}(x) \mathbb{1}_{B \cap G_{t,\widehat{\varrho}}}(y)  \, \mathrm{d}s  \, Q(x, \mathrm{d}y)  \, \pi_0(\mathrm{d}x) \\ 
	&= c  \int_{G} \int_{G} \int_{0}^{\infty} \mathbb{1}_{G_{s,\rhoapp} }(x) \mathbb{1}_{G_{s,\rhoapp} }(y) \mathbb{1}_{A \cap  G_{t,\widehat{\varrho}}}(x) \mathbb{1}_{B \cap G_{t,\widehat{\varrho}}}(y)  \, \mathrm{d}s  \, Q(x, \mathrm{d}y)  \, \pi_0(\mathrm{d}x).
\end{align*}
Consequently, by exploiting the reversibility of $Q$ w.r.t. $\pi_0$, that is, $Q(x, \mathrm{d}y) \ \pi_0(\mathrm{d}x) = Q(y, \mathrm{d}x) \ \pi_0(\mathrm{d}y)$, and arguing by the same steps backward we obtain
$\int_{A} H_t(x,B) \pi_{\text{app},t}(\mathrm{d}x) = \int_{B} H_t(y,A) \pi_{\text{app},t}(\mathrm{d}y)$.
This can be straightforwardly extended to sets that are not necessarily disjoint.

We turn to the verification of $H(x,A)=M_{DA}(x,A)$ for any $x\in G$ and $A\in\mathcal{B}(G)$, where $M_{DA}$ denotes the transition kernel of the delayed acceptance Metropolis-Hastings algorithm. For any $A\in\mathcal{B}(G)$ that satisfies $x\not\in A$ follows
\begin{align*}
	H(x,A)&= \frac{1}{\widehat{\varrho}(x)} \int_{0}^{\widehat{\varrho}(x)} H_t(x,A)  \, \mathrm{d}t \\
	&= \frac{1}{\widehat{\varrho}(x)} \int_{0}^{\widehat{\varrho}(x)} \frac{1}{\rhoapp(x)} \int_0^{\rhoapp(x)} Q(x,A \cap G_{s,\rhoapp} \cap  G_{t,\widehat{\varrho}})  \, \mathrm{d}s  \, \mathrm{d}t\\
	&= \frac{1}{\widehat{\varrho}(x)} \frac{1}{\rhoapp(x)} \int_{0}^{\widehat{\varrho}(x)} \int_0^{\rhoapp(x)} \int_A \mathbb{1}_{(0,\varrho_{\textmd{app}}(y))}(s) \mathbb{1}_{(0,\widehat{\varrho}(y))}(t)  \, Q(x, \mathrm{d}y)  \, \mathrm{d}s  \, \mathrm{d}t \\  
	&= \int_A \min\{1,\rhoapp(y)/\rhoapp(x) \} \min\{1,\widehat{\varrho}(y)/ \widehat{\varrho}(x) \}  \, Q(x, \mathrm{d}y) = M_{DA}(x,A).
\end{align*}
Applying the former with $A=G\setminus\{x\}$ implies 
\[H(x,\{x\})=1-H(x,G\setminus\{x\}) = 1-M_{DA}(x,G\setminus\{x\})=M_{DA}(x,\{x\}).\]
Consequently, for sets with $x\in A$ we have 
\[
H(x,A) = H(x,\{x\}) + H(x,A\setminus\{x\}) = M_{DA}(x,\{x\}) + M_{DA}(x,A\setminus\{x\}) = M_{DA}(x,A).
\]
\begin{flushright}    
$\square$
\end{flushright}

\subsection{Proof of Theorem~\ref{PMLR:theorem:comparison}}\label{Apendix:Comparison}
We show the statement of Theorem \ref{PMLR:theorem:comparison} by applying a general comparison result of \citet[Lemma~1]{RudolfUllrich2018}. For convenience of the reader, we state the corresponding statement in its abstract form and then verify its assumptions in the setting of Theorem~\ref{PMLR:theorem:comparison}.

Additionally to the measurable space $(G,\mathcal{B}(G))$ let $(I,\mathcal{I},\lambda)$ be a $\sigma$-finite measure space and assume that: 
\begin{itemize}
	\item[{(a)}] 
	There is a function $p\colon G\times I\rightarrow [0,\infty]$ such that $p(x,\cdot)$ is a probability density function w.r.t. $\lambda$ for all $x\in G$ and $p(\cdot,a)$ is integrable w.r.t. $\pi$ for all $a\in I$.
	\item[{(b)}]  
	For every $a\in I$, we have an equivalence relation $\sim_a$ on $G$. We denote the equivalence class of $x\in G$ w.r.t. $\sim_a$ by $[x]_a:=\{y\in G: x \sim_a y\}$.
	\item[{(c)}] 
	For ($\lambda$-almost) every $a\in I$, there exists a transition kernel $P_a$ on $(G, \mathcal{B}(G))$ such that $P_a(x,A)=0$ for each $x\in G$ and $A\in \mathcal{B}(G)$ satisfying $A \subseteq G\setminus [x]_a$. 
\end{itemize}
Moreover, we define for almost all $a\in I$ a probability measure $\pi_{{a}}$ on $G$ by
\begin{align*}
	\pi_a(A):=\frac{\int_{A}p(x,a) \,\pi(\mathrm{d}x)}{\int_{{G}}p(x,a) \,\pi(\mathrm{d}x)}, \quad A\in\mathcal{B}(G).
\end{align*} 

\begin{lemma}[\citet{RudolfUllrich2018}]\label{PMLR:lemma:comparison}
	Assume that for ($\lambda$-almost) all $a\in I$, there are transition kernels $P^{(1)}_a$, $P^{(2)}_a$ on $G$ such that
	\begin{enumerate}
		\item $P^{(1)}_a$ and $P^{(2)}_a$ are self-adjoint operators on $L^2_{\pi_{a}}(G)$;
		\label{en1}
		\item $P^{(1)}_a$ is positive on $L^2_{\pi_{a}}(G)$;
		\item $P^{(1)}_aP^{(2)}_a=P^{(2)}_a$. 
		\label{en3}
	\end{enumerate}
	Then, for the operators $P_1, P_2\colon L^2_{\pi}(G) \rightarrow L^2_{\pi}(G)$ defined by
	\begin{equation*}
		P_if(x)=\int_{I} P^{(i)}_a f(x) p(x,a) \,\lambda(\mathrm{d}a), \qquad i=1,2,
	\end{equation*}
	it follows $P_1\leq P_2$.  
\end{lemma}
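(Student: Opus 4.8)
The plan is to reduce the operator inequality $P_1 \le P_2$, which by \eqref{PMLR:ordering} means $\langle P_1 f, f\rangle_{\pi} \ge \langle P_2 f, f\rangle_{\pi}$ for all $f \in L^2_{\pi}(G)$, to a fibrewise statement about the operators $P^{(1)}_a, P^{(2)}_a$, and then to settle that fibrewise statement by a short self-adjoint-operator argument. First I would unfold the defining formula for $P_i$: writing $Z_a := \int_G p(x,a)\,\pi(\mathrm{d}x)$, the definition of $\pi_a$ gives $p(x,a)\,\pi(\mathrm{d}x) = Z_a\,\pi_a(\mathrm{d}x)$, so that by Fubini's theorem
\[
\langle P_i f, f\rangle_{\pi} = \int_I \int_G (P^{(i)}_a f)(x)\, f(x)\, p(x,a)\,\pi(\mathrm{d}x)\,\lambda(\mathrm{d}a) = \int_I Z_a\, \langle P^{(i)}_a f, f\rangle_{\pi_a}\,\lambda(\mathrm{d}a).
\]
Since $Z_a \ge 0$, it then suffices to prove that for $\lambda$-almost every $a$ one has $\langle P^{(1)}_a f, f\rangle_{\pi_a} \ge \langle P^{(2)}_a f, f\rangle_{\pi_a}$; subtracting the two representations and integrating this fibrewise inequality yields the claim.

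For the fibrewise step I fix $a$ and abbreviate $A := P^{(1)}_a$ and $B := P^{(2)}_a$, viewed as operators on $\mathcal H := L^2_{\pi_a}(G)$. The three hypotheses give that $A, B$ are self-adjoint, $A \ge 0$, and $AB = B$; in addition, being $\pi_a$-invariant Markov operators (self-adjointness forces $\pi_a$-reversibility, hence invariance), both are $L^2_{\pi_a}$-contractions, so in particular $B \le I$. Taking adjoints in $AB = B$ and using self-adjointness of $A$ and $B$ yields $BA = B$ as well. I would then set $\mathcal M := \ker(A - I)$. From $AB = B$ we get $A(Bg) = Bg$ for every $g$, i.e.\ $\operatorname{range}(B) \subseteq \mathcal M$; from $BA = B$, equivalently $B(A - I) = 0$, we get that $B$ vanishes on $\operatorname{range}(A - I)$, whose closure is $\mathcal M^{\perp}$, so $B|_{\mathcal M^{\perp}} = 0$.

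With the orthogonal decomposition $f = f_0 + f_1$, where $f_1 \in \mathcal M$ and $f_0 \in \mathcal M^{\perp}$, the two (self-adjoint-)invariant subspaces decouple the quadratic form: since $A f_1 = f_1$ and $A$ maps $\mathcal M^{\perp}$ into itself, $\langle A f, f\rangle_{\pi_a} = \langle A f_0, f_0\rangle_{\pi_a} + \|f_1\|^2 \ge \|f_1\|^2$ by positivity of $A$. On the other hand $B f = B f_1 \in \mathcal M$, whence $\langle B f, f\rangle_{\pi_a} = \langle B f_1, f_1\rangle_{\pi_a} \le \|f_1\|^2$ because $B \le I$. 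Comparing the two gives $\langle A f, f\rangle_{\pi_a} \ge \langle B f, f\rangle_{\pi_a}$, which is precisely the fibrewise inequality.

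The routine technical points---justifying Fubini and the integral representation, ensuring $f \in L^2_{\pi_a}(G)$ for $\lambda$-almost every $a$, and the measurability of $a \mapsto \langle P^{(i)}_a f, f\rangle_{\pi_a}$---are handled by the integrability and structural assumptions (a)--(c). I expect the genuine core of the argument to be the fibrewise Hilbert-space inequality, whose crux is the observation that the relation $AB = B$ forces the range of the contraction $B$ into the top eigenspace $\ker(A - I)$ of the positive operator $A$, where $A$ acts as the identity and therefore dominates $B$. This is the one place where all three hypotheses---self-adjointness, positivity of $P^{(1)}_a$, and $P^{(1)}_a P^{(2)}_a = P^{(2)}_a$---enter simultaneously.
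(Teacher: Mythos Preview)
The paper does not actually prove this lemma: it is quoted verbatim from \cite[Lemma~1]{RudolfUllrich2018} and used only as a black box in the proof of Theorem~\ref{PMLR:theorem:comparison}. There is therefore no ``paper's own proof'' to compare against.

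Your argument is nonetheless correct and self-contained. The reduction to a fibrewise inequality via $p(x,a)\,\pi(\mathrm dx)=Z_a\,\pi_a(\mathrm dx)$ is exactly the right unfolding, and your Hilbert-space step is a clean way to establish $\langle P^{(1)}_a f,f\rangle_{\pi_a}\ge\langle P^{(2)}_a f,f\rangle_{\pi_a}$: from $AB=B$ you get $\operatorname{range}(B)\subseteq\ker(A-I)=:\mathcal M$ and $B|_{\mathcal M^\perp}=0$, so with $f=f_0+f_1$ ($f_1\in\mathcal M$, $f_0\in\mathcal M^\perp$) you obtain $\langle Af,f\rangle\ge\|f_1\|^2\ge\langle Bf_1,f_1\rangle=\langle Bf,f\rangle$ using $A\ge0$ and the Markov-contraction bound $B\le I$. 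All three hypotheses enter precisely where you indicate. The only minor remark is that assumption~(c) on the equivalence classes is not used explicitly in your argument; it is part of the ambient setup in \cite{RudolfUllrich2018} but plays no role in the operator inequality itself.
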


\begin{proof}[Proof of Theorem \ref{PMLR:theorem:comparison}]
	We now apply Lemma \ref{PMLR:lemma:comparison} to the setting of Theorem \ref{PMLR:theorem:comparison}.
	To this end, let $I:=(0,\infty)\times(0,\infty)$ and $\lambda:=\lambda_2$ be the two-dimensional Lebesgue measure. 
	We define the function $p\colon G\times I \rightarrow [0,\infty)$ by
	\begin{align*}
		p(x,a):=\frac{1}{\varrho(x)}\mathbb{1}_{(0,\varrho_{\textmd{app}}(x))}(s)\mathbb{1}_{(0,\widehat{\varrho}(x))}(t)
	\end{align*}
	with $a=(s,t)$, which satisfies assumption (a) from above. For every $a=(s,t)\in I$, let $x \sim_a y$ if and only if $x,y\in G_{s, \varrho_{\textmd{app}}}\cap G_{t, \widehat{\varrho}}$, such that $[x]_{a}:=G_{s, \varrho_{\textmd{app}}}\cap G_{t, \widehat{\varrho}}$. This leads to
	\begin{align*}
		\pi_{a}(A)=\frac{\pi_0(A\cap G_{s, \varrho_{\textmd{app}}}\cap G_{t, \widehat{\varrho}})}{\pi_0(G_{s, \varrho_{\textmd{app}}}\cap G_{t, \widehat{\varrho}})}, \quad A\in\mathcal{B}(G).
	\end{align*} 
	For $x\in G$ and $A\in \mathcal{B}(G)$ we set 
	$ S_{DA_{{a}}}(x,A):=\pi_{{a}}(A)$.
	This is clearly a transition kernel that satisfies assumption (c) above. 
	Moreover, the Markov operator $S_{DA}: L^2_{\pi}(G) \rightarrow L^2_{\pi}(G)$ of the delayed acceptance slice sampler can be represented by
	\begin{align*}
		S_{DA}f(x) 
		&=\frac{1}{\varrho(x)} \int_0^{\widehat{\varrho}(x)} \int_0^{\rhoapp(x)} \int_G^{} f(y) \frac{\mathbb{1}_{G_{s, \varrho_{\textmd{app}}}\cap G_{t,\widehat{\varrho}}}(y)}{\pi_0(G_{s, \varrho_{\textmd{app}}}\cap G_{t, \widehat{\varrho}})}  \, \pi_{0}(\mathrm{d}y)  \, \mathrm{d}s  \, \mathrm{d}t \\
		&= \int_I \int_G f(y)  S_{DA_{{a}}}(x,\mathrm{d}y) \frac{1}{\varrho(x)}\mathbb{1}_{(0,\varrho_{\textmd{app}(x)})}(s) \ \mathbb{1}_{(0,\widehat{\varrho}(x))}(t)  \, \mathrm{d}s  \, \mathrm{d}t \\ 
		&= \int_I S_{DA_{{a}}}  f(x)  p(x,{a}) \,\lambda_2(\mathrm{d}{a}).
	\end{align*}
	We also set  
	\begin{align*}
		M_{DA_{{a}}}(x,A)=  Q(x, A  \cap  G_{s,\rhoapp}  \cap  G_{t,\widehat{\varrho}}) +(1- Q(x, G_{s,\rhoapp}  \cap  G_{t,\widehat{\varrho}})) \cdot \delta_x(A)   
	\end{align*}
	and note that assumption (c) from above is satisfied for this transition kernel.
	Using
	\begin{align*}
		\int_0^{\widehat{\varrho}(x)} \int_0^{\rhoapp(x)} \mathbb{1}_{(0,\varrho_{\textmd{app}}(y))}(s) \, \mathbb{1}_{(0,\widehat{\varrho}(y))}(t)  \, \mathrm{d}s  \, \mathrm{d}t 
		& = \min\{\rhoapp(x),\rhoapp(y)\} \, \min\{\widehat{\varrho}(x), \widehat{\varrho}(y)\}       
	\end{align*}
	yields
	\begin{align*}
		M_{DA}f(x)& = \int_G f(y) \widetilde{\alpha}(x,y)  \, Q(x,\mathrm{d}y) +f(x)\left(1-\int_{G} \widetilde{\alpha}(x,y)  \, Q(x,\mathrm{d}y)\right)\\
		&= \frac{1}{\varrho(x)} \int_0^{\widehat{\varrho}(x)} \int_0^{\rhoapp(x)} \int_G f(y)  \mathbb{1}_{(0,\varrho_{\textmd{app}}(y))}(s)\ \mathbb{1}_{(0,\widehat{\varrho}(y))}(t)  \, Q(x, \mathrm{d}y)  \, \mathrm{d}s  \, \mathrm{d}t \\ 
		&\qquad	+  \frac{f(x)}{\varrho(x)} \int_0^{\widehat{\varrho}(x)} \int_0^{\rhoapp(x)} \int_G  \left(1- \mathbb{1}_{(0,\varrho_{\textmd{app}}(y))}(s)\ \mathbb{1}_{(0,\widehat{\varrho}(y))}(t)\right)  \, Q(x, \mathrm{d}y)  \, \mathrm{d}s  \, \mathrm{d}t\\	
		&= \int_I \int_G f(y)  M_{DA_{{a}}}(x,\mathrm{d}y) \ \frac{1}{\varrho(x)}\mathbb{1}_{(0,\varrho_{\textmd{app}(x)})}(s)\ \mathbb{1}_{(0,\widehat{\varrho}(x))}(t)  \, \mathrm{d}s  \, \mathrm{d}t \\ 
		&= \int_I   M_{DA_{{a}}}f(x) p(x,{a}) \,\lambda_2(\mathrm{d}{a}).
	\end{align*}
	Now we verify \ref{en1}.-\ref{en3}. of Lemma~\ref{PMLR:lemma:comparison} with $P_a^{(1)}=M_{DA_{{a}}}$ and $P_a^{(2)}=S_{DA_{{a}}}$, such that the application of it gives the statement of Theorem~\ref{PMLR:theorem:comparison}. We have:
	\begin{enumerate}
		\item
		\emph{Self-adjointness of $S_{DA_{{a}}}$ and $M_{DA_{{a}}}$ w.r.t.\phantom{.}$\pi_{{a}}$}: It is sufficient to show the $\pi_{{a}}$-reversibility of the transition kernels $S_{DA_{{a}}}$ and $M_{DA_{{a}}}$. 
		For $S_{DA_{{a}}}$ it is obvious due to the relation $S_{DA_{{a}}}(x,A)=\pi_{{a}}(A)$. 
		For proving the reversibility of $M_{DA_{{a}}}$ 
		let $A,B \in \mathcal{B}(G)$ with $A\cap B=\emptyset$.
		We set $c:=1/\pi_0(G_{s, \varrho_{\textmd{app}}}\cap G_{t, \widehat{\varrho}})$ and obtain
		\begin{align*}
			&	\int_{A}^{} M_{DA_{{a}}}(x,B)  \, \pi_{{a}}(\mathrm{d}x) 
			= \int_{A} Q(x, B  \cap  G_{s,\rhoapp}  \cap  G_{t,\widehat{\varrho}})  \, \pi_{{a}}(\mathrm{d} x) \\
			& = c \int_{G} \int_{G} \mathbb{1}_B(y)  \ \mathbb{1}_{G_{s,\rhoapp}\cap G_{t,\widehat{\varrho}}}(y)\ \mathbb{1}_A(x) \mathbb{1}_{G_{s,\varrho_{\textmd{app}}}\cap G_{t,\widehat{\varrho}}}(x)  \,  Q(x,\mathrm{d}y) \, \pi_{0}(\mathrm{d}x) \\
			& = c \int_{G} \int_{G} \mathbb{1}_B(y)  \mathbb{1}_{G_{s, \varrho_{\textmd{app}}}\cap G_{t, \widehat{\varrho}}}(y) \ \mathbb{1}_A(x) \mathbb{1}_{G_{s, \varrho_{\textmd{app}}}\cap G_{t,\widehat{\varrho}}}(x)  \, Q(y,\mathrm{d}x)  \, \pi_{0}(\mathrm{d}y) \\
			& = \int_{B}^{} M_{DA_{{a}}}(y,A)  \, \pi_{{a}}(\mathrm{d}y),
		\end{align*}
		where the last equality follows due to the reversibility of $Q$ w.r.t. $\pi_0$. 
		This implies by standard arguments the claimed $\pi_a$-reversibility of $M_{DA_a}$.
		\item
		\emph{Positivity of $M_{DA_{{a}}}$ on $L^2_{\pi_a}(G)$}: For any $f\in L^2_{\pi_a}(G)$ we show $\langle M_{DA_{{a}}}f, f \rangle_{\pi_{{a}}}\geq 0$.
		With $a=(s,t)\in I$ we have 
		\begin{align*}
			  \langle M_{DA_{{a}}}f, f \rangle_{\pi_{{a}}}
			& = \int_{[x]_{a}} \int_{[x]_{a}} f(y)  \, M_{DA_{{a}}}(x, \mathrm{d}y)  f(x)  \,  \pi_{{a}}(\mathrm{d}x) \\
			& = \int_{[x]_{a}}^{} \int_{[x]_{a}}^{} f(y) \mathbb{1}_{G_{s,\rhoapp}  \cap  G_{t,\widehat{\varrho}}}(y)  \, Q(x,\mathrm{d}y) f(x)  \, \pi_{{a}}(\mathrm{d}x) \\
			& \qquad + \int_{[x]_{a}}^{} (1-Q(x, G_{s,\rhoapp}  \cap  G_{t,\widehat{\varrho}})) f(x)^2 \,  \pi_{{a}}(\mathrm{d}x).
		\end{align*} 
		We can write the first term in the sum as
		\begin{align*}
			&\int_{[x]_{a}}^{} \int_{[x]_{a}}^{} f(y) f(x) \mathbb{1}_{G_{s, \varrho_{\textmd{app}}}\cap G_{t,\widehat{\varrho}}}(y)  \, Q(x,\mathrm{d}y)  \, \pi_{{a}}(\mathrm{d}x) \\
			&\quad = c \int_{G}^{} \int_{G}^{} f(y) f(x) \mathbb{1}_{G_{s, \varrho_{\textmd{app}}}\cap G_{t,\widehat{\varrho}}}(y)\ \mathbb{1}_{G_{s, \varrho_{\textmd{app}}}\cap G_{t,\widehat{\varrho}}}(x)  \, Q(x,\mathrm{d}y)  \, \pi_0(\mathrm{d}x) \\ 
			&\quad= c \, \langle Q\mathbb{1}_{G_{s,\varrho_{\textmd{app}}}\cap G_{t,\widehat{\varrho}}}f, \mathbb{1}_{G_{s, \varrho_{\textmd{app}}}\cap G_{t,\widehat{\varrho}}}f \rangle_{\pi_0}. 
		\end{align*}
		Finally, by the positivity of $Q$ on $L^2_{\pi_0}(G)$ we get
		\begin{align*}
			\langle M_{DA_{{a}}}f, f \rangle_{\pi_{{a}}}  
			& = c \, \langle Q\mathbb{1}_{G_{s,\varrho_{\textmd{app}}}\cap G_{t,\widehat{\varrho}}}f, \mathbb{1}_{G_{s,\varrho_{\textmd{app}}}\cap G_{t,\widehat{\varrho}}}f \rangle_{\pi_0} \\
            & \qquad +\int_{[x]_{a}} (1-Q(x, G_{s,\varrho_{\textmd{app}}}\cap G_{t,\widehat{\varrho}})) f(x)^2  \, \pi_{{a}}(\mathrm{d}x) 
			\geq 0. 
		\end{align*}
		\item \emph{$M_{DA_{{a}}} S_{DA_{{a}}}=S_{DA_{{a}}}$}: We have
		\begin{align*}
			M_{DA_{{a}}} S_{DA_{{a}}}f(x)
			&=\int_{[x]_{a}}^{}\int_{[x]_{a}}^{} f(y)  \,  S_{DA_{{a}}}(z, \mathrm{d}y)  \,  M_{DA_{{a}}}(x, \mathrm{d}z) \\  
			&=\int_{[x]_{a}}^{}\int_{[x]_{a}}^{} f(y)  \,  \pi_a(\mathrm{d}y)  \,  M_{DA_{{a}}}(x, \mathrm{d}z) \\ 
			&=\int_{[x]_{a}} 
			M_{DA_{{a}}}(x, [x]_{a})
			f(y)  \, \pi_a(\mathrm{d}y) \\ 
			&=\int_{[x]_{a}}^{} f(y)  \,  S_{DA_{{a}}}(x, \mathrm{d}y)
			= S_{DA_{{a}}}f(x). 
		\end{align*}
	\end{enumerate}
	As a direct conclusion of Lemma \ref{PMLR:lemma:comparison}, we have shown the statement.  
\end{proof}

\subsection{Proof of Proposition \ref{prop:DAHSS:reversible}}\label{app:DAHSS:reversible}
Let $x\in G$ and $A,B\in\mathcal{B}(G)$. Then with $Z$ defined as in (\ref{PMLR:distribution}) we get 
\begin{align*}
	&\int_{A} H_{{DA}}(x,B)  \, \pi(\mathrm{d}x)
	= \int_{A} \left( \frac{1}{\varrho(x)} \int_{0}^{\widehat{\varrho}(x)} \int_{0}^{\rhoapp(x)} H_{t,s}(x, B)  \, \mathrm{d}s  \, \mathrm{d}t \right) Z^{-1} \varrho(x)  \, \pi_0(\mathrm{d}x) \\
	&= Z^{-1}  \int_{A} \int_{0}^{\infty} \int_{0}^{\infty} H_{t,s}(x, B) \mathbb{1}_{G_{s,\rhoapp}}(x) \mathbb{1}_{G_{t,\widehat{\varrho}}}(x) \frac{\pi_0(G_{s,\rhoapp} \cap  G_{t,\widehat{\varrho}})}{\pi_0(G_{s,\rhoapp} \cap G_{t,\widehat{\varrho}})}
	 \,\mathrm{d}s  \, \mathrm{d}t  \, \pi_0(\mathrm{d}x) \\
	&= Z^{-1}  \int_{0}^{\infty} \int_{0}^{\infty} \pi_0(G_{s,\rhoapp} \cap  G_{t,\widehat{\varrho}}) \int_{A} H_{t,s}(x, B)  \, \pi_{0,t,s} (\mathrm{d}x)  \, \mathrm{d}s  \, \mathrm{d}t  \\
	&= Z^{-1}  \int_{0}^{\infty} \int_{0}^{\infty} \pi_0(G_{s,\rhoapp} \cap  G_{t,\widehat{\varrho}}) \int_{B} H_{t,s}(y, A)  \,  \pi_{0,t,s} (\mathrm{d}y)  \, \mathrm{d}s  \, \mathrm{d}t.
\end{align*}
The last equality follows from the fact that $H_{t,s}$ is $\pi_{0,t,s}$-reversible. Arguing by the same steps (with $x$ taking the role of $y$) backwards gives the desired reversibility, i.e. 
 \begin{align*}
        \int_{A} H_{{DA}}(x,B)  \, \pi(\mathrm{d}x)=\int_{B} H_{{DA}}(y,A)  \, \pi(\mathrm{d}y), \qquad A,B\in\mathcal{B}(G).
    \end{align*}
For the second part of the proof, showing the $\pi$-invariance of $H_{{DA}}$, we can conclude analogously with some steps before that  
\begin{align*}
    \pi H_{DA}(A)&= \int_{G} H_{{DA}}(x,A) \, \pi(\mathrm{d}x) \\
                 &= Z^{-1}  \int_{0}^{\infty} \int_{0}^{\infty} \pi_0(G_{s,\rhoapp} \cap  G_{t,\widehat{\varrho}}) \int_{G} H_{t,s}(x, A)  \,  \pi_{0,t,s} (\mathrm{d}x)  \, \mathrm{d}s  \, \mathrm{d}t \\
                 &=  Z^{-1}  \int_{0}^{\infty} \int_{0}^{\infty} \pi_0(G_{s,\rhoapp} \cap  G_{t,\widehat{\varrho}}) \, \pi_{0,t,s} (A)  \, \mathrm{d}s  \, \mathrm{d}t \\
                 &=  Z^{-1} \int_{0}^{\infty} \int_{0}^{\infty} \pi_0(A \cap G_{s,\rhoapp} \cap  G_{t,\widehat{\varrho}}) \, \mathrm{d}s  \, \mathrm{d}t = \pi(A).    
\end{align*}
Here, the third equality follows from the fact that $H_{t,s}$ is $\pi_{0,t,s}$-invariant and the last equality due to (\ref{eq:pi}) from the proof of Proposition \ref{positive:measure:iff}.  
\begin{flushright}    
$\square$
\end{flushright}

\subsection{Proof of Theorem \ref{theo:DAESS}}
\label{app:DAESS}

The on-slice transition kernels of DA-ESS for $x\in G$, $A\in\mathcal{B}(G)$, $s \in (0, \rhoapp(x))$ and $t \in (0, \widehat{\varrho}(x))$ take the form 
\[
   E_{t,s}(x,A):=\int_{\mathbb R^d} E_{x,v,t,s}(A) \, \pi_0(\mathrm{d} v)
\]
where $E_{x,v,t,s}$ represents the transition mechanism for moving to a set $A\in\mathcal{B}(G)$ using the delayed acceptance shrinkage routine in Algorithm \ref{alg:DA-Shrinkage} given  $x, \nu, t$ and $s$. Applying the proof in \citet[Theorem~3.1]{HasenpflugEtAl2023} to the delayed acceptance scheme also yields the $\pi$-reversibility of $E_{t,s}$ in that setting. 

In the following, let $x\in G$ and $A\in\mathcal{B}(G)$ be arbitrary. To prove the $\pi$-almost everywhere ergodicity, we verify that Assumption \ref{statement1}. of Proposition \ref{prop:DAHSS:ergodic} holds.

   Let $\pi(A)>0$, then an immediate consequence of Proposition~\ref{positive:measure:iff} is that there exist some $S_{x,A}\in(0, \rhoapp(x)]$ and $T_{x,A}\in(0, \widehat{\varrho}(x)]$, such that $\pi_0(A\cap G_{s, \varrho_{\textmd{app}}}\cap G_{t, \widehat{\varrho}}) > 0$ for all $s\in(0, S_{x,A})$ and $t\in(0, T_{x,A})$. To show $E_{t,s}(x,A)>0$ for all $s\in(0, S_{x,A})$ and $t\in(0, T_{x,A})$ note that the probability of moving to a set $A\in\mathcal{B}(G)$ after completing all steps in lines 2-5 of Algorithm \ref{alg:DAESS} is greater than or equal to the probability of moving to $A$ after completing only steps 2 and 5 of Algorithm \ref{alg:DAESS}. 
    Hence, this leads, together with Fubini, to   
    \[
       E_{t,s}(x,A)\geq \frac{1}{2\pi}\int_0^{2\pi}\int_{\mathbb R^d} \mathbb{1}_{A\cap G_{s,\varrho_{\textmd{app}}}\cap G_{t,\widehat{\varrho}}}(p(x,v,\theta)) \, \pi_0(\mathrm{d} v) \, \mathrm{d}\theta. 
    \]
    In the following, we show that the right-hand side of the inequality is positive. We define for $\theta_0\in(0,2\pi)\setminus\{\pi\}$  
    \begin{align*}
       T_{x,\theta_0}: \mathbb R^d \rightarrow \mathbb R^d, \quad v \mapsto p(x, v, \theta_0)=\cos(\theta_0) \ x + \sin(\theta_0)\ v.    
    \end{align*}
    The mapping $T_{x,\theta_0}$ is bijective and its preimage is given by 
    \[
       T_{x,\theta_0}^{-1}(A)=\{v\in\mathbb R^d\colon T_{x,\theta_0}(v)\in A\}.
    \]
    Consequently we get for all $\theta_0\in(0,2\pi)\setminus\{\pi\}$, $s\in(0, S_{x,A})$ and $t\in(0, T_{x,A})$   
    \begin{align*}
        \int_{\mathbb R^d} \mathbb{1}_{A\cap G_{s,\varrho_{\textmd{app}}}\cap G_{t,\widehat{\varrho}}}(p(x,v,\theta_0)) \, \pi_0(\mathrm{d} v)&=\pi_0(\{v\in\mathbb R^d\colon p(x, v, \theta_0)\in A\cap G_{s,\varrho_{\textmd{app}}}\cap G_{t,\widehat{\varrho}}\}) \\
        &=\pi_0(\{v\in\mathbb R^d\colon T_{x,\theta_0}(v)\in A\cap G_{s,\varrho_{\textmd{app}}}\cap G_{t,\widehat{\varrho}}\}) \\
        &= \pi_0(T_{x,\theta_0}^{-1}( A\cap G_{s,\varrho_{\textmd{app}}}\cap G_{t,\widehat{\varrho}})) >0,
    \end{align*}
    where the positivity follows from the change-of-variables theorem and the well-known fact, that in finite dimensions, a non-degenerate Gaussian measure is equivalent to the Lebesgue measure. Therefore, we can conclude that for (almost) all $s\in(0, S_{x,A})$ and (almost) all $t\in(0, T_{x,A})$
    \begin{align*}
        E_{t,s}(x,A)&\geq \frac{1}{2\pi}\int_0^{2\pi}\int_{\mathbb R^d} \mathbb{1}_{A\cap G_{s,\varrho_{\textmd{app}}}\cap G_{t,\widehat{\varrho}}}(p(x,v,\theta)) \, \pi_0(\mathrm{d} v) \, \mathrm{d}\theta \\
        &\geq\frac{1}{2\pi}\int_{(0,2\pi)\setminus\{\pi\}}\int_{\mathbb R^d} \mathbb{1}_{A\cap G_{s,\varrho_{\textmd{app}}}\cap G_{t,\widehat{\varrho}}}(p(x,v,\theta_0)) \, \pi_0(\mathrm{d} v) \, \mathrm{d}\theta_0 >0.
    \end{align*}

\begin{flushright}    
$\square$
\end{flushright}

\subsection{Proof of Theorem \ref{theo:DAHRUSS}}
\label{app:DAHRUSS}

The on-slice transition kernels of DA-HRUSS for $x\in G$, $A\in\mathcal{B}(G)$, $s \in (0, \rhoapp(x))$ and $t \in (0, \widehat{\varrho}(x))$ take the form 
\begin{align*}
    R_{t,s}(x,A)&=\int_{\mathbb{S}_{d-1}} \xi_{t,s}(x,v) \, \frac{\lambda_1(L_{A \cap G_{s,\varrho_{\textmd{app}}}\cap G_{t,\widehat{\varrho}}}(x,v))}  {\lambda_1(L_{G_{s,\varrho_{\textmd{app}}}\cap G_{t,\widehat{\varrho}}}(x,v))} \\
    &\qquad+ (1-\xi_{t,s}(x,v)) \sum_{i=1}^2 \mathbb{1}_{G_{s,\varrho_{\textmd{app}}}\cap G^{(i)}_{t,\widehat{\varrho}}}(x) \,  \frac{\lambda_1(L_{A \cap G_{s,\varrho_{\textmd{app}}}\cap G^{(i)}_{t,\widehat{\varrho}}}(x,v))}  {\lambda_1(L_{G_{s,\varrho_{\textmd{app}}}\cap G^{(i)}_{t,\widehat{\varrho}}}(x,v))}  \, \frac{\sigma_d(\mathrm{d}v)}{\kappa_d} 
\end{align*}
with 
\begin{align*}
     \xi_{t,s}(x,v):=\frac{\lambda_1(L_{G_{s,\varrho_{\textmd{app}}}\cap G_{t,\widehat{\varrho}}}(x,v))}{\lambda_1(L_{G_{s,\varrho_{\textmd{app}}}\cap G_{t,\widehat{\varrho}}}(x,v))+\delta_{t,s}(x,v)} 
\end{align*}
and 
\begin{align*}
      \delta_{t,s}(x,v)
  =\inf\bigl\{|z_1 - z_2|\;:\;
    z_1\in L_{G_{s,\varrho_{\mathrm{app}}}\cap G^{(1)}_{t,\widehat{\varrho}}}(x,v),\;
    z_2\in L_{G_{s,\varrho_{\mathrm{app}}}\cap G^{(2)}_{t,\widehat{\varrho}}}(x,v)
  \bigr\}.
\end{align*}

\noindent The $\pi$-reversibility of $R_{t,s}$ is a consequence of the $\pi$-reversibility of the hit-and-run algorithm \cite[Lemma~4.10]{Rudolf2012} and of the stepping-out and shrinkage procedure \cite[Lemma~9]{LatuszynskiRudolf2024}, viewed in the context of the delayed acceptance framework.  
In the following, let $x\in G$ and $A\in\mathcal{B}(G)$ be arbitrary. We examine for ergodicity Assumption \ref{statement1}. and \ref{statement2}. of Proposition \ref{prop:DAHSS:ergodic}. 
\begin{enumerate}

     \item If $\pi(A)>0$, an immediate consequence of Proposition~\ref{positive:measure:iff} is that there exist some $S_{x,A}\in(0, \rhoapp(x)]$ and $T_{x,A}\in(0, \widehat{\varrho}(x)]$, such that $\pi_0(A\cap G_{s, \varrho_{\textmd{app}}}\cap G_{t, \widehat{\varrho}}) > 0$ for all $s\in(0, S_{x,A})$ and $t\in(0, T_{x,A})$. Following the approach in the proof of \cite[Lemma~11]{LatuszynskiRudolf2024} including using again \cite[Theorem~15.13]{Schilling2005}, we receive for (almost) all $s\in(0, S_{x,A})$ and (almost) all $t\in(0, T_{x,A})$
     \begin{align*}
      R_{t,s}(x,A)&\geq \int_{\mathbb{S}_{d-1}} \xi_{t,s}(x,v) \, \frac{\lambda_1(L_{A \cap G_{s,\varrho_{\textmd{app}}}\cap G_{t,\widehat{\varrho}}}(x,v))}  {\lambda_1(L_{G_{s,\varrho_{\textmd{app}}}\cap G_{t,\widehat{\varrho}}}(x,v))}  \, \frac{\sigma_d(\mathrm{d}v)}{\kappa_d} \\ 
      & = \frac{2}{\kappa_d} \int_{A\cap G_{s,\varrho_{\textmd{app}}}\cap G_{t,\widehat{\varrho}}} \frac{\xi_{t,s}(x,\frac{x-y}{\|x-y\|})}{\|x-y\|^{d-1} \, \lambda_1(L_{G_{s,\varrho_{\textmd{app}}}\cap G_{t,\widehat{\varrho}}}(x,\frac{x-y}{\|x-y\|}))}  \, \pi_0(\mathrm{d}y) \\
      &\geq\frac{\pi_0(A\cap G_{s,\varrho_{\textmd{app}}}\cap G_{t,\widehat{\varrho}})}{\kappa_d\cdot \underset{w_1,w_2\in G_{s,\varrho_{\textmd{app}}}}{\sup}\| w_1-w_2 \|^d}>0,
 \end{align*}
where the inequality before the last one is due, in part, to the relationship that
\[
   \lambda_1(L_{G_{s,\varrho_{\textmd{app}}}\cap G_{t,\widehat{\varrho}}}(x,v))+\delta_{t,s}(x,v)\leq \underset{w_1,w_2\in G_{s,\varrho_{\textmd{app}}}}{\sup}\| w_1-w_2 \|. 
\]

    \item If $\pi(A)=0$, then also $\pi_0(A)=0$ by (\ref{PMLR:distribution}). Applying the formula in \citet[Theorem~15.13]{Schilling2005} yields
     \begin{align*}
       \int_{\mathbb{S}_{d-1}} \lambda_1(L_A(x,v)) \, \sigma_d(\mathrm{d}v)  
       & = \int_{\mathbb{S}_{d-1}} \int_{-\infty}^\infty \mathbb{1}_A(x+pv) \, \mathrm{d}p \, \sigma_d(\mathrm{d}v) \\
       & = 2 \int_A \frac{1}{\| x-y \|^{d-1}}\pi_0(\mathrm{d}y)=0.   
     \end{align*}
     Hence, we obtain $\lambda_1(L_A(x,v))=0$ for $\sigma_d$-almost every $v\in\mathbb{S}_{d-1}$, which immediately implies $R_{t,s}(x,A)=0$ for (almost) all $t\in(0,\widehat{\varrho}(x))$ and (almost) all $s\in(0, \rhoapp(x))$.
     
\end{enumerate}

\begin{flushright}    
$\square$
\end{flushright}

\subsection{Proof of Theorem \ref{theo:DAGPSS}}
\label{app:DAGPSS}

The on-slice transition kernels of DA-GPSS for $x= r_0 \, \nu_0\in G$, $A\in\mathcal{B}(G)$, $s \in (0, \rhoapp(x))$ and $t \in (0, \widehat{\varrho}(x))$ take the form $B_{t,s}(r_0 \, \nu_0,A)=B^{(D)}_{t,s}B^{(R)}_{t,s}(r_0 \, \nu_0,A)$ and are the compositions of the transition kernels of the direction update $B^{(D)}_{t,s}$ and the transition kernels of the radius update $B^{(R)}_{t,s}$, which are given as
\[
   B^{(D)}_{t,s}(r_0\, \nu_0, A)=\int_{\mathbb{S}^{v_0}_{d-2}}\frac{\lambda_1(\Theta_{A\cap G_{s,\varrho_{\textmd{app}}}\cap G_{t,\widehat{\varrho}}}(v_0,v_\perp, r_0))}{\lambda_1(\Theta_{G_{s,\varrho_{\textmd{app}}}\cap G_{t,\widehat{\varrho}}}(v_0,v_\perp, r_0))} \, \xi(\mathrm{d}\nu_{\perp}),
\]
where $\xi$ is the uniform distribution on $\mathbb{S}^{v_0}_{d-2}$ and
\[
   B^{(R)}_{t,s}(r_0\, \nu, A)=\frac{\lambda_1(L_{A\cap G_{s,\varrho_{\textmd{app}}}\cap G_{t,\widehat{\varrho}}}(\nu))}{\lambda_1(L_{G_{s,\varrho_{\textmd{app}}}\cap G_{t,\widehat{\varrho}}}(\nu))}.  
\]

\noindent The $\pi$-invariance of $B_{t,s}$ follows directly from \citet[Theorem~3.1]{SchaerEtAl2023} derived in the delayed acceptance setting. We verify Assumption \ref{statement1}. of Proposition \ref{prop:DAHSS:ergodic}.

    Therefore, an important step in \cite{SchaerEtAl2023} obtaining a lower bound for the on-slice transition kernels of plain GPSS was to estimate the transition kernels of the direction update from below using the proof in \cite[Theorem~15]{HabeckEtAl2023}. We proceed analogously and obtain with $\varepsilon>0$ that
    \[
       B^{(D)}_{t,s}(r_0\, \nu_0, \mathrm{d}z) \geq \varepsilon \, \int_{\mathbb{S}_{d-1}} \mathbb{1}_{G_{s,\varrho_{\textmd{app}}}\cap G_{t,\widehat{\varrho}}}(r_0\, \nu) \, \delta_{r_0 \, \nu}(\mathrm{d}z) \, \sigma_d(\mathrm{d}\nu).  
    \]
    
    \noindent Using this and the further approach in the proof of \cite[Theorem~3.2]{SchaerEtAl2023} we get
    \begin{align*}
        B_{t,s}(r_0\, \nu_0, A)&= \int_{ G_{s, \varrho_{\textmd{app}}}\cap G_{t, \widehat{\varrho}}} B^{(R)}_{t,s}(z, A) \, B^{(D)}(r_0\, \nu_0, \mathrm{d}z) \\
                           &\geq \varepsilon \, \int_{\mathbb{S}_{d-1}} \int_{ G_{s, \varrho_{\textmd{app}}}\cap G_{t, \widehat{\varrho}}}  B^{(R)}_{t,s}(z, A) \, \mathbb{1}_{G_{s,\varrho_{\textmd{app}}}\cap G_{t,\widehat{\varrho}}}(r_0\, \nu) \, \delta_{r_0 \, \nu}(\mathrm{d}z) \, \sigma_d(\mathrm{d}\nu) \\
                           &= \varepsilon \, \int_{\mathbb{S}_{d-1}} B^{(R)}_{t,s}(r_0 \, \nu, A) \, \mathbb{1}_{G_{s,\varrho_{\textmd{app}}}\cap G_{t,\widehat{\varrho}}}(r_0\, \nu) \, \sigma_d(\mathrm{d}\nu) \\ 
                           &= \varepsilon \, \int_{\mathbb{S}_{d-1}} \int_{0}^\infty  \frac{\mathbb{1}_{A \cap G_{s,\varrho_{\textmd{app}}}\cap G_{t,\widehat{\varrho}}}(r\,\nu)}{\lambda_1(L_{G_{s,\varrho_{\textmd{app}}}\cap G_{t,\widehat{\varrho}}}(\nu))} \, \mathbb{1}_{G_{s,\varrho_{\textmd{app}}}\cap G_{t,\widehat{\varrho}}}(r_0\,\nu) \, \mathrm{d}r \, \sigma_d(\mathrm{d}\nu) \\
                           &= \varepsilon \, \int_{A \cap G_{s,\varrho_{\textmd{app}}}\cap G_{t,\widehat{\varrho}}} \frac{\mathbb{1}_{G_{s,\varrho_{\textmd{app}}}\cap G_{t,\widehat{\varrho}}}(r_0\frac{y}{\|y\|})}{\|y\|^{d-1} \, \lambda_1(L_{G_{s,\varrho_{\textmd{app}}}\cap G_{t,\widehat{\varrho}}}(\frac{y}{\|y\|}))} \, \mathrm{d}y,  
    \end{align*}
    where the last equality again results from \cite[Theorem~15.13]{Schilling2005}. 
    
    If $\pi(A)>0$, due to Proposition~\ref{positive:measure:iff} there exist $S_{x,A}\in(0, \rhoapp(x)]$ and $T_{x,A}\in(0, \widehat{\varrho}(x)]$, such that $\pi_0(A\cap G_{s, \varrho_{\textmd{app}}}\cap G_{t, \widehat{\varrho}}) > 0$ for all $(s,t)\in(0, S_{x,A})\times(0, T_{x,A})$. Define
    \begin{align*}
        F:=\{y \in A \cap G_{s,\varrho_{\textmd{app}}}\cap G_{t,\widehat{\varrho}} \ \colon \ r_0\frac{y}{\|y\|} \in G_{s,\varrho_{\textmd{app}}}\cap G_{t,\widehat{\varrho}} \} 
    \end{align*}
    for which we have $\lambda_d(F)>0$ for $\lambda_1\otimes\lambda_1$-almost all $(s,t)\in(0, S_{x,A})\times (0, T_{x,A})$ since
    \begin{align*}
        \int^{T_{x,A}}_0 \int^{S_{x,A}}_0 \lambda_d(F) \, \mathrm{d}s \, \mathrm{d}t = &\int_A \Big(\int^{S_{x,A}}_0 \mathbb{1}_{G_{s,\varrho_{\textmd{app}}}}(y) \, \mathbb{1}_{G_{s,\varrho_{\textmd{app}}}}(r_0\frac{y}{\|y\|}) \, \mathrm{d}s\Big) \\ 
        & \qquad \Big(\int^{T_{x,A}}_0 \mathbb{1}_{G_{t, \widehat{\varrho}}}(y) \, \mathbb{1}_{G_{t, \widehat{\varrho}}}(r_0\frac{y}{\|y\|}) \, \mathrm{d}t\Big) \, \mathrm{d}y >0,
    \end{align*}
    where the positivity for the second integral (third integral analogous) follows by 
    \begin{align*}
      \int^{S_{x,A}}_0 \mathbb{1}_{G_{s,\varrho_{\textmd{app}}}}(y) \, \mathbb{1}_{G_{s,\varrho_{\textmd{app}}}}(r_0\frac{y}{\|y\|}) \, \mathrm{d}s = \min(S_{x,A}, \rhoapp(y), \rhoapp(r_0\frac{y}{\|y\|})) >0.     
    \end{align*}
    Furthermore  by using \cite[Lemma~C.4]{SchaerEtAl2023} in our setting, we can conclude 
    \[
    \lambda_1(L_{G_{s,\varrho_{\textmd{app}}}\cap G_{t,\widehat{\varrho}}}(\frac{y}{\|y\|}))<\infty
    \] 
    for $\lambda_1\otimes\lambda_1\otimes \lambda_d$-almost all $(s,t, y)\in(0, S_{x,A})\times (0, T_{x,A}) \times A \cap G_{s,\varrho_{\textmd{app}}}\cap G_{t,\widehat{\varrho}}$.
    
    \noindent Therefore, we receive for (almost) all $s\in(0, S_{x,A})$ and (almost) all $t\in(0, T_{x,A})$  
    \[
    B_{t,s}(r_0\, \nu_0, A)\geq \varepsilon \, \int_{A \cap G_{s,\varrho_{\textmd{app}}}\cap G_{t,\widehat{\varrho}}} \frac{\mathbb{1}_{G_{s,\varrho_{\textmd{app}}}\cap G_{t,\widehat{\varrho}}}(r_0\frac{y}{\|y\|})}{\|y\|^{d-1} \, \lambda_1(L_{G_{s,\varrho_{\textmd{app}}}\cap G_{t,\widehat{\varrho}}}(\frac{y}{\|y\|}))} \, \mathrm{d}y >0. 
    \]

\begin{flushright}    
$\square$
\end{flushright}

\end{appendices}

\bibliography{sn-bibliography}


\end{document}